\documentclass{amsart}
\usepackage[utf8]{inputenc}

\usepackage{amsmath,amssymb,mathrsfs,amsthm}
\allowdisplaybreaks[4] 
\usepackage{mathtools} 
\usepackage{booktabs}
\usepackage[usenames]{color}
\usepackage{color}  
\usepackage{comment}
\usepackage{graphicx} 
\usepackage{tabularx}
\usepackage{supertabular} 
\usepackage{enumerate} 
\usepackage{url}
\usepackage{tikz} 
\usepackage{pgfplots}  
\usepackage{bbm}
\usetikzlibrary{intersections, calc} 
\usetikzlibrary{patterns}  
\usetikzlibrary{arrows,decorations.markings} 
\usetikzlibrary {lindenmayersystems}  
\usepackage{diffcoeff}
\usepackage{mathtools}
\mathtoolsset{showonlyrefs=true}
\usepackage{fontspec}
\usepackage[
  warnings-off={mathtools-colon,mathtools-overbracket}
]{unicode-math}

\newcommand{\xhdr}[1]{\noindent \textup{{\noindent\bfseries #1}:}}

\pgfplotsset{compat=1.18}

\newtheorem{theorem}{Theorem}[section]
\newtheorem{mainthm}{Theorem} 

\newtheorem{lemma}[theorem]{Lemma}

\newtheorem{remark}[theorem]{Remark}
\newtheorem{definition}[theorem]{Definition}
\newtheorem{corollary}[theorem]{Corollary}
\newtheorem{example}[theorem]{Example}

\title[A Leibniz rule of distributional pairing and hyperforce sum rule] 
{A Leibniz rule of distributional pairing and hyperforce sum rule} 

\author[T. MARUYAMA]{Takashi MARUYAMA}
\address{NEC Secure System Platform Research Laboratories, 1753 Shimonumabe, Nakahara-ku, Kawasaki, Kanagawa, Japan}
\email{49takashi@nec.com}

\author[T. SETO]{Tatsuki SETO}
\address{General Education and Research Center, Meiji Pharmaceutical University, 
2-522-1 Noshio, Kiyose-shi, Tokyo, Japan}
\email{tatsukis@my-pharm.ac.jp}

\author[V. Zaverkin]{Viktor Zaverkin}
\address{NEC Laboratories Europe, Kurfürsten-Anlage 36, 69115 Heidelberg, Germany}
\email{viktor.zaverkin@neclab.eu}

\author[H. Christiansen]{Henrik Christiansen}
\address{NEC Laboratories Europe, Kurfürsten-Anlage 36, 69115 Heidelberg, Germany}
\email{henrik.christiansen@neclab.eu}

\keywords{BBGKY hierarchy, Hyperforce sum rule, Leibniz rule, Tempered distribution, Statistical mechanics}

\begin{document}

\begin{abstract} 
We reformulate and generalize the equilibrium hyperforce sum rule, a generalization of the Bogoliubov–Born–Green–Kirkwood–Yvon (BBGKY) hierarchy, by employing the Schwartz space and its dual. We show that the hyperforce sum rule for the Euclidean space and the equilibrium BBGKY hierarchy at arbitrary level are derived through the Leibniz rule of the derivative for the pairing of tempered distributions and Schwartz functions. We also apply the Leibniz rule to obtain the hyperforce sum rule for systems with periodic boundary conditions.
\end{abstract}

\maketitle

\section*{Introduction}  
The purpose of this paper is to give a reformulation and generalization of the equilibrium hyperforce sum rule obtained in references \cite{PhysRevLett.133.217101, 168}, by using the Schwartz space and its dual. 
The hyperforce sum rule is a generalization of the fully-reduced Yvon-Born-Green equation (YBG) or Bogoliubov–Born–Green–Kirkwood–Yvon (BBGKY) hierarchy \cite{hansen2013theory}, one representative class of statistical mechanical sum formulae \cite{henderson2021fundamentals}. Our distributional formulation gives a natural generalization of the original hyperforce sum rule and subsumes the equilibrium BBGKY hierarchy an any level. It also naturally applies to systems with periodic boundary conditions. 

The hyperforce sum rule \cite{PhysRevLett.133.217101, 168} is obtained by employing Noether's theorem \cite{Noether} of variational invariance for the thermal average of observables over the Boltzmann distribution. The prototypical application of the invariant theorem has been studied in a range of scenarios in statistical mechanics \cite{Hermann_2024, Hermann2021-fp, Hermann_2022_force_balance, Hermann2022-os, whyNoether, Robitschko2024-zj, PhysRevLett.130.268203, PhysRevE.107.034109, PhysRevE.106.014115}. 
A key tool that enables the derivation of the hyperforce sum rule is the functional derivative of the thermal average, which is invariant to the action of canonical transformations that preserve the Hamiltonian. 
Specifically, the thermal average is regarded as a constant functional on a space of diffeomorphisms lifted from the configuration space, which crucially relies on the invariance of the thermal average with respect to the action of the canonical transformations over configuration space. 
Since the functional is constant on the diffeomorphisms, the functional derivative of the thermal average vanishes on this space (of diffeomorphisms). This vanishing property is called the hyperforce sum rule. References \cite{PhysRevLett.133.217101, 168} also show that, when the observable is set to be a (non-trivial) constant function, the hyperforce sum rule can restore the fully-reduced BBGKY hierarchy. This pathway contrasts to the standard derivation of the BBGKY hierarchy as a corollary of the Liouville equation \cite{hansen2013theory}. We will review the derivation of the BBGKY hierarchy and hyperforce sum rule in Section \ref{sec:prelim}.

We introduce some notations before giving the overview of the paper's main result. For the sake of brevity, we leave out their detailed definition in this section.
Let $N$ be the number of particles in the Euclidean space $\symbb{R}^{d}$. 
Let $\symfrak{X}_{0}(\symbb{R}^{d})$ be the set of vector fields on $\symbb{R}^{d}$  with compact support. 
Set $\symfrak{X}_{\mathrm{Id},0}(\symbb{R}^{d})$ to be the subspace of $\symfrak{X}_{0}(\symbb{R}^{d})$ 
such that $\mathrm{Id} + \epsilon: \symbb{R}^{d} \rightarrow \symbb{R}^{d}$ is a diffeomorphism, where $\epsilon \in \symfrak{X}_{0}(\symbb{R}^{d})$. 
Let $\symscr{S}(\symbb{R}^{dN \times 2})$ be the set of Schwartz functions (i.e., roughly, rapidly decaying smooth functions) and $\symscr{S}'(\symbb{R}^{dN \times 2})$ its dual space, that is the set of linear functionals, which are called tempered distributions. In other words, by definition, when both $\phi \in \symscr{S}(\symbb{R}^{dN \times 2})$ and $u \in \symscr{S}'(\symbb{R}^{dN \times 2})$ are given, we obtain a scalar $u(\phi) \in \symbb{C}$.  
Conventionally, $u(\phi)$ is written as the distributional pairing notation $\langle u , \phi \rangle$. 
It is well-known that continuous functions on $\symbb{R}^{dN \times 2}$ give rise to important examples of tempered distributions by assigning the integral: 
for a ``good'' function $f$ on $\symbb{R}^{dN \times 2}$, we obtain a tempered distribution $u_{f}$ by 
\begin{equation}
\langle u_{f} , \phi \rangle 
= \int_{\symbb{R}^{dN \times 2}} f\phi \, d\symbfit{r}_{1} \cdots d\symbfit{r}_{N} d\symbfit{p}_{1} \cdots d\symbfit{p}_{N}  
\end{equation}
for $\phi \in \symscr{S}(\symbb{R}^{dN \times 2})$. 
See also Appendix \ref{app:distribution} for the details of Schwartz functions and theory of tempered distributions.

The starting point of our work is an observation that the thermal average is an instance of the distributional pairing $\langle u , \phi \rangle$. 
Let $\hat{A}$ be the observable and $e^{-\beta H}$ the Boltzmann distribution\footnote{In physics literature, the Boltzmann distribution is sometimes considered as an exponentially decreasing function defined on the spectrum. In the present paper, we instead always consider that the Boltzmann distribution is a Schwartz function defined on the phase space $\symbb{R}^{dN \times 2}$.}. 
Then the thermal average of $\hat{A}$ is defined by 
\begin{equation}
\langle \hat{A} \rangle = \dfrac{1}{Z}\int_{\symbb{R}^{dN \times 2}} \hat{A} e^{-\beta H} \, d\symbfit{r}_{1} \cdots d\symbfit{r}_{N} d\symbfit{p}_{1} \cdots d\symbfit{p}_{N}, 
\end{equation}
where $Z$ is the partition function (or the normalizing constant) of $e^{-\beta H}$.
By the definition of the pairing notation, we see the following equation holds immediately 
\begin{equation}
\langle \hat{A} \rangle 
= \langle u_{\hat{A}} , e^{-\beta H}/Z \rangle . 
\end{equation} 

Similarly, the invariance of the thermal average by canonical transformations may be also understood through the distributional pairing. Indeed, the invariance of the thermal average translates to the ``pullback'' of tempered distributions. First, we consider the following diffeomorphism between the phase space $\symbb{R}^{dN \times 2}$ induced by $\epsilon \in \mathfrak{X}_{\operatorname{Id}, 0}(\mathbb{R}^{d})$: 
\begin{align}
\epsilon_{\sharp}&(\symbfit{r}_{1}, \dots , \symbfit{r}_{N}, \symbfit{p}_{1}, \dots , \symbfit{p}_{N}) \\ 
&= 
((\mathrm{Id} + \epsilon)(\symbfit{r}_{1}), \dots , (\mathrm{Id} + \epsilon)(\symbfit{r}_{N}), 
(\symbb{1} + \nabla\epsilon(\symbfit{r}_{1}))^{-1}\symbfit{p}_{1}, \dots , (\symbb{1} + \nabla\epsilon(\symbfit{r}_{N}))^{-1}\symbfit{p}_{N}). 
\end{align} 
For a tempered distribution $u \in \symscr{S}'(\symbb{R}^{dN \times 2})$, we define the pullback $\epsilon_{\sharp}^{\ast}u$ of $u$ by $\epsilon_{\sharp}$ as  
\begin{equation}
\hspace{15mm}\langle \epsilon_{\sharp}^{\ast}u , \phi \rangle 
= \langle u , \phi \circ \epsilon_{\sharp}^{-1} \rangle, \quad\phi \in \symscr{S}(\symbb{R}^{dN \times 2}).
\end{equation} 
By the definition of the pullback, the functional 
\[
\mathfrak{X}_{\operatorname{Id},0}(\symbb R^{d}) \rightarrow \symbb{C}, \quad 
\epsilon \mapsto \langle \epsilon_{\sharp}^{*}u , \phi \circ \epsilon_{\sharp} \rangle 
\]
is constant on $\mathfrak{X}_{\operatorname{Id},0}(\symbb R^{d})$, which is a distributional analogue of the invariance of the original thermal average by canonical transformations. As a result, its directional derivative, given as 
\[
F[u, \phi]: \mathfrak{X}_{0}(\symbb R^{d}) \rightarrow \symbb{C}, 
\quad 
\epsilon \mapsto \left. \diff{}{t} \right|_{t=0} \left\langle (t\epsilon)_{\sharp}^{*}u , \phi \circ (t\epsilon_{\sharp})\right\rangle, 
\] 
along any $\epsilon \in \mathfrak{X}_{0}(\mathbb{R}^{d})$ turns out to vanish. We will refer to this vanishing property as \textit{the equilibrium distributional hyperforce sum rule} for $u \in \symscr{S}^{\prime}(\symbb{R}^{dN\times2})$ and $\phi \in \symscr{S}(\mathbb{R}^{dN \times 2})$.

After introducing the necessary preliminaries, we show that there is another way to derive the distributional hyperforce sum rule. This route, which naively may look like a futile effort, alluringly makes it possible to derive a prototypical expression of the hyperforce sum rule. A tool that makes this derivation possible is the Leibniz rule of the pairing of tempered distributions, as we show as Theorem \ref{thm:product_rule} in Appendix \ref{app:distribution}, and rapidly decreasing functions:
\begin{equation}
\left. \diff{}{t} \right|_{t=0} \left\langle u_{t} , \phi_{t} \right\rangle 
= \left\langle \left. \diff{u_{t}}{t} \right|_{t=0} , \phi \right\rangle + \left\langle u, \left. \diff{\phi_{t}}{t} \right|_{t=0}\right\rangle,  
\end{equation} 
where we denote $u_{t} = (t\epsilon)_{\sharp}^{*}u$ and $\phi_{t} = \phi \circ (t\epsilon)_{\sharp}$. 
Applying the Leibniz rule to $F[u,\phi]$ yields two mutually vanishing terms, which serve as the main components of the distributional hyperforce sum rule. The two-term expression further yields a shifting operator $D_{i}$, a conceptual counterpart of the infinitesimal shifting operator $\sigma_{i}(\symbfit{r})$ introduced by \cite{PhysRevLett.133.217101, 168}.

The hitherto discussion can generalize to the $n$-body reduced thermal average, where the case $n=0$ corresponds to the discussion above.
Our main theorem in the general setting is as follows. 

\begin{mainthm}[Theorem \ref{thm:hyperforce_sum_rewrite} and Corollary \ref{cor:hyperforcesum}] \label{thm:first_main_result}
Let $N \in \symbb{N}$ and $0 \leq n \leq N-1$. For a tempered distribution $u \in \symscr{S}^{\prime}(\symbb{R}^{d(N-n)\times2})$ and a Schwartz function $\phi \in \symscr{S}(\symbb{R}^{dN\times2})$, we define the \textup{localized hyperforce of $u$ and $\phi$ centered at $i$} to be a continuous-function-valued map 
$F^{(n)}_{i}[u, \phi]: \mathfrak{X}_{0}(\mathbb{R}^{d}) \rightarrow C\left(\symbb{R}^{dn\times2}\right)$ as: 
\begin{align}
F_{i}^{(n)}[u,\phi][\epsilon] = \left\langle K_{n}[D_{i}(\epsilon)u] , \phi \right\rangle + \left\langle K_{n}[u] , D_{i}(\epsilon)\phi \right\rangle. 
\end{align}
Then, the sum of the localized hyperforce coincides with the equilibrium distributional hyperforce sum $F^{(n)}[u, \phi]$ and vanishes, i.e., 
\begin{align}
    F^{(n)}[u, \phi] = \sum_{i=n+1}^{N}F_{i}^{(n)}[u, \phi] = 0.
\end{align}
\end{mainthm}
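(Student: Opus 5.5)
The proof will run in three steps: show that the reduced hyperforce sum vanishes by invariance, expand it with the Leibniz rule (Theorem \ref{thm:product_rule}), and read off the result particle by particle. Throughout, $K_n$ is the partial-pairing operator. For $\psi\in\symscr{S}(\symbb{R}^{dN\times2})$ and $v\in\symscr{S}'(\symbb{R}^{d(N-n)\times2})$, $K_n[v]$ pairs $v$ against $\psi$ in the last $N-n$ particle variables, with the first $n$ phase-space variables left free, so that $\langle K_n[v],\psi\rangle\in C(\symbb{R}^{dn\times2})$. The flow $(t\epsilon)_\sharp$ acts diagonally on each particle. It therefore splits as a product of the flow on the first $n$ particles and the flow on the last $N-n$ particles. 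Let $(t\epsilon)_\sharp^{(N-n)}$ denote the flow acting on the last $N-n$ particles only. I define
\[
F^{(n)}[u,\phi][\epsilon]=\left.\diff{}{t}\right|_{t=0}\left\langle K_n\bigl[((t\epsilon)_\sharp^{(N-n)})^{*}u\bigr],\phi\circ(t\epsilon)_\sharp^{(N-n)}\right\rangle .
\]
Here the pullback and the composition act only on the integrated variables.

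\textbf{Step 1 (vanishing).} For each fixed value of the free variables, the pairing inside the derivative is a pairing of the form $\langle\eta^*u,\psi\circ\eta\rangle$, with $\psi$ a Schwartz function of the last $N-n$ particles. By the definition of the pullback, $\langle\eta^*u,\psi\circ\eta\rangle=\langle u,\psi\rangle$. Hence the expression is independent of $t$, and $F^{(n)}[u,\phi]=0$ pointwise in the free variables. Before using this I must check one thing: $\phi\circ(t\epsilon)_\sharp^{(N-n)}$ is again Schwartz. This holds because $\epsilon$ has compact support. Outside a compact set in position space the map is the identity, and on that compact set the momentum map is linear with bounded inverse, so rapid decay is preserved.

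\textbf{Step 2 (Leibniz expansion).} Apply Theorem \ref{thm:product_rule} to the curves $u_t=((t\epsilon)^{(N-n)}_\sharp)^*u$ and $\phi_t=\phi\circ(t\epsilon)^{(N-n)}_\sharp$. For this I must establish two differentiability facts. First, $t\mapsto\phi_t$ is differentiable in the Schwartz topology, with derivative $D(\epsilon)\phi$. Here $D(\epsilon)=\sum_{i=n+1}^N D_i(\epsilon)$ and
\[
D_i(\epsilon)=\epsilon(\symbfit r_i)\cdot\nabla_{\symbfit r_i}-(\nabla\epsilon(\symbfit r_i))\symbfit p_i\cdot\nabla_{\symbfit p_i}.
\]
The sign and transpose in the momentum term are to be matched to the paper's definition of $D_i$. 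Second, $t\mapsto u_t$ is weak-* differentiable. Its derivative is $D(\epsilon)u$, defined by duality through the derivative of $\langle u,\phi\circ(t\epsilon)_\sharp^{-1}\rangle$. I then check that $K_n$ commutes with these $t$-derivatives, using the continuity of $K_n$ in the Schwartz seminorms. Together these give
\[
F^{(n)}[u,\phi]=\langle K_n[D(\epsilon)u],\phi\rangle+\langle K_n[u],D(\epsilon)\phi\rangle .
\]

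\textbf{Step 3 (localization).} Since $D(\epsilon)$ is a sum of the $D_i(\epsilon)$, and each pairing is linear, the last expression splits as $\sum_{i=n+1}^N F_i^{(n)}[u,\phi][\epsilon]$. Combined with Step 1, this gives the stated identity. The sum runs only over $i\ge n+1$ because the free variables are not transformed.

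The main obstacle is Step 2. I must verify the hypotheses of Theorem \ref{thm:product_rule} in this setting: the difference quotients of $\phi_t$ must converge in every Schwartz seminorm, uniformly enough that the convergence survives the partial pairing $K_n$. The polynomial growth of the momentum factor $\symbfit p_i$ in $D_i$ is harmless because $\phi$ is Schwartz. The compact support of $\epsilon$ localizes the positions, so Taylor's theorem with a remainder, uniform on the support of $\epsilon$, should give the convergence. I would handle this first for $n=0$ and then treat the free variables as parameters, checking continuity of the resulting function in them.
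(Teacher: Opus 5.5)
Your proposal is correct and is essentially the paper's argument. Invariance, $\langle \eta^{*}u,\psi\circ\eta\rangle=\langle u,\psi\rangle$, gives $F^{(n)}=0$ (Lemma~\ref{lem:vanishing_gateaux}); the Leibniz rule (Theorem~\ref{thm:product_rule}) turns the $t$-derivative into $\langle K_n[D(\epsilon)u],\phi\rangle+\langle K_n[u],D(\epsilon)\phi\rangle$; and linearity splits this over $i=n+1,\dots,N$. Your duality definition of $D(\epsilon)u$ agrees with the paper's $\langle D_i(\epsilon)u,\psi\rangle=-\langle u,D_i(\epsilon)\psi\rangle$. The only difference is that you plan to actually verify the product rule's differentiability hypotheses and to work pointwise in the free variables, whereas the paper takes these for granted when it treats $K_n[u]$ as a $C(\mathbb{R}^{dn\times 2})$-valued distribution.
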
 
\noindent

Our distributional interpretation of the hyperforce sum rule offers a couple of advantages: First, our formulation naturally subsumes the equilibrium BBGKY hierarchy at any level, as well as the original hyperforce sum rule. This is made possible since our formulation is based on Banach space-valued distributions, which allows to include the space of continuous functions $C(\symbb{R}^{dn \times 2})$, giving a holistic view to the BBGKY hierarchy and the hyperforce sum rule. Furthermore, our formulation is naturally applicable to systems with periodic boundary conditions, whose results are shown in Section \ref{sec:compact}. Table \ref{tab:comparison} summarizes the correspondence between the notions introduced in our work and those in \cite{PhysRevLett.133.217101, 168}.

\begin{table}[h!]
\centering
\caption{Correspondence of notions between our work and the relevant works.}
\label{tab:comparison}

\resizebox{0.7\textwidth}{!}{
\begin{tabular}{c|c} 
\toprule
\textbf{Our work} & \cite{PhysRevLett.133.217101, 168} \\ 
\midrule
Schwartz function & Boltzmann factor \\ 
Tempered distribution & Observable \\ 
Lift of diffeomorphism & Canonical transformation\\
Leibniz rule  & Functional derivative\\
Distributional hyperforce sum rule & Hyperforce sum rule\\
\bottomrule
\end{tabular}
}
\end{table}

After deriving Theorem \ref{thm:first_main_result}, we show that it yields the BBGKY hierarchies and the original hyperforce sum rule. We begin the derivation by clarifying a condition on Hamiltonian functions that is necessary for the corresponding Boltzmann distributions to belong to the Schwartz space. One typical class of such Hamiltonian functions is the many-body Hamiltonian 
\[
H(\symbfit{r}^N, \symbfit{p}^N) = \sum_{i=1}^{N}\frac{\symbfit{p}_{i}^{\operatorname{T}} \symbfit{p}_{i}}{2m_{i}} + u_{N}(\symbfit{r}^N) + \sum_{i=1}^{N}u^\textup{ext}(\symbfit{r}_i),
\]
where $m_{i}$ is the mass of the particle $i$.  Being rapidly decreasing is essential to allow such class of Hamiltonians to include physically important class of potentials that involve both harmonic potential and repulsive potential functions, such as the Lennard-Jones potential \cite{lennard_jones_second, lennard_jones_first} and Coulomb potential \cite{halliday2013fundamentals}.

We show that when the tempered distribution is realized as the integration of a function, the localized hyperforce $F_{i}^{(n)}$ admits a functional representation of the integral type for compactly supported vector fields $\epsilon \in \mathfrak{X}_{0}(\mathbb{R}^{d})$. 

\begin{mainthm}[Lemma \ref{cor:general_derivation_hyperforce_sum_rule} and Theorem \ref{thm:generalized_BBGKY}]
Let $\beta = \frac{1}{k_\mathrm{B}T}$ be the inverse temperature, where $k_\mathrm{B}$ denotes the Boltzmann constant and $T$ the absolute temperature. Assume $u_{N}$ and $u^{\textup{ext}}$ are chosen such that $e^{-\beta H(\symbfit{r}^N, \symbfit{p}^N)} \in \symscr{S}(\symbb{R}^{dN \times 2})$ and $f$ is a tempered $C^{1}$ function on $\symbb{R}^{dN \times 2}$. 
Then, there exists a $C(\symbb{R}^{dn \times 2})^{d}$-valued functions $G_{i}^{(n)}[f,H]$ on $\symbb{R}^{d}$ for all $i \in \{n+1,\dots, N\}$ such that 
\begin{equation}
F_{i}^{(n)}[u_{f_{n}}, e^{-\beta H(\symbfit{r}^N, \symbfit{p}^N)}][\epsilon] 
= \int_{\symbb{R}^{d}}\epsilon(\symbfit{r}_{i})^{\operatorname{T}} G_{i}^{(n)}[f, H](\symbfit{r}_{i}) d\symbfit{r}_{i}. 
\end{equation} 
Furthermore, $G_{i}^{(n)}[f,H]$ vanishes on $\symbb{R}^{d}$.
\end{mainthm}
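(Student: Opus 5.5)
The plan is to make $F_{i}^{(n)}$ explicit by integration by parts, read off the integrand as $G_{i}^{(n)}$, and then derive the vanishing of $G_{i}^{(n)}$ from the vanishing of the total sum in Theorem \ref{thm:first_main_result}, localized in the variable $\symbfit{r}_{i}$.

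\textbf{Step 1: explicit form of $D_{i}(\epsilon)$.} I first write $D_{i}(\epsilon)$ on functions as the infinitesimal generator of the lift $\epsilon_{\sharp}$ acting on the $i$-th particle. Differentiating $\phi\circ(t\epsilon)_{\sharp}$ at $t=0$ and using $(\symbb{1}+t\nabla\epsilon)^{-1}=\symbb{1}-t\nabla\epsilon+O(t^{2})$ gives
\[
D_{i}(\epsilon)\phi = \epsilon(\symbfit{r}_{i})\cdot\nabla_{\symbfit{r}_{i}}\phi - \big(\nabla\epsilon(\symbfit{r}_{i})\symbfit{p}_{i}\big)\cdot\nabla_{\symbfit{p}_{i}}\phi .
\]
On $u_{f}$ I use the pullback definition. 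Since the lift is canonical, its Jacobian is $1$, so $D_{i}(\epsilon)u_{f}=-u_{D_{i}(\epsilon)f}$. Here I need $f\in C^{1}$ and $f$ tempered, so that $D_{i}(\epsilon)f$ is again tempered; this holds because $\epsilon$ has compact support.

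\textbf{Step 2: the two terms of $F_{i}^{(n)}$.} With $\phi=e^{-\beta H}$ we have $D_{i}(\epsilon)\phi=-\beta(D_{i}(\epsilon)H)\,e^{-\beta H}$. Both terms of $F_{i}^{(n)}$ are then integrals over the $N-n$ integrated particles, with the $n$ reduced coordinates left free, which yields the $C(\symbb{R}^{dn\times2})$-valued result. The terms are
\[
-\int (D_{i}(\epsilon)f)\,e^{-\beta H} \qquad\text{and}\qquad -\beta\int f\,(D_{i}(\epsilon)H)\,e^{-\beta H}.
\]

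\textbf{Step 3: moving all derivatives off $\epsilon$.} The momentum part contains $\partial_{a}\epsilon_{b}(\symbfit{r}_{i})$. I integrate by parts in $\symbfit{r}_{i}$ to shift this derivative onto the rest of the integrand. Boundary terms vanish because $\epsilon$ has compact support and $fe^{-\beta H}$ is integrable with enough decay in the momenta. After this, every contribution has the form $\epsilon(\symbfit{r}_{i})^{\mathrm{T}}(\cdots)$. Integrating out all variables except $\symbfit{r}_{i}$ and the $n$ reduced ones (Fubini) defines $G_{i}^{(n)}[f,H](\symbfit{r}_{i})$. It is, up to sign, the hyperforce density
\[
\int \Big[\nabla_{\symbfit{r}_{i}}f - \beta f\nabla_{\symbfit{r}_{i}}H + \nabla_{\symbfit{r}_{i}}\cdot\big(\symbfit{p}_{i}\otimes\nabla_{\symbfit{p}_{i}}(\cdot)\big)\text{-type terms}\Big]e^{-\beta H},
\]
and this is the identity of Lemma \ref{cor:general_derivation_hyperforce_sum_rule}.

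\textbf{Step 4: vanishing.} Integrating the $\symbfit{r}_{i}$-gradient terms by parts in $\symbfit{r}_{i}$ cancels $\nabla_{\symbfit{r}_{i}}f\,e^{-\beta H}$ against $-\beta f\nabla_{\symbfit{r}_{i}}H\,e^{-\beta H}$, since $\nabla_{\symbfit{r}_{i}}(fe^{-\beta H})$ is exactly their sum. The momentum terms vanish by integration by parts in $\symbfit{p}_{i}$. This is legitimate only if $\symbfit{r}_{i}$ is not held fixed, so I argue differently.
\begin{itemize}
\item By Theorem \ref{thm:first_main_result}, the sum over $i$ vanishes for every $\epsilon$.
\item By symmetry of the integrated particles, or alternatively by the same cancellation applied particlewise, each $F_{i}^{(n)}[\epsilon]=\int\epsilon^{\mathrm{T}}G_{i}^{(n)}=0$ for all compactly supported smooth $\epsilon$.
\item The fundamental lemma of calculus of variations, applied to the continuous $G_{i}^{(n)}$, then gives $G_{i}^{(n)}\equiv0$.
\end{itemize}

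\textbf{Main obstacle.} The hard part is isolating each single $i$ instead of only the sum. If particles are not symmetric, I would instead show directly that the momentum-derivative terms reduce to $\nabla_{\symbfit{r}_{i}}$ of a $\symbfit{p}_{i}$-integral. In that integral $\symbfit{p}_{i}\cdot\nabla_{\symbfit{p}_{i}}$ integrates by parts to a constant. This produces a kinetic-stress term that exactly matches the $\symbfit{r}_{i}$-gradient, so the pointwise identity $G_{i}^{(n)}(\symbfit{r}_{i})=0$ is an exact divergence cancellation. Justifying differentiation under the integral and the decay (temperedness of $f$ against the Schwartz decay of $e^{-\beta H}$) is the routine analytic part.
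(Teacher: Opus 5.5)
Your Steps 1--3 are, in substance, the paper's Lemma~\ref{cor:general_derivation_hyperforce_sum_rule}: you write $D_i(\epsilon)$ explicitly, integrate the $\nabla\epsilon$-terms by parts in $\mathbf{r}_i$, and read off $G_i^{(n)}=\sum_k G_{i,k}^{(n)}$. You differ in the vanishing step. The paper takes $F_i^{(n)}[\epsilon]=0$ for each $i$ from Corollary~\ref{cor:hyperforcesum}, then applies the fundamental lemma of the calculus of variations.

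What you call the main obstacle is immediate in that framework. $D_i(\epsilon)$ is extended to distributions by $\langle D_i(\epsilon)u,\psi\rangle=-\langle u,D_i(\epsilon)\psi\rangle$, and for $i\ge n+1$ it commutes with freezing $(\mathbf{r}^n,\mathbf{p}^n)$. So each $F_i^{(n)}$ is a term plus its negative. For $u_{f_n}$ this reads $\int D_i(\epsilon)(fe^{-\beta H})=0$, because the field $(\epsilon(\mathbf{r}_i),-\nabla\epsilon(\mathbf{r}_i)\mathbf{p}_i)$ is divergence free. Drop the symmetry option: no permutation symmetry of $f$ or $H$ is assumed, and vanishing of $\sum_i F_i^{(n)}$ alone does not give vanishing of each summand.

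Your closing paragraph, by contrast, is a correct and more elementary route. Write $\psi=fe^{-\beta H}$. The components of $G_i^{(n)}(\mathbf{r}_i)$ are the integrals of $\partial_{r_i^a}\psi+\sum_b p_i^b\,\partial_{r_i^b}\partial_{p_i^a}\psi$ over the remaining variables. Integrating by parts in $\mathbf{p}_i$ gives $\int p_i^b\,\partial_{p_i^a}(\partial_{r_i^b}\psi)\,d\mathbf{p}_i=-\delta_{ab}\int\partial_{r_i^b}\psi\,d\mathbf{p}_i$. This is legitimate with $\mathbf{r}_i$ held fixed, contrary to your remark in Step~4. Hence $G_i^{(n)}(\mathbf{r}_i)=0$ pointwise. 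You need neither Theorem~\ref{thm:first_main_result}, nor the fundamental lemma, nor the specific form of the kinetic term.

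Also, the two groups of terms do not vanish separately as your Step~4 suggests: the momentum terms become exactly minus the gradient terms. The paper's route buys an interpretation of the cancellation as the derivative of an invariant functional. Yours buys a self-contained pointwise identity.

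Fix the sign in Step~1. Let $D=\sum_i D_i(\epsilon)$. Under the pullback convention, the $t$-derivative at $0$ of the pullback of $u_f$ pairs with $\phi$ as $-\int f\,D\phi=\int (Df)\,\phi$. This holds because $D_i(\epsilon)$ is antisymmetric for Lebesgue measure, which is exactly the Jacobian-$1$ property you cite. Hence $D_i(\epsilon)u_f=+u_{D_i(\epsilon)f}$, as in the paper.

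With your sign, the two terms of Step~2 add up to $-2\int (D_i(\epsilon)f)\,e^{-\beta H}$, which is generally nonzero. Your Step~3 integrand $\nabla_{\mathbf{r}_i}f-\beta f\nabla_{\mathbf{r}_i}H$ already silently uses the correct sign.

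A smaller point concerns regularity. Compact support of $\epsilon$ only localizes $\mathbf{r}_i$, so temperedness of $f$ does not make $D_i(\epsilon)f$ tempered. Step~3 also produces the mixed derivative $\partial_{\mathbf{r}_i}\partial_{\mathbf{p}_i}f$. You therefore need, as the paper tacitly does, $f\in C^2$ with tempered first and second derivatives.
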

 
Consequently, the vanishing result of $G_{i}^{(n)}[f,H]$ turns out to include the BBGKY hierarchy and the hyperforce sum rule as its instances, which we show in Corollary \ref{cor:derivation_hyperforce_sum_rule} and Corollary \ref{cor:generalized_168}. 

\begin{mainthm} [Corollary \ref{cor:derivation_hyperforce_sum_rule} and Corollary \ref{cor:generalized_168}]
We assume $\displaystyle u_{N}(\symbfit{r}^{N}) = \sum_{1 \leq i < j \leq N} u(\symbfit{r}_i, \symbfit{r}_j)$ with $u$ being symmetric. 
Then, for any $n = 1,2,\dots , N-1$ and $k = 1,2,\dots, n$, we have 
\begin{align}
&\left( \frac{\symbfit{p}_{k}}{m_{k}} \cdot \nabla_{\symbfit{r}_{k}}  - \left(\nabla_{\symbfit{r}_{k}} u^{\text{ext}} (\symbfit{r}_{k}) + \sum_{\substack{j=1 \\ j \neq k}}^{n} \left(\nabla_{\symbfit{r}_{k}} u (\symbfit{r}_{k}, \symbfit{r}_{j}) \right) \right) \cdot \nabla_{\symbfit{p}_{k}} \right) \phi^{[n]} \\ 
&= \int_{\symbb{R}^{2d}}  \left(\nabla_{\symbfit{r}_{k}} u(\symbfit{r}_{k}, \symbfit{r}_{n+1}) \right) \cdot \left(\nabla_{\symbfit{p}_{k}} \phi^{[n+1]}\right) d\symbfit{r}_{n+1} d\symbfit{p}_{n+1}. 
\end{align}
Here, $\phi^{[n]}$ is the reduced phase-space distribution function for $\phi$ defined in Section \ref{sec:bbgky}. 
Furthermore, the vanishing of a map $G^{(n)}[f, H]: \symbb{R}^{d} \rightarrow C(\symbb{R}^{dn \times 2})^{d}$ defined as 
\[
G^{(n)}[f, H](\symbfit{r}) = \sum_{i=1}^{N} G^{(n)}_{i}[f, H](\symbfit{r})
\]
coincides with the original hyperforce sum rule 
when $n=0$. 
\end{mainthm}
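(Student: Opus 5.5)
The strategy is to read both assertions off the vanishing of $G_i^{(n)}[f,H]$ (Lemma \ref{cor:general_derivation_hyperforce_sum_rule} and Theorem \ref{thm:generalized_BBGKY}) by specializing $f$ and $\phi=e^{-\beta H}/Z$ and then computing $G_i^{(n)}$ explicitly for the pair-potential Hamiltonian.

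\textbf{BBGKY identity.} Take $f\equiv 1$, so that $u_{f_n}$ is the constant distribution and the term $\langle K_n[D_i(\epsilon)u],\phi\rangle$ reduces to a pure divergence (transport) contribution. The remaining term $\langle K_n[u],D_i(\epsilon)\phi\rangle$ is computed by differentiating $\phi\circ(t\epsilon)_\sharp$ at $t=0$. Since $(\symbb{1}+t\nabla\epsilon)^{-1}=\symbb{1}-t\nabla\epsilon+O(t^2)$, this derivative is
\[
\epsilon(\symbfit{r}_i)\cdot\nabla_{\symbfit{r}_i}\phi-(\nabla\epsilon(\symbfit{r}_i)\symbfit{p}_i)\cdot\nabla_{\symbfit{p}_i}\phi .
\]
Integrating over the unreduced variables and integrating by parts in $\symbfit{r}_i$ and $\symbfit{p}_i$ (boundary terms vanish because $\phi$ is Schwartz) moves all derivatives off $\epsilon$. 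The result has the form $\int\epsilon^{\operatorname{T}}G_i^{(n)}$, and the vanishing of $G_i^{(n)}$ gives a pointwise identity. Inserting $\phi=e^{-\beta H}/Z$ with $H$ as above, $\nabla_{\symbfit{r}_i}\phi=-\beta\nabla_{\symbfit{r}_i}H\,\phi$ and $\nabla_{\symbfit{p}_i}\phi=-\beta(\symbfit{p}_i/m_i)\phi$. The force $\nabla_{\symbfit{r}_k}H$ then splits into the external term, the pair terms with $j\le n$, and the pair terms with $j>n$.

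To get the stated equation for $k\le n$, apply the identity at level $n-1$ with centre $i=k$ after relabelling particles, or equivalently use the symmetric structure. The pair terms with $j>n$ are all equal by exchangeability of the integrated particles, which follows from $u$ being symmetric and the particles being identical. They therefore combine into one integral against $\phi^{[n+1]}$, using the normalization of $\phi^{[n]}$ from Section \ref{sec:bbgky} (the factor $N-n$ is absorbed there). Finally, trading $\symbfit{p}_k\cdot\nabla_{\symbfit{r}_k}$ against the $\symbfit{p}$-derivatives of the Gaussian factor yields the Liouville-type left-hand side.

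\textbf{Hyperforce sum rule for $n=0$.} Here $C(\symbb{R}^{0})=\symbb{C}$, and $\sum_i G_i^{(0)}(\symbfit{r})$ is obtained by replacing $\epsilon(\symbfit{r}_i)$ with a delta at $\symbfit{r}$. This yields $\langle\sum_i\delta(\symbfit{r}-\symbfit{r}_i)(\nabla_{\symbfit{r}_i}f-\beta f\nabla_{\symbfit{r}_i}H)\rangle$ plus kinetic terms of the form $\nabla\cdot\langle\sum_i\delta(\cdot)\,\symbfit{p}_i\otimes\nabla_{\symbfit{p}_i}f\rangle$. This is exactly the hyperforce density of \cite{PhysRevLett.133.217101,168}, with $\sigma_i(\symbfit{r})$ matching $D_i$.

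\textbf{Main obstacle.} The hardest part is the bookkeeping in the BBGKY case: choosing the correct $f$ and centre $i$ so that the $j>n$ pair forces collapse onto $\phi^{[n+1]}$ with the right combinatorial constants, and justifying the integrations by parts for the singular Lennard-Jones or Coulomb pair potentials. For the latter I would rely on $e^{-\beta H}\in\symscr{S}$ vanishing to all orders at the collision set.
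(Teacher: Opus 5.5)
Your plan is essentially the paper's proof. With $f\equiv 1$ only $G^{(n)}_{i,3}+G^{(n)}_{i,4}$ survive. The $\symbfit{p}_i$-integration by parts turns the kinetic piece into $-\nabla_{\symbfit{r}_i}$ of the reduced density, while the other piece is expanded via $\nabla_{\symbfit{r}_i}\phi=-\beta(\nabla_{\symbfit{r}_i}H)\phi$. The $j>n$ pair terms collapse by symmetry onto $\phi^{[n+1]}$, and dotting with $\beta\symbfit{p}_k/m_k$ gives the Liouville form. The paper works at level $n$ with centre $n+1$ and then shifts $n+1\to n$, which is your level-$(n-1)$ relabelling. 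For $n=0$ it identifies $\sum_i(G^{(0)}_{i,1}+G^{(0)}_{i,2})$ with $\langle\sigma(\symbfit{r})\hat{A}\rangle_0$ and $\sum_i(G^{(0)}_{i,3}+G^{(0)}_{i,4})$ with $\beta\langle\hat{\mathbf{F}}(\symbfit{r})\hat{A}\rangle_0$ through $\delta_{(\symbfit{r}_i=\symbfit{r})}$, as you propose.

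The only slip is that your list of $n=0$ terms omits the kinetic-stress contribution $-\beta\nabla\cdot\langle\sum_i\delta(\symbfit{r}-\symbfit{r}_i)\,\symbfit{p}_i\symbfit{p}_i^{\operatorname{T}}\hat{A}/m_i\rangle_0$ from $G^{(0)}_{i,4}$, which is what supplies the first term of $\hat{\mathbf{F}}(\symbfit{r})$.
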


\section*{Acknowledgements} 
T.M. is deeply grateful to Federico Errica and Francesco Alesiani for useful suggestions and continuous encouragement. T.M. gratefully acknowledges the support of NEC Laboratories Europe during his expatriation at the laboratory.

\section{Preliminary and assumption}
\label{sec:prelim}
We recall in this section one standard derivation of Bogolyubov–Born–Green–Kirkwood–Yvon (BBGKY) hierarchy, based on \cite{hansen2013theory}, and its generalization referred to as hyperforce sum rule derived in \cite{168}. This section generally follows a conventional physics terminology and symbols. Throughout this paper, we consider that any system consists of finitely many multiple particles and is in equilibrium. Therefore, any variables introduced in the rest of this paper are time-independent.
\subsection{Bogolyubov–Born–Green–Kirkwood–Yvon (BBGKY) hierarchy} \label{sec:bbgky}
We first consider the BBGKY hierarchy in a general setting including non-equilibrium systems. We consider the phase space $\mathbb R^{dN \times 2}$ on $\symbb R^{dN}$ with linear coordinates $(\symbfit{r}_{1}, \cdots, \symbfit{r}_{N}, \symbfit{p}_{1}, \cdots, \symbfit{p}_{N})$ representing spatial coordinates and momenta of $N$ bodies. We also assume that the bodies are confined in a compact space in $\symbb{R}^d$ ($d \geq 1$).

Let $H$ be a Hamiltonian on the phase space $\symbb{R}^{dN \times 2}$. We assume that $H$ is separable for a general potential $u_N\left(\symbfit{r}^N\right)$, and write $H$ in general form as 
\[H(\symbfit{r}^{N}, \symbfit{p}^{N}) = K_{N}(\symbfit{p}^N) + u_{N}(\symbfit{r}^{N}) + \sum_{k=1}^{N} u^\text{ext} (\symbfit{r}_{k}),\]
where $\displaystyle K_{N}(\symbfit{p}^N) = \sum_{k=1}^{N}\frac{|\symbfit{p}_{k}|^{2}}{2m_{k}}$ with a body mass $m_{k}$, $u_{N}$ is the interatomic potential energy and $u^\text{ext}$ is the external 1-body potential energy. Let $\phi^{[N]}(\symbfit{r}^N, \symbfit{p}^N, t)$ be a phase-space probability density. Suppose that $\phi^{[N]}(\symbfit{r}^N, \symbfit{p}^N, t)$ satisfies the Liouville equation
\[\frac{\partial \phi^{[N]}}{\partial t} + \sum_{k=1}^{N} \left( \left(\nabla_{\symbfit{r}_{k}} \phi^{[N]}\right) \cdot \dot{\symbfit{r}}_{k} + \left(\nabla_{\symbfit{p}_{k}} \phi^{[N]}\right) \cdot \dot{\symbfit{p}}_{k} \right) = 0.\]
Here, $\nabla_{\symbfit{r}} = (\frac{\partial}{\partial r_{1}}, \dots, \frac{\partial}{\partial r_{d}})^{\operatorname{T}}$. This equation is rewritten, by using 
\[\nabla_{\symbfit{r}_{k}} H = \nabla_{\symbfit{r}_{k}} u_{N}(\symbfit{r}^{N}) + \nabla_{\symbfit{r}} u^{\text{ext}}(\symbfit{r}_{k}),\]
as 
\[\left( \frac{\partial}{\partial t} + \sum_{k=1}^{N}\frac{\symbfit{p}_{k}}{m_{k}} \cdot \nabla_{\symbfit{r}_{k}} - \sum_{k=1}^{N} \left(\nabla_{\symbfit{r}_{k}} u^{\text{ext}} \right) \cdot \nabla_{\symbfit{p}_{k}}\right) \phi^{[N]} = \sum_{k=1}^{N} \left(\nabla_{\symbfit{r}_{k}} u_{N} \right) \left(\nabla_{\symbfit{p}_{k}} \phi^{[N]} \right).\]
We introduce a $n$-body reduced phase-space distribution function
\[
\phi^{[n]}(\symbfit{r}^n, \symbfit{p}^n;t) = \frac{N!}{(N-n)!}\iint \phi^{[N]}d\symbfit{r}^{N}_{n} d\symbfit{p}^{N}_{n},
\]
in which $d\symbfit{r}^{N}_{n} = d\symbfit{r}_{n+1} \dots d\symbfit{r}_N$ and $d\symbfit{p}^{N}_{n} = d\symbfit{p}_{n+1} \dots d\symbfit{p}_N$. 
Then, applying the expression to the above Liouville equation yields
\begin{equation} 
\begin{aligned}
&\phantom{=}\left( \frac{\partial}{\partial t} + \sum_{k=1}^{n}\frac{\symbfit{p}_k}{m_{k}} \cdot \nabla_{\symbfit{r}_{k}}  - \sum_{k=1}^{n} \left(\nabla_{\symbfit{r}_{k}} u^{\text{ext}}\right) \cdot \nabla_{\symbfit{p}_{k}} \right) \phi^{[n]} \\ 
&= \sum_{k=1}^{n} \sum_{j=1}^{n} \left(\nabla_{\symbfit{r}_{k}} u_{N} (\symbfit{r}_k, \symbfit{r}_j) \right) \cdot \left(\nabla_{\symbfit{p}_{k}}  \phi^{[n]} \right) \\
&\hspace{10mm} + \frac{N!}{(N-n)!} \sum_{k=1}^{n} \sum_{j=n+1}^{N}\iint  \left(\nabla_{\symbfit{r}_{k}} u_{N}(\symbfit{r}_{k}, \symbfit{r}_{j}) \right) \cdot \left(\nabla_{\symbfit{p}_{k}} \phi^{[N]}\right) d\symbfit{r}^{N}_{n} d\symbfit{p}^{N}_{n}.
\end{aligned} 
\label{eq:Liouville_f^{[n]}}
\end{equation} 
Here, we assume that $u_{N}(\symbfit{r}^N)$ is a pair-wise symmetric potential energy: 
\[
u_{N}(\symbfit{r}^N)= \sum_{1 \leq k < j \leq N} u(\symbfit{r}_{k}, \symbfit{r}_{j}).
\]
Noting that $H$ is symmetric, the equation \eqref{eq:Liouville_f^{[n]}} is reduced to an expression known as BBGKY hierarchy:

\begin{align}
&\phantom{=}\left( \frac{\partial}{\partial t} + \sum_{k=1}^{n}\frac{\symbfit{p}_k}{m_{k}} \cdot \nabla_{\symbfit{r}_{k}}  - \sum_{k=1}^{n} \left(\nabla_{\symbfit{r}_{k}} u^{\text{ext}} (\symbfit{r}_{k}) + \sum_{\substack{j=1 \\ j \neq k}}^{n} \left(\nabla_{\symbfit{r}_{k}} u (\symbfit{r}_k, \symbfit{r}_j) \right) \right) \cdot \nabla_{\symbfit{p}_{k}} \right) \phi^{[n]} \\ 
&= \sum_{k=1}^{n} \iint  \left(\nabla_{\symbfit{r}_{k}} u(\symbfit{r}_{k}, \symbfit{r}_{n+1}) \right) \cdot \left(\nabla_{\symbfit{p}_{k}} \phi^{[n+1]}\right) d\symbfit{r}_{n+1} d\symbfit{p}_{n+1}. \label{eq:general_BBGKY}
\end{align}

In this paper, we focus on the equilibrium version of the hierarchy, i.e., 

\begin{equation} 
\begin{aligned}
&\phantom{=}\sum_{k=1}^{n}\left( \frac{\symbfit{p}_k}{m_{k}} \cdot \nabla_{\symbfit{r}_{k}}  - \left(\nabla_{\symbfit{r}_{k}} u^{\text{ext}} (\symbfit{r}_{k}) + \sum_{\substack{j=1 \\ j \neq k}}^{n} \left(\nabla_{\symbfit{r}_{k}} u (\symbfit{r}_k, \symbfit{r}_j) \right) \right) \cdot \nabla_{\symbfit{p}_{k}} \right) \phi^{[n]} \\ 
&= \sum_{k=1}^{n} \iint  \left(\nabla_{\symbfit{r}_{k}} u(\symbfit{r}_{k}, \symbfit{r}_{n+1}) \right) \cdot \left(\nabla_{\symbfit{p}_{k}} \phi^{[n+1]}\right) d\symbfit{r}_{n+1} d\symbfit{p}_{n+1}. 
\end{aligned}\label{eq:equilibrum_BBGKY}
\end{equation} 
One important member of this hierarchy is that of corresponding to the case of $n=1$:
\[\left( \frac{\symbfit{p}_1}{m_{k}} \cdot \nabla_{\symbfit{r}_{1}} - \left(\nabla_{\symbfit{r}_{1}} u^{\text{ext}}(\symbfit{r}_1) \right) \cdot \nabla_{\symbfit{p}_{1}} \right) \phi^{[1]} = \iint \left(\nabla_{\symbfit{r}_{1}} u (\symbfit{r}_1, \symbfit{r}_{2}) \right) \cdot \left(\nabla_{\symbfit{p}_{1}} \phi^{[2]} \right) d\symbfit{r}_{2} d\symbfit{p}_{2}.\]
Here, we assume that $\phi^{[N]}$ is the Boltzmann distribution, i.e.,
\[\phi^{[N]}(\symbfit{r}^N, \symbfit{p}^N) = \frac{1}{h^{3N}N!}\frac{\exp{\left(-\beta H(\symbfit{r}^N, \symbfit{p}^N)\right)}}{Q_N}\]
with the normalization constant (or the canonical partition function) $Q_N$ and $\beta = \frac{1}{k_\mathrm{B}T}$, in which $k_\mathrm{B}$ denotes the Boltzmann constant and $T$ the absolute temperature\footnote{Inclusion of the Planck constant $h$ is a treatment, adopted in \cite{hansen2013theory}, to ensure $Q_{N}$ is dimensionless and consistent with the corresponding quantities of quantum statistical mechanics. We note that discarding the constant will not affect the main results of the paper, while it will alter the interpretation of the distribution.}. Since $\frac{\partial}{\partial p_1} \phi^{[k]} = - \frac{\beta}{m_{k}} \phi^{[k]} \symbfit{p}_1$ for $k=1,2$, the equation is then reduced to a concise form
\begin{align}
\left(k_\mathrm{B}T \nabla_{\symbfit{r}_{1}} + \nabla_{\symbfit{r}_{1}} u^{\text{ext}}(\symbfit{r}_1) \right) \phi^{[1]} = -\iint \left(\nabla_{\symbfit{r}_{1}} u(\symbfit{r}_1, \symbfit{r}_{2}) \right) \phi^{[2]} d\symbfit{r}_2 d\symbfit{p}_2. \label{eq:upper_BBGKY}
\end{align}
Introducing the reduced Boltzmann distribution 
\[\rho^{(n)}(\symbfit{r}^n) = \frac{N!}{(N-n)!} \frac{1}{Z_{N}} \int \exp{\left( -\beta u_{N}(\symbfit{r}^N) \right)} d\symbfit{r}^{N}_{n}\]
with $Z_{N} = \int \exp(-\beta u_{N}(\symbfit{r}^N))d\symbfit{r}^{N}$, the equation \eqref{eq:upper_BBGKY} may be expressed as
\begin{align}
\left( k_\mathrm{B}T \nabla_{\symbfit{r}_{1}} + \left(\nabla_{\symbfit{r}_{1}} u^{\text{ext}}(\symbfit{r}_1) \right)\right) \rho^{(1)}(\symbfit{r}_1) = - \int \left( \nabla_{\symbfit{r}_{1}} u(\symbfit{r}_1, \symbfit{r}_2) \right) \rho^{(2)}(\symbfit{r}_1, \symbfit{r}_2) d\symbfit{r}_2. \label{eq:bbgky_org}
\end{align}

The derivation of the BBGKY hierarchy departs from the assumption that $\phi^{[N]}$ satisfies the Liouville equation. In contrast, the generalized BBGKY hierarchy derived in \cite{168}, called the hyperforce sum rule, is obtained through the first-order functional derivative of the canonical ensemble. The idea does not (explicitly) rely on the Liouville equation and naturally allows the hyperforce sum to include observables, which truly extends the BBGKY hierarchy in the equilibrium setting.

\subsection{Hyperforce sum formula} \label{sec:168}
We consider the same Hamiltonian as in the previous subsection. We set the classical trace operation as $\operatorname{Tr}(\cdot) = \frac{1}{N! h^{dN} }\int d \symbfit{r}^N d \symbfit{p}^N \cdot$. The partition sum $Z$ is represented using the trace as $Z = \operatorname{Tr}e^{-\beta H(\symbfit{r}^{N}, \symbfit{p}^{N})}$. Then, the thermal average $A$ of an observable $\hat{A}(\symbfit{r}^{N}, \symbfit{p}^{N})$ is 
\begin{equation}
A = \langle \hat{A} \rangle = \operatorname{Tr}\left( \hat{A}(\symbfit{r}^{N}, \symbfit{p}^{N}) e^{-\beta H(\symbfit{r}^{N}, \symbfit{p}^{N})}\right)/Z. 
\label{eq:thermal_average_hat{A}}
\end{equation}  

The hyperforce sum rule is obtained through infinitesimal phase space shifts. The full position-resolved phase space shifting operators are defined as  
\[
\sigma (\symbfit{r}) = \sum_{i=1}^{N} \sigma_{i} (\symbfit{r}) = \sum_{i=1}^{N} \left( \delta(\symbfit{r} - \symbfit{r}_{i}) \nabla_{i} + \symbfit{p}_{i} \nabla\delta(\symbfit{r} - \symbfit{r}_{i}) \cdot \nabla_{\symbfit{p}_{i}} \right).
\]
When this operator is applied to the Boltzmann distribution, we get
\begin{align}\sigma(\symbfit{r}) e^{-\beta H} = \beta \hat{\mathbf{F}}(\symbfit{r}) e^{-\beta H}. \label{eq:shift_boltzmann}
\end{align}
Here, 
\begin{align}
    \hat{\mathbf{F}}(\symbfit{r}) = - \nabla \cdot \sum_{i=1}^{N} \frac{\symbfit{p}_{i} \symbfit{p}_{i}^{\operatorname{T}}}{m} \delta(\symbfit{r} - \symbfit{r}_{i}) - \sum_{i=1}^{N} \delta(\symbfit{r} - \symbfit{r}_{i}) \nabla_{i}u_{N}(\symbfit{r}^{N}) - \sum_{i=1}^{N} \delta(\symbfit{r} - \symbfit{r}_{i}) \nabla u^{\text{ext}}(\symbfit{r})
\end{align}
and this operator is obtained through the functional derivative of the canonical transformation induced by a diffeomorphism on the configuration space. 

On the other hand, applying $\sigma (\symbfit{r})$ to an observable $\hat{A}(\symbfit{r}^{N}, \symbfit{p}^{N})$ yields 
\begin{align}\sigma(\symbfit{r})& \hat{A}(\symbfit{r}^{N}, \symbfit{p}^{N}) \\
&= \sum_{i=1}^{N} \delta(\symbfit{r} - \symbfit{r}_{i}) \nabla_{i} \hat{A}(\symbfit{r}^{N}, \symbfit{p}^{N}) + \nabla \cdot \sum_{i=1}^{N} \delta(\symbfit{r} - \symbfit{r}_{i}) \left( \nabla_{\symbfit{p}_{i}} \hat{A}(\symbfit{r}^{N}, \symbfit{p}^{N}) \right)\symbfit{p}_{i}. \label{eq:shift_observable}
\end{align}
Combining \eqref{eq:shift_boltzmann} and \eqref{eq:shift_observable} under the thermal average yields the one-body hyperforce sum rule, expressed as 
\begin{align}
\left\langle \sigma (\symbfit{r}) \hat{A}(\symbfit{r}^{N}, \symbfit{p}^{N}) \right\rangle + \left\langle \hat{A}(\symbfit{r}^{N}, \symbfit{p}^{N}) \beta\, \hat{\mathbf{F}}(\symbfit{r})\right\rangle = 0. \label{eq:original_hyperforce_sum}
\end{align} 
The relation is shown to hold by using the integration by parts. When $\hat{A}$ is a non-vanishing constant function, this relation reduces to the average one-body force density relation in thermal equilibrium setting:
\[\left\langle \beta \hat{\mathbf{F}}(\symbfit{r})\right\rangle = 0,\]
which upon spelling out the three indicidual terms is equivalent to the following form:
\begin{align}
-k_\mathrm{B} T \nabla \rho(\symbfit{r}) - \left\langle \sum_{i=1}^{N} \delta(\symbfit{r} - \symbfit{r}_{i}) \nabla_{i}u_{N}(\symbfit{r}^{N})) \right\rangle - \rho(\symbfit{r}) \nabla u^{\text{ext}}(\symbfit{r}) = 0. \label{eq:bbgky_168}
\end{align}
Here, $\rho(\symbfit{r})$ denotes the one-body density distribution defined as
\[\rho(\symbfit{r}) = \left\langle \sum_{i=1}^{N} \delta(\symbfit{r} - \symbfit{r}_{i}) \right\rangle.\]
We conclude this section by noting that the equation \eqref{eq:bbgky_168} is equivalent to the one derived in \eqref{eq:bbgky_org}.

\section{Case: $\mathbb{R}^{d}$}
\label{sec:Rn} 
In this section, we reformulate the result of \cite{168}, relying on distribution theory on the phase space over $\mathbb{R}^{d}$. All arguments made in this section are presented in the conventional notations adopted in the mathematical literature.

Let $\operatorname{Diff}(\symbb{R}^{d})$ be the set of $C^{\infty}$-diffeomorphisms. 
We introduce three classes of vector fields. 
In this section, we identify with the vector field on $\symbb{R}^{d}$ as a $C^{\infty}$-map on $\symbb{R}^{d}$. 
First, 
let $\symfrak{X}_{\operatorname{Id}}(\symbb{R}^{d})$ be the set of vector fields $\epsilon = (\epsilon^{(1)}, \dots, \epsilon^{(d)})$ on $\symbb{R}^{d}$ such that $\operatorname{Id} + \epsilon \in \operatorname{Diff}(\symbb{R}^{d})$. 
Second, 
let $\symfrak{X}_{0}(\symbb{R}^{d})$ be the set of vector fields with compact support. 
Third, we set 
$\mathfrak{X}_{\operatorname{Id},0}(\mathbb{R}^{d}) 
= \mathfrak{X}_{\operatorname{Id}}(\mathbb{R}^{d}) \cap \mathfrak{X}_{0}(\mathbb{R}^{d})$
and the corresponding class of diffeomorphisms as $\operatorname{Diff}_{\operatorname{Id},0}(\symbb{R}^{d})$. 
For instance, a compactly supported vector field $\epsilon \colon \symbb{R}^{d} \to \symbb{R}^{d}$ such that $\sup_{\symbfit{r} \in \symbb{R}^{d}} \|  \epsilon'(\symbfit{r} ) \| < 1$ is an element in $\mathfrak{X}_{\operatorname{Id},0}(\mathbb{R}^{d})$, where $\|  \epsilon'(\symbfit{r} ) \|$ means the operator norm of the Jacobian matrix of $\epsilon$ at $\symbfit{r}$. 
In the following, we briefly recall $\operatorname{Diff}_{\operatorname{Id},0}(\symbb{R}^{d})$ forms a group by the composition of functions. Although this fact is well-known, we give a detailed proof for the fact. 
\begin{lemma} \label{lem:diffeo}
    $\operatorname{Diff}_{\operatorname{Id}, 0}(\symbb{R}^{d})$ endowed with the composition of functions forms a group.
\end{lemma}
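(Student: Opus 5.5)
We prove that $\operatorname{Diff}_{\operatorname{Id},0}(\symbb{R}^{d})=\{\operatorname{Id}+\epsilon : \epsilon\in\mathfrak{X}_{\operatorname{Id},0}(\symbb{R}^{d})\}$ is a subgroup of $\operatorname{Diff}(\symbb{R}^{d})$ under composition. It suffices to check three things: the identity lies in the set, the set is closed under composition, and it is closed under inversion. Associativity is inherited from composition of maps.

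The identity is immediate. Take $\epsilon=0$; it is smooth, has empty (hence compact) support, and $\operatorname{Id}+0=\operatorname{Id}$ is a diffeomorphism.

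For closure under composition, let $\epsilon_1,\epsilon_2\in\mathfrak{X}_{\operatorname{Id},0}(\symbb{R}^{d})$ with supports contained in compact sets $K_1,K_2$. We write
\[
(\operatorname{Id}+\epsilon_1)\circ(\operatorname{Id}+\epsilon_2)=\operatorname{Id}+\epsilon_2+\epsilon_1\circ(\operatorname{Id}+\epsilon_2),
\]
and set $\eta=\epsilon_2+\epsilon_1\circ(\operatorname{Id}+\epsilon_2)$. The composition is a diffeomorphism because $\operatorname{Diff}(\symbb{R}^{d})$ is a group, and $\eta$ is smooth. For the support, note that outside $K_2$ we have $\epsilon_2=0$, so $\eta(\symbfit{r})=\epsilon_1(\symbfit{r})$ there. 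Hence $\eta$ vanishes outside $K_1\cup K_2$, and $\operatorname{supp}\eta\subset K_1\cup K_2$ is compact.

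For closure under inversion, let $g=\operatorname{Id}+\epsilon$ with $\operatorname{supp}\epsilon\subset K$ compact, and write $g^{-1}=\operatorname{Id}+\eta$ with $\eta=g^{-1}-\operatorname{Id}$, which is smooth because $g^{-1}$ is. The key observation is that $g$ restricts to the identity on $\symbb{R}^{d}\setminus K$. So if $\symbfit{r}\notin K$, then $g(\symbfit{r})=\symbfit{r}$, and applying $g^{-1}$ gives $g^{-1}(\symbfit{r})=\symbfit{r}$. Thus $\eta(\symbfit{r})=0$ for every $\symbfit{r}\notin K$, so $\operatorname{supp}\eta\subset K$ is compact. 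Moreover $\operatorname{Id}+\eta=g^{-1}$ is a diffeomorphism, so $\eta\in\mathfrak{X}_{\operatorname{Id},0}(\symbb{R}^{d})$.

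The inversion step is the only one needing care. The natural worry is that $g^{-1}$ might move points outside $K$, for instance a point $\symbfit{r}\notin K$ with $g^{-1}(\symbfit{r})\in K$. The pointwise argument rules this out directly: for $\symbfit{r}\notin K$ we have $g(\symbfit{r})=\symbfit{r}$, so by injectivity of $g$ the unique preimage of $\symbfit{r}$ is $\symbfit{r}$ itself. Beyond this, the argument uses only that $\operatorname{Diff}(\symbb{R}^{d})$ is a group, which is standard.
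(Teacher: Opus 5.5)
Your proof is correct and follows the same route as the paper: closure under composition via the formula $\operatorname{Id}+\epsilon_2+\epsilon_1\circ(\operatorname{Id}+\epsilon_2)$, closure under inversion, and associativity inherited from composition of maps. Your inverse field $g^{-1}-\operatorname{Id}$ is the same as the paper's $-\epsilon\circ(\operatorname{Id}+\epsilon)^{-1}$, and you also supply the compact-support arguments and the identity element, which the paper asserts without justification or omits.
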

\begin{proof}
    We first show the well-definedness for the composition of diffeomorphisms in $\operatorname{Diff}_{\operatorname{Id}, 0}(\symbb{R}^{d})$. The composition of two functions $\operatorname{Id} + \epsilon_{1}, \operatorname{Id} + \epsilon_{2} \in \operatorname{Diff}_{\operatorname{Id}, 0}(\symbb{R}^{d})$ is 
    \begin{align}
    (\operatorname{Id} + \epsilon_{2}) \circ(\operatorname{Id} + \epsilon_{1})(\symbfit{r}) = \symbfit{r}+ \epsilon_{1}(\symbfit{r}) + \epsilon_{2}(\symbfit{r} + \epsilon_{1}(\symbfit{r})),\ \ \ \symbfit{r} \in \mathbb{R}^{d}. \label{eq:composition}
    \end{align}
    The composition is well-defined for diffeomorphisms in $\operatorname{Diff}_{\operatorname{Id}, 0}(\symbb{R}^{d})$, since this transformation is diffeomorphic by definition and the vector field $\symbfit{r} \mapsto \epsilon_{1}(\symbfit{r}) + \epsilon_{2}(\symbfit{r} + \epsilon_{1}(\symbfit{r}))$ is also smooth with compact support. 

    Next, we will show the inverse operation of diffeomorphisms is well-defined. For any $\operatorname{Id} + \epsilon \in \operatorname{Diff}_{\operatorname{Id}, 0}(\symbb{R}^{d})$, set $\epsilon_{-1}(\symbfit{r}) = -\epsilon \big((\operatorname{Id} + \epsilon)^{-1} (\symbfit{r})\big)$. The vector field has a compact support. Then, the inverse function of $\operatorname{Id} + \epsilon$ is written as $(\operatorname{Id} + \epsilon)^{-1} = \operatorname{Id} + \epsilon_{-1} $.    
    Indeed, 
    \begin{align}
        (\operatorname{Id} + \epsilon_{-1})\big((\operatorname{Id} + \epsilon)(\symbfit{r})\big) 
        &= \symbfit{r} + \epsilon(\symbfit{r}) + \epsilon_{-1}\big(\symbfit{r} + \epsilon(\symbfit{r})\big) \\
        &= \operatorname{Id}(\symbfit{r}) 
    \end{align}
    and we also get
    \begin{align}
        (\operatorname{Id} + \epsilon)\Big( (\operatorname{Id} + \epsilon_{-1})(\symbfit{r}) \Big)
        &= \symbfit{r} + \epsilon_{-1}(\symbfit{r}) + \epsilon \Big(\symbfit{r} + \epsilon_{-1}(\symbfit{r}) \Big) \\
        &= \symbfit{r} - \epsilon\Big((\operatorname{Id} + \epsilon)^{-1} (\symbfit{r})\Big) + \epsilon \Big(\symbfit{r} - \epsilon ((\operatorname{Id} + \epsilon)^{-1}(\symbfit{r}) \Big) \\
        &= \symbfit{r} - \epsilon \Big((\operatorname{Id} + \epsilon)^{-1} (\symbfit{r}) \Big) \\ 
            &\hspace{8mm}+ \epsilon\Big((\operatorname{Id} + \epsilon)\big((\operatorname{Id} + \epsilon)^{-1}(\symbfit{r})\big) - \epsilon \big((\operatorname{Id} + \epsilon)^{-1}(\symbfit{r})\big)\Big) \\ 
        &= \symbfit{r} - \epsilon\Big((\operatorname{Id} + \epsilon)^{-1} (\symbfit{r})\Big)  \\ 
            &\hspace{8mm}+ \epsilon\Big((\operatorname{Id} + \epsilon)^{-1}(\symbfit{r}) + \epsilon\big((\operatorname{Id} + \epsilon)^{-1}(\symbfit{r})\big) - \epsilon \big((\operatorname{Id} + \epsilon)^{-1}(\symbfit{r})\big)\Big) \\ 
        &= \symbfit{r} - \epsilon\Big((\operatorname{Id} + \epsilon)^{-1} (\symbfit{r})\Big) + \epsilon\Big((\operatorname{Id} + \epsilon)^{-1}(\symbfit{r})\Big) \\ 
        &= \operatorname{Id}(\symbfit{r}).
    \end{align}
    The associativity is derived in a straightforward manner, which completes the proof.
\end{proof}

\subsection{Generalized thermal average and hyperforce distribution} 
\label{subsec:thermal_average}
We first consider the phase space $\mathbb R^{dN \times 2}$ on $\symbb R^{dN}$ with linear coordinates $\symbfit{r}^N$ and $\symbfit{p}^{N}$ as introduced in Section \ref{sec:prelim}. 
For each $n \in \{0, 1, \dots, N-1\}$, we also consider its subspace $\symbb{R}^{d(N-n)\times2}$ equipped with the linear coordinates $(\symbfit{r}_{n+1}, \dots, \symbfit{r}_{N})$ and $(\symbfit{p}_{n+1}, \dots, \symbfit{p}_{N})$. 
We will write $\symbfit{r}_{n}^{N}  = (\symbfit{r}_{n+1}, \dots, \symbfit{r}_{N})$ and $\symbfit{p}^{N}_{n} = (\symbfit{p}_{n+1}, \dots, \symbfit{p}_{N})$. 
For $\epsilon \in \mathfrak{X}_{\operatorname{Id},0}(\mathbb{R}^{d})$, we define a diffeomorphism on $\symbb{R}^{dN}$ by
\begin{align}
    \epsilon_{n}(\symbfit{r}^N) = (\symbfit{r}_1, \dots, \symbfit{r}_{n}, (\operatorname{Id} + \epsilon)(\symbfit{r}_{n+1}), \dots, (\operatorname{Id} + \epsilon)(\symbfit{r}_N)). \label{eq:diff_action}
\end{align}
This diffeomorphism naturally gives rise to a diffeomorphism on $\mathbb R^{dN \times 2}$, known as the canonical transformation. The diffeomorphism is expressed as 
\begin{align}
\epsilon_{n, \sharp} : \symbb{R}^{dN\times2} \rightarrow \symbb{R}^{dN\times2}, \ \ \ (\symbfit{r}^{N}, \symbfit{p}^{N}) \mapsto (\hat{\symbfit{r}}^{N}, \hat{\symbfit{p}}^{N}) \label{eq:lift}
\end{align}
in which 
\[
    \hat{\symbfit{r}}_i= 
\begin{cases}
    \symbfit{r}_i \, ,& \text{if } 1 \leq i \leq n\\
    (\operatorname{Id} + \epsilon)(\symbfit{r}_i),              & \text{otherwise}
\end{cases}
\hspace{1mm}, \hspace{5mm}
    \hat{\symbfit{p}}_i= 
\begin{cases}
    \symbfit{p}_i \,,& \text{if } 1 \leq i \leq n\\
    (\symbb{1} + \nabla  \epsilon(\symbfit{r}_i))^{-1}\,\symbfit{p}_i,              & \text{otherwise}
\end{cases} \hspace{1mm}.
\]
Here, $\symbb{1}$ denotes the identity matrix of dimension $d$ and $\nabla \epsilon(\symbfit{r}_{i}) $ is the Jacobian matrix of $\epsilon$ defined as 
\[
\nabla \epsilon(\symbfit{r}_{i}) 
= \begin{pmatrix} \frac{\partial}{\partial r_{1}} \epsilon^{(1)}(\symbfit{r}_{i}) & \cdots & \frac{\partial}{\partial r_{d}} \epsilon^{(1)}(\symbfit{r}_{i}) \\ \vdots  & \ddots & \vdots\\ \frac{\partial}{\partial r_{1}} \epsilon^{(d)}(\symbfit{r}_{i}) & \cdots& \frac{\partial}{\partial r_{d}} \epsilon^{(d)}(\symbfit{r}_{i}) \end{pmatrix}.
\]
We refer to $\epsilon_{n, \sharp}$ as the \textit{lift} of $\epsilon_{n}$ onto $\mathbb R^{dN \times 2}$. We first prove that the pullback by the lift $\epsilon_{n, \sharp}$ induces a well-defined map on $\symscr{S}\left(\symbb{R}^{dN \times 2}\right)$.

\begin{theorem} \label{thm:continous_pullback} 
Let $n \in \{0, 1, \dots, N-1 \}$. 
For any $\epsilon \in \mathfrak{X}_{\operatorname{Id},0}(\mathbb{R}^{d})$, a map $\epsilon_{n, \sharp}^{\ast}$ between $\symscr{S}(\mathbb{R}^{dN \times 2})$ induced by the pullback with the lift $\epsilon_{n, \sharp} \in \operatorname{Diff}_{\symcal{\operatorname{Id}}, 0}(\symbb{R}^{dN \times 2})$
\[\epsilon_{n, \sharp}^{*}: \symscr{S}(\mathbb{R}^{dN \times 2}) \rightarrow \symscr{S}(\mathbb{R}^{dN \times 2})\]
is well-defined. 
\end{theorem}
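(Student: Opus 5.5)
We need to show that $\phi \mapsto \phi\circ\epsilon_{n,\sharp}$ maps Schwartz functions to Schwartz functions. The plan is to reduce to an explicit structural description of $\epsilon_{n,\sharp}$ and then bound seminorms directly.

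Let $K\subset\symbb{R}^d$ be a compact set containing $\operatorname{supp}\epsilon$. Then $(\operatorname{Id}+\epsilon)$ restricts to the identity outside $K$. Since $\operatorname{Id}+\epsilon$ is a diffeomorphism with $\epsilon$ compactly supported, it also maps $K$ into a compact set $K'$ (which we may take to contain $K$).

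The lift acts factorwise on the pairs $(\symbfit{r}_i,\symbfit{p}_i)$, $i>n$, by
\[
(\symbfit{r},\symbfit{p})\mapsto\bigl(\symbfit{r}+\epsilon(\symbfit{r}),\,M(\symbfit{r})\symbfit{p}\bigr),
\qquad M(\symbfit{r})=(\symbb{1}+\nabla\epsilon(\symbfit{r}))^{-1}.
\]
Here $M$ is $C^\infty$, since $\det(\symbb{1}+\nabla\epsilon)\neq0$ everywhere because $\operatorname{Id}+\epsilon$ is a diffeomorphism. Moreover $M\equiv\symbb{1}$ outside $K$. Hence every derivative of $M$ is bounded, and $M$ together with all its derivatives is bounded globally. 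The same holds for $M^{-1}=\symbb{1}+\nabla\epsilon$.

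In consequence, every component of $\epsilon_{n,\sharp}$ is a smooth function all of whose derivatives have at most linear (polynomial) growth. Derivatives of order at least two are in fact bounded, except for the factor of $\symbfit{p}$ in momentum components, which contributes at most linear growth in $\symbfit{p}$.

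Next comes the lower bound that controls decay. We show there is $C>0$ with
\[
1+|\epsilon_{n,\sharp}(z)|\geq C^{-1}(1+|z|)\quad\text{for all } z.
\]
For positions, $|\symbfit{r}+\epsilon(\symbfit{r})|\geq|\symbfit{r}|-\sup|\epsilon|$. For momenta, $|M(\symbfit{r})\symbfit{p}|\geq\|M^{-1}(\symbfit{r})\|^{-1}|\symbfit{p}|\geq c|\symbfit{p}|$ with $c=(\sup_{\symbfit{r}}\|\symbb{1}+\nabla\epsilon(\symbfit{r})\|)^{-1}>0$. This is the key use of the fact that $M^{-1}$ is bounded.

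Now fix multi-indices $\alpha,\beta$ and consider $z^\alpha\partial^\beta(\phi\circ\epsilon_{n,\sharp})$. Apply the Faà di Bruno formula. The result is a finite sum of terms of the form
\[
(\partial^\gamma\phi)(\epsilon_{n,\sharp}(z))\cdot P(z),
\]
with $|\gamma|\leq|\beta|$, where each $P$ is a product of derivatives of components of $\epsilon_{n,\sharp}$. Each $P$ is therefore bounded by $C(1+|z|)^{m}$ for some $m$ depending on $\beta$.

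Combining this with the lower bound on $|\epsilon_{n,\sharp}(z)|$ gives, for any $k$,
\[
|z^\alpha\partial^\beta(\phi\circ\epsilon_{n,\sharp})(z)|
\leq C\sum_{|\gamma|\leq|\beta|}\sup_w(1+|w|)^{k}|\partial^\gamma\phi(w)|,
\]
provided $k\geq|\alpha|+m$. The right-hand side is a finite combination of Schwartz seminorms of $\phi$, so it is finite. Hence $\phi\circ\epsilon_{n,\sharp}\in\symscr{S}(\symbb{R}^{dN\times2})$. The estimate also shows that the map is continuous and, being a composition operator, linear.

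Finally, we identify the pullback in the statement. If the pullback $\epsilon^{*}_{n,\sharp}$ is meant with the inverse $\epsilon_{n,\sharp}^{-1}$, the same argument applies. By Lemma \ref{lem:diffeo}, $(\operatorname{Id}+\epsilon)^{-1}=\operatorname{Id}+\epsilon_{-1}$ with $\epsilon_{-1}$ compactly supported. A direct check shows that $\epsilon_{n,\sharp}^{-1}$ is the lift $(\epsilon_{-1})_{n,\sharp}$, because
\[
(\symbb{1}+\nabla\epsilon_{-1})(\symbfit{r}+\epsilon(\symbfit{r}))=(\symbb{1}+\nabla\epsilon(\symbfit{r}))^{-1}
\]
by the chain rule.

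The main obstacle is the momentum direction, which is non-compact and where the map is not the identity for $\symbfit{r}\in K$ and arbitrary $\symbfit{p}$. Polynomial growth of the derivatives in $\symbfit{p}$ has to be absorbed by the decay of $\phi$. This is exactly what the uniform invertibility bound $|M(\symbfit{r})\symbfit{p}|\geq c|\symbfit{p}|$ provides, so that step needs the most care.
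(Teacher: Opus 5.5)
Your proposal is correct and follows essentially the same route as the paper: a Fa\`a di Bruno expansion of $\phi\circ\epsilon_{n,\sharp}$, together with the fact that all derivatives of the lift are bounded in the positions and grow at most linearly in the momenta. You also prove the comparability bound $1+|\epsilon_{n,\sharp}(z)|\geq C^{-1}(1+|z|)$, which is what makes $(\partial^{\gamma}\phi)\circ\epsilon_{n,\sharp}$ rapidly decreasing and which the paper only uses implicitly, and your identification of $\epsilon_{n,\sharp}^{-1}$ as the lift of $\epsilon_{-1}$ covers both pullback conventions that appear in the paper.
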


\begin{proof} We first show the claim for the case of $N=1$ and $n=0$. We prove the argument by employing (a multivariate version of) Faà di Bruno's formula \cite{Brioschi1858,MR1325915}; see Appendix \ref{sec:multi_faddibruno} for the details. The multivariate Faà di Bruno's formula is the explicit expression of the derivative of the composition $f \circ \symbfit{g}$ of two functions $\symbfit{g} = (g^{(1)}, \dots, g^{(m)}): \symbb{R}^{\ell} \rightarrow \symbb{R}^{m}$ and $f: \symbb{R}^{m} \rightarrow \symbb{R}$. This formula, whose complete expression is given in \eqref{eq:multi_faadibruno}, is the finite sum of the following term involving partial derivative of $\symbfit{g}$ and $f$
\begin{align}
    C \cdot f_{\symbfit{\lambda}}  \prod_{j=1}^{s} [\symbfit{g}_{\symbfit{\ell}_{j}}]^{\symbfit{k}_{j}}, \label{eq:faadibruno_maincomp}
\end{align} over multi-indices $\symbfit{\lambda},\,\symbfit{k}_{j} \in \symbb{N}^{m}_{0}$, and $\symbfit{\ell}_{j} \in \symbb{N}^{\ell}_{0}$ with $j$ ranging from $1$ through to a positive integer $s$. Here, $C$ represents the constant involved in each term composing the formula \eqref{eq:multi_faadibruno}. By definition, \eqref{eq:faadibruno_maincomp} is expanded as 
\begin{align}
    \eqref{eq:faadibruno_maincomp}
    &= C \cdot D_{\symbfit{y}}^{\symbfit{\lambda}}f(\symbfit{y}_{0})  \prod_{j=1}^{s} \prod_{i=1}^{\ell} (g_{\symbfit{\ell}_{j}}^{(i)})^{k_{j, i}} \\
    &=  C \cdot D_{\symbfit{y}}^{\symbfit{\lambda}}f(\symbfit{y}_{0})  \prod_{j=1}^{s} \prod_{i=1}^{\ell} (D_{\symbfit{x}}^{\symbfit{\ell}_{j}} g^{(i)}(\symbfit{x}_{0}))^{k_{j, i}}\\
    &= C \cdot \frac{\partial^{|\symbfit{\lambda}|}f}{\partial y_{1}^{\lambda_{1}} \cdots \partial y_{m}^{\lambda_{m}}}(\symbfit{y}_{0})  \prod_{j=1}^{s} \prod_{i=1}^{\ell} \left(\frac{\partial^{|\symbfit{\ell}_{j}|}g^{(i)}}{\partial x_{1}^{\ell_{j, 1}} \cdots \partial x_{\ell}^{\ell_{j, \ell}}} (\symbfit{x}_{0})\right)^{k_{j, i}} \label{eq:expanded_faadibruno}, 
\end{align}
in which $\symbfit{y} = \symbfit{g}(\symbfit{x}) \in \symbb{R}^{m}$ and $\symbfit{y}_{0} = \symbfit{g}(\symbfit{x}_{0}) \in \symbb{R}^{m}$.

We set $\ell = m = d \times 2$ and $\symbfit{x}_{0} = (\symbfit{r}, \symbfit{p}) \in \symbb{R}^{d \times 2}$. We define $\symbfit{g}: \symbb{R}^{d \times 2} \rightarrow \symbb{R}^{d \times 2}$ as 
\[
    \symbfit{g}(\symbfit{x})= \epsilon_{0, \sharp}(\symbfit{x}) = \left(
(\operatorname{Id} + \epsilon)(\symbfit{r}), (\symbb{1} + \nabla \epsilon(\symbfit{r}))^{-1} \symbfit{p} \right).
\]
Set also $f = \phi$ for $\phi \in \mathcal{S}(\mathbb{R}^{d \times 2})$. By plugging $\symbfit{g}$ and $f$ into \eqref{eq:expanded_faadibruno}, we obtain 
\begin{align}
    \eqref{eq:expanded_faadibruno} = C &\cdot \frac{\partial^{|\symbfit{\lambda}|}f}{\partial \hat{r}_{1}^{\lambda_{1}} \cdots \partial \hat{r}_{d}^{\lambda_{d}} \partial \hat{p}_{1}^{\lambda_{d+1}} \cdots \partial \hat{p}_{d}^{\lambda_{2d}}}((\operatorname{Id} + \epsilon)(\symbfit{r}), (\symbb{1} + \nabla \epsilon(\symbfit{r}))^{-1} \symbfit{p}) \\
    &\times \prod_{j=1}^{s} \left\{ \prod_{i=1}^{d} {\underbrace{\left(\frac{\partial^{|\symbfit{\ell}_{j}|}}{\partial r_{1}^{\ell_{j, 1}} \cdots \partial r_{d}^{\ell_{j, d}} \partial p_{1}^{\ell_{j, d+1}} \cdots \partial p_{d}^{\ell_{j, 2d}}} (\operatorname{Id} + \epsilon)(\symbfit{r})_{i}\right)}_{\textrm{(a)}}}^{k_{j, i}} \right. \\ 
    &\times \left. \prod_{k=d+1}^{2d} {\underbrace{\left(\frac{\partial^{|\symbfit{\ell}_{j}|}}{\partial r_{1}^{\ell_{j, 1}} \cdots \partial r_{d}^{\ell_{j, d}} \partial p_{1}^{\ell_{j, d+1}} \cdots \partial p_{d}^{\ell_{j, 2d}}} \left((\symbb{1} + \nabla \epsilon(\symbfit{r}))^{-1} \symbfit{p}\right)_{k-d}\right)}_{\textrm{(b)}}}^{k_{j, k}} \right\}.
\end{align}

Table \ref{tab:derivative_lift} summarizes the asymptotic behaviour of each of the derivatives (a) and (b) with respect to the phase space variables $\symbfit{r}$ and $\symbfit{p}$. From Table \ref{tab:derivative_lift}, we can conclude that \eqref{eq:expanded_faadibruno} is rapidly decreasing because the growth rate of the derivatives (a) and (b) are at most polynomial. In the following, we elaborate the derivation of the behaviour in Table \ref{tab:derivative_lift}; term (a) is bounded because if $\sum_{i=1}^{d}\ell_{j, i} > 0$, (a) becomes the sum of a constant function or the $n$-th order derivative of $\epsilon$ on a compact support,  or $0$ otherwise. 
Similarly, (b) is also bounded over the spatial coordinate $\symbfit{r} \in \symbb{R}^{d}$ when $\sum_{i=1}^{d}\ell_{j, i} \geq 0$. 
Finally, (b) multiplied with a rapidly decreasing function $f_{\symbfit{\lambda}}$ remains rapidly decreasing along $\symbfit{p}$-axis, because the product $\prod_{k=d+1}^{2d}$ of (b) will only involve a polynomial of $\symbfit{p}$ with at most finite order.

\begin{table}[h]
\centering
\caption{Asymptotic behaviour of the (arbitrary-order) derivative of the lift $\epsilon_{n, \sharp}$ with respect to the phase space variable.}
\label{tab:derivative_lift}
\begin{tabular}{c|c|c} 
\toprule
 & $\symbfit{r}$\text{-axis} & $\symbfit{p}$\text{-axis} \\ 
\midrule
\text{(a)} & \text{bounded} & \text{constant} \\  
\text{(b)} & \text{bounded} & \text{polynomial} \\  
\bottomrule
\end{tabular}
\end{table}

Recall that for arbitrary $N$ and $n$, lift $\epsilon_{n,\sharp}$ is defined as the $N$-fold direct product of the identity map, $(\operatorname{Id} + \epsilon)(\symbfit{r})$ and $(\symbb{1} + \nabla \epsilon(\symbfit{r}))^{-1}\,\symbfit{p}$. Therefore, the claim which was proven for the case of $N=1$ and $n=0$ can be naturally extended to the case for arbitrary $N$ and $n$. 
\end{proof}

The following definition gives a thermal average in terms of (tempered) distributions and rapidly decreasing functions. The definition and some basic properties of the Schwartz space and tempered distributions are reviewed in Appendix \ref{app:distribution}. 
The generalized reduced thermal average is an essential observable that describes macroscopic behaviour of microscopic states. 

\begin{definition}[Generalized thermal average] \label{def:thermal_avg_r} 
Let $n \in \{0, 1, \dots, N-1\}$. 
For a tempered distribution $u \in \symscr{S}^{\prime}(\symbb{R}^{d(N-n)\times2})$ on $\symbb{R}^{d(N-n)\times2}$, 
we define a $C(\symbb{R}^{dn \times 2})$-valued tempered distribution 
$K_{n}[u] \in \symscr{S}^{\prime}(\mathbb{R}^{dN \times 2} ; C(\symbb{R}^{dn \times 2}))$ by 
\[
\langle K_{n}[u] , \phi \rangle (\symbfit{r}^{n},\symbfit{p}^{n}) 
= \langle u, \phi(\symbfit{r}^{n} , \cdot , \symbfit{p}^{n} , \cdot) \rangle 
\]
for $\phi \in \symscr{S}(\mathbb{R}^{dN \times 2})$ and $(\symbfit{r}^{n},\symbfit{p}^{n}) \in \symbb{R}^{dn \times 2}$. 
We call the function $\langle K_{n}[u] , \phi \rangle \in C(\symbb{R}^{dn \times 2})$ the {\rm generalized reduced thermal average} of $u$ and $\phi$. 
\end{definition} 

The first example of the generalized thermal average is the case when the tempered distribution is realized through the integration of functions. 

\begin{example}
    One important case of the thermal average is when $u$ is the integration of a tempered continuous function $f$ on $\symbb{R}^{d(N-n) \times 2}$; see Definition \ref{def:tempered_function}. 
    In this case, the generalized thermal average of $u_{f}$ and $\phi \in \symscr{S}(\mathbb{R}^{dN \times 2})$ is given by 
    \[
    \left\langle K_{n}[u_{f}] , \phi \right\rangle (\symbfit{r}^{n}, \symbfit{p}^{n}) = \int_{\symbb{R}^{d(N-n) \times 2}} f(\symbfit{r}^{N}_{n}, \symbfit{p}^{N}_{n}) \phi(\symbfit{r}^{N}, \symbfit{p}^{N}) d\symbfit{r}^{N}_{n} d\symbfit{p}^{N}_{n} 
    \]
    for $(\symbfit{r}^{n}, \symbfit{p}^{n}) \in \symbb{R}^{dn \times 2}$. 
\end{example} 

The other important example is the Dirac delta functional. 

\begin{example} \label{ex:delta_function}   
    For $n \in \{0,1,\dots, N-1\}$ and $i \in \{n+1,\dots , N\}$, 
    the Dirac delta functional $\delta_{(\symbfit{r}_{i}=\symbfit{a})} \in \mathscr{S}'(\symbb{R}^{d(N-n) \times 2})$ on the subspace $\symbfit{r}_{i} = \symbfit{a}$ is defined by 
    \begin{align} 
    &\left\langle \delta_{(\symbfit{r}_{i}=\symbfit{a})} , \phi \right\rangle \\ 
    &\hspace{5mm}= 
    \int_{\symbb{R}^{d(N-n-1)} \times \symbb{R}^{d(N-n)}}\phi(\symbfit{r}_{n+1}, \dots , \symbfit{r}_{i-1} , \overset{i}{\check{\symbfit{a}}}, \symbfit{r}_{i+1} , \dots , \symbfit{r}_{N}, \symbfit{p}_{n}^{N}) d\symbfit{r}_{n+1} \cdots \widehat{d\symbfit{r}_{i}} \cdots d\symbfit{r}_{N} d\symbfit{p}_{n}^{N} \label{eq:delta_def}
    \end{align}
    for $\phi  \in \mathscr{S}(\symbb{R}^{d(N-n) \times 2})$. Here, by $\widehat{d\symbfit{r}_{i}}$ we mean $d\symbfit{r}_{i}$ is excluded. 
    Note that since the delta functional $\delta_{(\symbfit{r}_{i}=\symbfit{a})}$ can be extended to the integrable continuous functions, we often use the same notation $\left\langle \delta_{(\symbfit{r}_{i}=\symbfit{a})} , \phi \right\rangle$ for integrable continuous functions $\phi$ as well.
    The generalized thermal average of $\delta_{(\symbfit{r}_{i}=\symbfit{a})} \in \mathscr{S}'(\symbb{R}^{d(N-n) \times 2})$ and $\phi \in \symscr{S}(\mathbb{R}^{dN \times 2})$ is 
    \begin{align} 
    &\left\langle K_{n}\left[\delta_{(\symbfit{r}_{i}=\symbfit{a})}\right] , \phi \right\rangle (\symbfit{r}^{n}, \symbfit{p}^{n}) \\ 
    &\hspace{5mm}= \int_{\symbb{R}^{d(N-n-1)} \times \symbb{R}^{d(N-n)}} {\phi(\symbfit{r}^{n}, \symbfit{r}_{n+1}, \dots , \overset{i}{\check{\symbfit{a}}} , \dots , \symbfit{r}_{N},\symbfit{p}^{N})} d\symbfit{r}_{n+1} \cdots \widehat{d\symbfit{r}_{i}} \cdots d\symbfit{r}_{N} d\symbfit{p}_{n}^{N} 
    \end{align}  
    for $(\symbfit{r}^{n}, \symbfit{p}^{n}) \in \symbb{R}^{dn \times 2}$. 
    Note that this definition is a mathematical equivalent of that of the Dirac delta function $\delta(\symbfit{r}_{i} - \symbfit{a})$ adopted in physics as
    \[ 
    \left\langle K_{n}\left[\delta_{(\symbfit{r}_{i}=\symbfit{a})}\right] , \phi \right\rangle (\symbfit{r}^{n}, \symbfit{p}^{n}) 
    = 
    \int_{\symbb{R}^{d(N-n)} \times \symbb{R}^{d(N-n)}} \delta( \symbfit{r}_{i} - \symbfit{a} ){\phi(\symbfit{r}^{N},\symbfit{p}^{N})} d\symbfit{r}_{n}^{N} d\symbfit{p}_{n}^{N} 
    \]
    The Dirac delta functional plays a central role to derive useful observables such as the density function such as the one-body density distribution $\rho(\symbfit{r})$ as in Section \ref{sec:168}.  
\end{example}

We now define a functional over $\mathfrak{X}_{\operatorname{Id},0}(\symbb R^{d})$ for each $n \in \{0, 1, \dots, N-1\}$. We first define a functional on $\mathfrak{X}_{\operatorname{Id},0}(\symbb R^{d})$ for given $u \in \symscr{S}^{\prime}(\symbb{R}^{d(N-n)\times2})$ and $\phi \in \symscr{S}(\mathbb{R}^{dN \times 2})$ as follows: 
\[
\mathfrak{X}_{\operatorname{Id},0}(\symbb R^{d}) \rightarrow C\left(\symbb{R}^{dn\times2}\right), \quad 
\epsilon \mapsto \langle \epsilon_{n, \sharp}^{*}K_{n}[u] , \epsilon_{n, \sharp}^{*}\phi \rangle. 
\]
Here, $\epsilon_{n, \sharp}^{*}u$ is the composition of $u$ with $\epsilon_{n, \sharp} \in \operatorname{Diff}_{\symcal{\operatorname{Id}}, 0}(\symbb{R}^{dN \times 2})$, the definition of which is given in Definition \ref{def:composition}. 
Note that the generalized thermal average $\left\langle K_{n}[u], \phi \right\rangle$ is constant regardless of the phase space transformation by the diffeomorphism induced by any $\epsilon \in \mathfrak{X}_{\operatorname{Id},0}(\symbb R^{d})$: 
The definition of $\epsilon_{n, \sharp}^{\ast}$ immediately yields
$\left\langle \epsilon_{n, \sharp}^{*}K_{n}[u] , \epsilon_{n, \sharp}^{*}\phi \right\rangle = \left\langle K_{n}[u] , \phi \right\rangle$ 
for any $\epsilon \in \mathfrak{X}_{\operatorname{Id},0}(\symbb{R}^{d})$.

The property of the functional yields a corresponding property of the derivative:
For any compactly supported vector field $\epsilon \in \symfrak{X}_{0}(\symbb{R}^{d})$ 
and $t \in \symbb{R}$ such that $|t| < \left( \sup_{\symbfit{r} \in \symbb{R}^{d}} \| \epsilon'(\symbfit{r}) \| \right)^{-1}$, 
we have $t\epsilon \in \symfrak{X}_{\operatorname{Id},0}(\symbb{R}^{d})$, and the map 
\begin{equation}
t \mapsto \left\langle (t\epsilon)_{n, \sharp}^{*}K_{n}[u] , (t\epsilon)_{n, \sharp}^{*}\phi \right\rangle
\end{equation}
is well-defined near $t=0$. 
Since the map is constant near $t = 0$ for any $\epsilon \in \mathfrak{X}_{0}(\symbb{R}^{d})$, the derivative of the map at $t = 0$ turns out to also vanish for all $\epsilon \in \mathfrak{X}_{0}(\symbb{R}^{d})$. We formalize the discussion above as a definition and lemma as follows.

\begin{definition} 
\label{def:directional_hyperforce_sum} 
Let $u \in \symscr{S}^{\prime}(\symbb{R}^{d(N-n)\times2})$ and $\phi \in \symscr{S}(\mathbb{R}^{dN \times 2})$. 
We call a functional 
\[
F^{(n)}[u, \phi]: \mathfrak{X}_{0}(\symbb R^{d}) \rightarrow C\left(\symbb{R}^{dn\times2}\right), 
\quad 
\epsilon \mapsto \left. \diff{}{t} \right|_{t=0} \left\langle (t\epsilon)_{n, \sharp}^{*}K_{n}[u] , (t\epsilon)_{n, \sharp}^{*}\phi \right\rangle 
\] 
the \textup{equilibrium distributional hyperforce sum of $u$ and $\phi$} on $\symbb{R}^{d(N-n) \times 2}$.
\end{definition}

\begin{lemma} \label{lem:vanishing_gateaux}
For $u \in \symscr{S}^{\prime}(\symbb{R}^{d(N-n)\times2})$ and $\phi \in \symscr{S}(\mathbb{R}^{dN \times 2})$, 
the equilibrium distributional hyperforce sum $F^{(n)}[u, \phi]$ of $u$ and $\phi$ vanishes on $\mathfrak{X}_{0}(\symbb R^{d})$. We call this vanishing property of $F^{(n)}[u, \phi]$ the \textup{equilibrium distributional hyperforce sum rule}.
\end{lemma}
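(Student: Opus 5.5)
The plan is to show that the map $t \mapsto \langle (t\epsilon)_{n,\sharp}^{*}K_{n}[u], (t\epsilon)_{n,\sharp}^{*}\phi\rangle$ is well defined and constant on a neighbourhood of $t=0$. Its derivative at $t=0$, which is $F^{(n)}[u,\phi][\epsilon]$ by Definition \ref{def:directional_hyperforce_sum}, then vanishes pointwise in $(\symbfit{r}^{n},\symbfit{p}^{n})$.

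\textbf{Step 1: well-definedness near $t=0$.} Fix $\epsilon \in \symfrak{X}_{0}(\symbb{R}^{d})$ and set $M=\sup_{\symbfit{r}}\|\epsilon'(\symbfit{r})\|$, which is finite because $\epsilon$ is smooth with compact support. For $|t|<M^{-1}$ we have $\|t\epsilon'\|<1$. Hence $\mathrm{Id}+t\epsilon$ is a local diffeomorphism, injective (since $|t\epsilon(\symbfit{r})-t\epsilon(\symbfit{s})|<|\symbfit{r}-\symbfit{s}|$) and proper (it equals the identity outside a compact set), so it is a diffeomorphism. This gives $t\epsilon\in\symfrak{X}_{\mathrm{Id},0}(\symbb{R}^{d})$, and the inverse lies in the same class by Lemma \ref{lem:diffeo}.

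By Theorem \ref{thm:continous_pullback}, $(t\epsilon)_{n,\sharp}^{*}\phi\in\symscr{S}(\symbb{R}^{dN\times2})$. The lift of the inverse diffeomorphism is the inverse of the lift, because $(\symbb{1}+\nabla\epsilon)^{-1}$ composes correctly under the chain rule. Therefore $(t\epsilon)_{n,\sharp}^{*}$ is a bijection of $\symscr{S}$, and the pullback of the distribution $K_{n}[u]$ (Definition \ref{def:composition}) is defined by pairing against $\phi\circ(t\epsilon)_{n,\sharp}^{-1}$.

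\textbf{Step 2: invariance.} Unwind the definition of the pullback:
\[
\langle (t\epsilon)_{n,\sharp}^{*}K_{n}[u],(t\epsilon)_{n,\sharp}^{*}\phi\rangle
=\langle K_{n}[u],\phi\circ(t\epsilon)_{n,\sharp}\circ(t\epsilon)_{n,\sharp}^{-1}\rangle
=\langle K_{n}[u],\phi\rangle .
\]
The right-hand side is independent of $t$. This identity holds in $C(\symbb{R}^{dn\times2})$ pointwise for every $(\symbfit{r}^{n},\symbfit{p}^{n})$, since $(t\epsilon)_{n,\sharp}$ acts trivially on the first $n$ coordinates. In other words, the evaluation at fixed $(\symbfit{r}^{n},\symbfit{p}^{n})$ commutes with the pullback.

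\textbf{Step 3: the derivative vanishes.} A function that is constant on the interval $(-M^{-1},M^{-1})$ has derivative zero at $t=0$. Hence $F^{(n)}[u,\phi][\epsilon]=0$ for every $\epsilon\in\symfrak{X}_{0}(\symbb{R}^{d})$.

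The main obstacle is the bookkeeping in Steps 1 and 2. One must check that the pullback of the $C(\symbb{R}^{dn\times2})$-valued distribution is compatible with the slicing in Definition \ref{def:thermal_avg_r}. Concretely, $(\phi\circ(t\epsilon)_{n,\sharp})(\symbfit{r}^{n},\cdot,\symbfit{p}^{n},\cdot)$ must equal the slice of $\phi$ composed with the lift on $\symbb{R}^{d(N-n)\times2}$. This holds because the lift is a direct product that fixes the first $n$ particle coordinates. I would verify it directly from the explicit formula \eqref{eq:lift}.
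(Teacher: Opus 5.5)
Your proof is correct and follows the paper's argument. The identity $\langle (t\epsilon)_{n,\sharp}^{*}K_{n}[u],(t\epsilon)_{n,\sharp}^{*}\phi\rangle=\langle K_{n}[u],\phi\rangle$ comes straight from the definition of the pullback, the map is therefore constant for $|t|<(\sup_{\symbfit{r}}\|\epsilon'(\symbfit{r})\|)^{-1}$, and so its derivative at $t=0$ vanishes. Your Step 1 only fills in details the paper leaves implicit, namely that $\mathrm{Id}+t\epsilon$ is a diffeomorphism and that $\phi\circ(t\epsilon)_{n,\sharp}^{-1}$ is Schwartz via the inverse lift, and the slicing compatibility you flag is not actually needed because the identity already holds at the level of the $C(\symbb{R}^{dn\times 2})$-valued pairing.
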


In the following, we give an alternative expression of $F^{(n)}[u, \phi]$ in Theorem \ref{thm:hyperforce_sum_rewrite}, in a way that it does not involve the derivative with respect to $t$. The derivation makes use of the product rule of the derivative for pairing of tempered distributions. Although the result of Lemma \ref{lem:vanishing_gateaux} is at first glance not indicative from a mathematical point of view and this derivation sounds redundant, it turns out that this seemingly trivial vanishing rule involves a class of physically significant phenomena; Indeed, when $f$ is a tempered smooth function and $\phi$ is the Boltzmann distribution, both the hyperforce sum rule \cite{168} and BBGKY hierarchy \cite{hansen2013theory} can be restored from the equilibrium distributional hyperforce sum rule. Corollary \ref{cor:derivation_hyperforce_sum_rule} and \ref{cor:generalized_168} will give the demonstration and it relies on Theorem \ref{thm:hyperforce_sum_rewrite}. 

We begin recasting the expression of $F^{(n)}[u, \phi]$ with applying the product rule of distributions to the $F^{(n)}[u, \phi]$; Let $u_{t} = (t \epsilon)_{n, \sharp}^{*}K_{n}[u]$ and $\phi_{t} = (t \epsilon)_{n, \sharp}^{*}\phi$ for $t \in \mathbb{R}$. 
By the product rule in Theorem \ref{thm:product_rule}, 
we get the following expression:  
\begin{equation}
\left. \diff{}{t} \right|_{t=0} \left\langle u_{t} , \phi_{t} \right\rangle 
= \left\langle \left. \diff{u_{t}}{t} \right|_{t=0} , \phi \right\rangle + \left\langle u, \left. \diff{\phi_{t}}{t} \right|_{t=0}\right\rangle. 
\label{eq:product_vanish_hyperforce}
\end{equation} 
In order to calculate the right hand side of equation \eqref{eq:product_vanish_hyperforce}, 
we note that 
\begin{align}
\left. \diff{\phi_{t}}{t} \right|_{t=0} \left(\symbfit{r}^{N} , \symbfit{p}^{N} \right) 
&= \left. \diff{}{t} \right|_{t=0} \phi \circ (t\epsilon)_{n,\sharp}\left(\symbfit{r}^{N} , \symbfit{p}^{N} \right)  \\ 
&= \sum_{i=n+1}^{N} 
\left( (\nabla_{\symbfit{r}_{i}}\phi)^{\operatorname{T}}  \epsilon(\symbfit{r}_{i}) 
- (\nabla_{\symbfit{p}_{i}}\phi)^{\operatorname{T}}(\nabla \epsilon)(\symbfit{r}_{i}) \cdot \symbfit{p}_{i} \right). \label{eq:derivative_rapidly_decreasing}
\end{align}
Here, we denote
\begin{equation}
\nabla_{\symbfit{r}_{i}}\phi 
= 
\begin{bmatrix}
\diffp{\phi}{r_{i}^{1}}, \diffp{\phi}{r_{i}^{2}}, \dots , \diffp{\phi}{r_{i}^{d}}  
\end{bmatrix}^{\operatorname{T}} 
\quad \text{and} \quad 
\nabla_{\symbfit{p}_{i}}\phi 
= 
\begin{bmatrix}
\diffp{\phi}{p_{i}^{1}}, \diffp{\phi}{p_{i}^{2}}, \dots , \diffp{\phi}{p_{i}^{d}}  
\end{bmatrix}^{\operatorname{T}}. 
\end{equation}
We also introduce maps $\delta_{i}, \hat{\delta}_{i} \colon \symfrak{X}_{0}(\symbb{R}^{d}) \to C(\symbb{R}^{dN \times 2} , \symbb{R}^{d})$
for $i = n+1, \dots , N$ 
defined by 
\begin{equation}
\delta_{i}(\epsilon)\left(\symbfit{r}^{N} , \symbfit{p}^{N} \right) 
= \epsilon (\symbfit{r}_{i}) 
\quad \text{and} \quad 
\hat{\delta}_{i}(\epsilon)\left(\symbfit{r}^{N} , \symbfit{p}^{N} \right) 
= (\nabla \epsilon)(\symbfit{r}_{i}) \cdot \symbfit{p}_{i}, 
\quad \text{for } \epsilon \in \symfrak{X}_{0}(\symbb{R}^{d}).
\end{equation} 
Then, the equation \eqref{eq:derivative_rapidly_decreasing} may be rewritten as
\begin{equation}
\left. \diff{}{t} \right|_{t=0} \phi \circ (t\epsilon)_{n,\sharp}
= 
\sum_{i=n+1}^{N} 
\left( (\nabla_{\symbfit{r}_{i}}\phi)^{\operatorname{T}} \delta_{i}(\epsilon) - (\nabla_{\symbfit{p}_{i}}\phi)^{\operatorname{T}} \hat{\delta}_{i}(\epsilon) 
\right). 
\end{equation}
Similarly, we have 
\begin{equation}
\left. \diff{}{t} \right|_{t=0} \phi \circ (t\epsilon)_{n,\sharp}^{-1}
= 
\sum_{i=n+1}^{N} 
\left( 
-(\nabla_{\symbfit{r}_{i}}\phi)^{\operatorname{T}}  \delta_{i}(\epsilon) + (\nabla_{\symbfit{p}_{i}}\phi)^{\operatorname{T}}\hat{\delta}_{i}(\epsilon) 
\right). 
\end{equation}
Under the above notation, the first term of the right hand side of equation \eqref{eq:product_vanish_hyperforce} is further expanded as follows:
\begin{align} 
\left\langle \left. \diff{u_{t}}{t} \right|_{t=0} , \phi \right\rangle
&= 
\left\langle \left. \diff{}{t} \right|_{t=0} (t \epsilon)_{n, \sharp}^{*}K_{n}[u], \phi \right\rangle   
= \lim_{h \rightarrow 0} \left\langle \frac{(h\epsilon)_{n, \sharp}^{*}K_{n}[u] -  K_{n}[u]}{h}, \phi \right\rangle  \\
&= \lim_{h \rightarrow 0} \left\langle K_{n}[u],  \frac{ \phi \circ (h \epsilon)_{n, \sharp}^{-1} - \phi}{h} \right\rangle  
= \left\langle K_{n}[u], \left. \diff{}{t} \right|_{t=0} \phi \circ (t\epsilon)_{n, \sharp}^{-1} \right\rangle \\ 
&= \sum_{i=n+1}^{N}\left\langle K_{n}[u],  - (\nabla_{\symbfit{r}_{i}}\phi)^{\operatorname{T}} \delta_{i}(\epsilon) + (\nabla_{\symbfit{p}_{i}}\phi)^{\operatorname{T}} \hat{\delta}_{i}(\epsilon) \right\rangle.  
\end{align}
Similarly, the second term is rewritten as 
\begin{align}
\left\langle u, \left. \frac{d\phi_{t}}{dt} \right|_{t=0}\right\rangle 
&= \left\langle K_{n}[u], \left. \frac{d}{dt} \right|_{t=0} (t \epsilon)_{n, \sharp}^{*}\phi \right\rangle \\ 
&= \sum_{i=n+1}^{N}\left\langle K_{n}[u],  (\nabla_{\symbfit{r}_{i}}\phi)^{\operatorname{T}} \delta_{i}(\epsilon) - (\nabla_{\symbfit{p}_{i}}\phi)^{\operatorname{T}} \hat{\delta}_{i}(\epsilon) 
 \right\rangle. 
\end{align}
Here, we define the \textit{infinitesimal phase shifting operator} 
\begin{equation}
D_{i}(\epsilon) \colon 
\symscr{S}(\symbb{R}^{dN \times 2}) \to \symscr{S}(\symbb{R}^{dN \times 2}) 
\end{equation} 
by 
\begin{equation}
D_{i}(\epsilon)\phi 
= 
\delta_{i}(\epsilon)^{\operatorname{T}} \nabla_{\symbfit{r}_i}\phi 
- \hat{\delta}_{i}(\epsilon)^{\operatorname{T}} \nabla_{\symbfit{p}_{i}}\phi
\end{equation}
for $\phi \in \symscr{S}(\symbb{R}^{dN \times 2})$. 
By using the product rule for derivative, the first term of the shifting operator on the right hand side is rewritten as 
\begin{align}
\delta_{i}(\epsilon)^{\operatorname{T}} \left( \nabla_{\symbfit{r}_{i}}\phi \right)
&= 
\sum_{k=1}^{d} \delta_{i}(\epsilon)^{(k)} \diffp{\phi}{r_{i}^{k}} \\ 
&= 
\sum_{k=1}^{d} \left( \diffp{}{r_{i}^{k}} \left( \delta_{i}(\epsilon)^{(k)}\phi \right) - \diffp{\delta_{i}(\epsilon)^{(k)}}{r_{i}^{k}}\phi  \right). 
\end{align}
Similarly, by the product rule and the fact that $\hat{\delta}_{i}(\epsilon)$ is constant function on $\symbfit{p}_{i}$, the second term is expressed as 
\begin{align}
\hat{\delta}_{i}(\epsilon)^{\operatorname{T}} (\nabla_{\symbfit{p}_{i}}\phi) 
&= 
\sum_{k=1}^{d} \hat{\delta}_{i}(\epsilon)^{(k)} \diffp{\phi}{p_{i}^{k}} \\ 
&= 
\sum_{k=1}^{d} \left( \diffp{}{p_{i}^{k}} \left( \hat{\delta}_{i}(\epsilon)^{(k)}\phi \right) - \diffp{\hat{\delta}_{i}(\epsilon)^{(k)}}{p_{i}^{k}}\phi  \right) \\ 
&= 
\sum_{k=1}^{d} \left( \diffp{}{p_{i}^{k}} \left( \hat{\delta}_{i}(\epsilon)^{(k)}\phi \right) - \diffp{\delta_{i}(\epsilon)^{(k)}}{r_{i}^{k}}\phi  \right). 
\end{align} 
Therefore, for any $u \in \symscr{S}^{\prime}(\symbb{R}^{d(N-n) \times 2})$ and $(\symbfit{r}_{n},\symbfit{p}_{n}) \in \symbb{R}^{dn \times 2}$, we have 
\begin{align}
\langle u , D_{i}(\epsilon)\phi (\symbfit{r}_{n}, \cdot , \symbfit{p}_{n}, \cdot) \rangle 
&= 
\sum_{k=1}^{d} \left\langle u , \left( \diffp{}{r_{i}^{k}} \left( \delta_{i}(\epsilon)^{(k)}\phi \right) 
- \diffp{}{p_{i}^{k}} \left( \hat{\delta}_{i}(\epsilon)^{(k)}\phi \right) \right)(\symbfit{r}_{n}, \cdot , \symbfit{p}_{n}, \cdot) 
\right\rangle \\ 
&= 
-\sum_{k=1}^{d} \left\langle  \delta_{i}(\epsilon)^{(k)}\diffp{u}{r_{i}^{k}} - \hat{\delta}_{i}(\epsilon)^{(k)}\diffp{u}{p_{i}^{k}},  \phi  (\symbfit{r}_{n}, \cdot , \symbfit{p}_{n}, \cdot) 
\right\rangle \\ 
&= 
-\left\langle \left(\delta_{i}(\epsilon)^{\operatorname{T}} \nabla_{\symbfit{r}_i}
- \hat{\delta}_{i}(\epsilon)^{\operatorname{T}} \nabla_{\symbfit{p}_{i}} \right) u , \phi (\symbfit{r}_{n}, \cdot , \symbfit{p}_{n}, \cdot) \right\rangle. 
\end{align} 
Thus the operator $D_{i}(\epsilon)$ can be extended on $\symscr{S}^{\prime}(\symbb{R}^{d(N-n) \times 2})$ as 
\begin{equation} 
\left\langle D_{i}(\epsilon)u , \phi(\symbfit{r}_{n}, \cdot , \symbfit{p}_{n}, \cdot) \right\rangle 
= 
- \left\langle u , D_{i}(\epsilon)\phi (\symbfit{r}_{n}, \cdot , \symbfit{p}_{n}, \cdot) \right\rangle 
\end{equation}
for 
$u \in \symscr{S}^{\prime}(\symbb{R}^{d(N-n) \times 2})$ and $\phi \in \symscr{S}(\symbb{R}^{dN \times 2})$, 
and we obtain
\begin{equation}
\left\langle \left. \diff{u_{t}}{t} \right|_{t=0} , \phi \right\rangle 
= \langle K_{n}[D_{i}(\epsilon)u] , \phi \rangle. 
\end{equation}
From the above discussion, we obtain the following result:

\begin{theorem} \label{thm:hyperforce_sum_rewrite}
For $u \in \symscr{S}^{\prime}(\symbb{R}^{d(N-n)\times2})$ and $\phi \in \symscr{S}(\symbb{R}^{dN\times2})$, we define the \textup{localized hyperforce of $u$ and $\phi$ centered at $i$} to be a continuous-function-valued map 
$F^{(n)}_{i}[u, \phi]: \mathfrak{X}_{0}(\mathbb{R}^{d}) \rightarrow C\left(\symbb{R}^{dn\times2}\right)$ as: 
\begin{align}
F_{i}^{(n)}[u,\phi][\epsilon] = \left\langle K_{n}[D_{i}(\epsilon)u] , \phi \right\rangle + \left\langle K_{n}[u] , D_{i}(\epsilon)\phi \right\rangle. 
\end{align}
Then, the sum of the localized hyperforce coincides with the equilibrium distributional hyperforce sum, i.e., 
\begin{align}
    F^{(n)}[u, \phi] = \sum_{i=n+1}^{N}F_{i}^{(n)}[u, \phi].
\end{align}
\end{theorem}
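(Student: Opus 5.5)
The plan is to start from Definition \ref{def:directional_hyperforce_sum} and apply the product rule of Theorem \ref{thm:product_rule} to the one-parameter families $u_{t} = (t\epsilon)_{n,\sharp}^{*}K_{n}[u]$ and $\phi_{t} = (t\epsilon)_{n,\sharp}^{*}\phi$. This is legitimate for $|t|$ small, since then $t\epsilon \in \mathfrak{X}_{\operatorname{Id},0}(\symbb{R}^{d})$ and Theorem \ref{thm:continous_pullback} keeps $\phi_{t}$ in $\symscr{S}(\symbb{R}^{dN\times2})$. The product rule splits $F^{(n)}[u,\phi][\epsilon]$ into the two terms of \eqref{eq:product_vanish_hyperforce}. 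Each term will then be identified with a sum over $i=n+1,\dots,N$, which gives the claimed decomposition.

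The second term comes first because it is the easier one. We have $(t\epsilon)_{n,\sharp}^{*}\phi = \phi\circ(t\epsilon)_{n,\sharp}$, and the chain rule together with $\frac{d}{dt}|_{t=0}(\symbb{1}+t\nabla\epsilon)^{-1} = -\nabla\epsilon$ gives \eqref{eq:derivative_rapidly_decreasing}. That is,
\[
\left.\diff{\phi_{t}}{t}\right|_{t=0} = \sum_{i=n+1}^{N} D_{i}(\epsilon)\phi .
\]
Since $K_{n}[u]$ acts linearly and pointwise in $(\symbfit{r}^{n},\symbfit{p}^{n})$, the second term equals $\sum_{i}\langle K_{n}[u], D_{i}(\epsilon)\phi\rangle$.

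For the first term we use the definition of the pullback, $\langle \epsilon^{*}_{n,\sharp}K_{n}[u],\phi\rangle = \langle K_{n}[u],\phi\circ\epsilon_{n,\sharp}^{-1}\rangle$. Writing the $t$-derivative as a limit of difference quotients, we move the limit inside the pairing and obtain $\langle K_{n}[u], \frac{d}{dt}|_{t=0}\phi\circ(t\epsilon)^{-1}_{n,\sharp}\rangle = -\sum_{i}\langle K_{n}[u], D_{i}(\epsilon)\phi\rangle$. By the definition of the extension of $D_{i}(\epsilon)$ to $\symscr{S}'$, namely $\langle D_{i}(\epsilon)u,\psi\rangle = -\langle u, D_{i}(\epsilon)\psi\rangle$ applied with $\psi=\phi(\symbfit{r}^{n},\cdot,\symbfit{p}^{n},\cdot)$, this equals $\sum_{i}\langle K_{n}[D_{i}(\epsilon)u],\phi\rangle$. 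For this we need $D_{i}(\epsilon)$ to commute with freezing the first $n$ variables. That holds because $i\ge n+1$, so $D_{i}(\epsilon)$ only differentiates in $(\symbfit{r}_{i},\symbfit{p}_{i})$ and has coefficients depending only on those variables. Adding the two sums term by term gives $F^{(n)}[u,\phi][\epsilon]=\sum_{i=n+1}^{N}F^{(n)}_{i}[u,\phi][\epsilon]$.

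The main obstacle is justifying the interchange of limit and pairing in the first term. This requires that the difference quotients $h^{-1}(\phi\circ(h\epsilon)^{-1}_{n,\sharp}-\phi)$ converge to $-\sum_{i}D_{i}(\epsilon)\phi$ in the topology of $\symscr{S}(\symbb{R}^{dN\times2})$, uniformly enough in the frozen variables $(\symbfit{r}^{n},\symbfit{p}^{n})$ to give $C(\symbb{R}^{dn\times2})$-valued convergence. I would first prove this by Taylor's formula with integral remainder in $h$. The plan is to bound every Schwartz seminorm of the remainder by $|h|$ times seminorms of second derivatives of $\phi$, with polynomial weights in $\symbfit{p}$. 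This uses the same bookkeeping as Table \ref{tab:derivative_lift}: the $\symbfit{r}$-dependence of the lift is compactly supported with bounded derivatives, and the $\symbfit{p}$-dependence is linear. The same estimate is what Theorem \ref{thm:product_rule} needs as its hypothesis on $\phi_{t}$, so both steps are covered by one argument.
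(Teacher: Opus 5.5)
Your proposal is correct and follows the paper's own derivation. The shared steps are: the product rule (Theorem~\ref{thm:product_rule}) applied to $u_t=(t\epsilon)_{n,\sharp}^{*}K_n[u]$ and $\phi_t=(t\epsilon)_{n,\sharp}^{*}\phi$; the chain-rule identity $\left.\frac{d\phi_t}{dt}\right|_{t=0}=\sum_{i=n+1}^{N}D_i(\epsilon)\phi$; and rewriting the first term, via the pullback definition and the extension $\langle D_i(\epsilon)u,\psi\rangle=-\langle u,D_i(\epsilon)\psi\rangle$, as $\sum_{i}\langle K_n[D_i(\epsilon)u],\phi\rangle$. The points you flag are exactly the steps the paper takes for granted: convergence of the difference quotients in $\mathscr{S}$ (which also supplies the hypotheses of the product rule), and compatibility of $D_i(\epsilon)$ with freezing the first $n$ variables, which holds since $i\ge n+1$. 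Your plan is therefore a slightly more careful version of the same argument.
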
 

By Lemma \ref{lem:vanishing_gateaux}, we have the following. 

\begin{corollary}[Equilibrium distributional hyperforce sum rule] \label{cor:hyperforcesum} The equilibrium distributional hyperforce sum rule is equivalent to 
\[
F_{i}^{(n)}[u, \phi] = 0, \text{\quad for any $i$}.
\]
\end{corollary}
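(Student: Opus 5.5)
The plan is to prove the two implications separately. The direction ``$F^{(n)}_{i}[u,\phi]=0$ for every $i$ $\Rightarrow$ sum rule'' is immediate. By Theorem \ref{thm:hyperforce_sum_rewrite}, $F^{(n)}[u,\phi][\epsilon]=\sum_{i=n+1}^{N}F^{(n)}_{i}[u,\phi][\epsilon]$ for every $\epsilon\in\mathfrak{X}_{0}(\mathbb{R}^{d})$. If every summand vanishes, so does $F^{(n)}[u,\phi]$, which is the equilibrium distributional hyperforce sum rule of Lemma \ref{lem:vanishing_gateaux}.

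For the converse, the sum rule only controls $\sum_i F^{(n)}_i$, so I would isolate a single index. The idea is to rerun the invariance argument behind Lemma \ref{lem:vanishing_gateaux} with a lift that moves only particle $i$. For fixed $i\in\{n+1,\dots,N\}$ and $\epsilon\in\mathfrak{X}_{\operatorname{Id},0}(\mathbb{R}^{d})$, I would define $\epsilon_{(i),\sharp}$ on $\mathbb{R}^{dN\times 2}$ as in \eqref{eq:lift}. It acts by $(\operatorname{Id}+\epsilon)$ on $\symbfit{r}_i$ and by $(\symbb{1}+\nabla\epsilon(\symbfit{r}_i))^{-1}$ on $\symbfit{p}_i$, and as the identity on every other coordinate.

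The steps for this direction are as follows.
\begin{enumerate}
\item The proof of Theorem \ref{thm:continous_pullback} is a one-particle computation extended by taking direct products. It therefore shows verbatim that $\epsilon_{(i),\sharp}^{*}$ preserves $\symscr{S}(\mathbb{R}^{dN\times2})$, and that pullback of $K_n[u]$ by it is well defined.
\item As before, $\langle \epsilon_{(i),\sharp}^{*}K_n[u],\epsilon_{(i),\sharp}^{*}\phi\rangle=\langle K_n[u],\phi\rangle$. Hence $t\mapsto \langle (t\epsilon)_{(i),\sharp}^{*}K_n[u],(t\epsilon)_{(i),\sharp}^{*}\phi\rangle$ is constant for small $|t|$, and its derivative at $t=0$ vanishes.
\item Apply the product rule (Theorem \ref{thm:product_rule}) to this derivative and repeat the computation preceding Theorem \ref{thm:hyperforce_sum_rewrite}. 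Since only particle $i$ is moved, the sums over $n+1,\dots,N$ in \eqref{eq:derivative_rapidly_decreasing} and in the expansion of $\langle \frac{d u_t}{dt}|_{t=0},\phi\rangle$ collapse to the single index $i$. This yields that the derivative equals $\langle K_n[D_i(\epsilon)u],\phi\rangle+\langle K_n[u],D_i(\epsilon)\phi\rangle = F^{(n)}_i[u,\phi][\epsilon]$.
\item Combining steps 2 and 3 gives $F^{(n)}_i[u,\phi]=0$ on $\mathfrak{X}_{0}(\mathbb{R}^{d})$, since $t\epsilon\in\mathfrak{X}_{\operatorname{Id},0}(\mathbb{R}^{d})$ for small $|t|$. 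Hence the sum rule, applied in its single-particle form, forces each localized term to vanish.
\end{enumerate}

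As a consistency check, the defining identity of the extension $D_i(\epsilon)$ on $\symscr{S}'$ gives $\langle K_n[D_i(\epsilon)u],\phi\rangle=-\langle K_n[u],D_i(\epsilon)\phi\rangle$ directly. This agrees with the termwise vanishing obtained above.

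The main obstacle is step 3, namely justifying the exchange of $\frac{d}{dt}$ with the pairing for the single-particle lift. This requires that the difference quotients $(\phi\circ(h\epsilon)_{(i),\sharp}^{\mp1}-\phi)/h$ converge in the topology of $\symscr{S}$, uniformly in the frozen variables $(\symbfit{r}^n,\symbfit{p}^n)$. I would handle this exactly as in the proof of Theorem \ref{thm:continous_pullback}. Using Faà di Bruno's formula and Table \ref{tab:derivative_lift}, every derivative of the remainder is bounded by $h$ times a polynomial in $\symbfit{p}_i$ multiplied by derivatives of $\phi$, since $\epsilon$ is compactly supported. These remainder bounds then feed into Theorem \ref{thm:product_rule}.
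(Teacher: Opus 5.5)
Your proposal is correct, but your forward direction takes a different route from the paper's. The paper gives no separate argument. It states the corollary right after Theorem \ref{thm:hyperforce_sum_rewrite} with ``By Lemma \ref{lem:vanishing_gateaux}'', which, read literally, only yields $\sum_{i=n+1}^{N}F^{(n)}_{i}[u,\phi]=0$.

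Termwise vanishing is in fact built into the construction. The extension of $D_{i}(\epsilon)$ to $\symscr{S}'$ is \emph{defined} by $\langle D_{i}(\epsilon)u,\psi\rangle=-\langle u,D_{i}(\epsilon)\psi\rangle$. This definition is legitimized by the identity $\partial\hat{\delta}_{i}(\epsilon)^{(k)}/\partial p_{i}^{k}=\partial\delta_{i}(\epsilon)^{(k)}/\partial r_{i}^{k}$, i.e.\ the generator of the lift is divergence-free. Since $D_{i}(\epsilon)$ only differentiates in $(\symbfit{r}_{i},\symbfit{p}_{i})$ with $i\geq n+1$, it commutes with freezing $(\symbfit{r}^{n},\symbfit{p}^{n})$. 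Hence $F^{(n)}_{i}[u,\phi][\epsilon]=0$ holds identically for every $i$ and every $\epsilon$.

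So the identity you relegate to a ``consistency check'' is already a complete one-line proof of $F^{(n)}_{i}=0$, unconditionally, and the equivalence follows at once. It bypasses steps 1--4 entirely, including the differentiability in $\symscr{S}$ of $t\mapsto\phi\circ(t\epsilon)_{(i),\sharp}^{\pm1}$ that you flag as the main obstacle.

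Your single-particle-lift argument is nevertheless sound. The product construction in the proof of Theorem \ref{thm:continous_pullback} covers a lift moving only particle $i$; its inverse is the lift of $\operatorname{Id}+\epsilon_{-1}$. The constancy in step 2 is definitional, and the product-rule computation does collapse to the single index $i$. When carried out, though, it literally produces $-\langle K_{n}[u],D_{i}(\epsilon)\phi\rangle+\langle K_{n}[u],D_{i}(\epsilon)\phi\rangle$, so it reproduces the anti-symmetry rather than adding information.

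What your route buys is conceptual. It exhibits each localized hyperforce as the derivative of its own invariant functional, a per-particle Noether statement, which explains why the sum rule splits. It also squarely addresses the fact that vanishing of the sum alone cannot force termwise vanishing, a point the paper's one-line citation glosses over.
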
 

The result is a generalization of the equilibrium hyperforce sum rule derived in \cite{168}. We will show that in Corollary \ref{cor:derivation_hyperforce_sum_rule} the equilibrium distributional hyperforce sum rule yields the sum rule of \cite{168} and BBGKY hierarchy.

\begin{remark} 
The arguments in this section can be generalized to 
a Banach-space-valued tempered distribution. While such generalization could be useful for applications such as giving another derivation of the Boltzmann equation based on \cite{gerasimenko2013}, we omit the details since it is out of the scope of this paper. 
\end{remark}

\section{Application}
\subsection{Physical interpretation of distributional hyperforce sum formula} \label{sec:physical_hyperforce}
In this subsection, we show that when $u$ in the localized hyperforce $F_{i}^{(n)}[u, \phi]$ is the integration $u_{f}$ of some continuous function $f$, the hyperforce sum is identically zero as a function-valued functional. We also show that the result is consistent with the original result referred as the hyperforce sum rule in \cite{168}. 

For a $\symbb{R} \cup \{\infty\}$-valued function $H$, we define ``singular'' points of $H$ as 
\[
\Delta_{H} 
= \{ (\symbfit{r}^N, \symbfit{p}^N) \in \mathbb{R}^{dN \times 2} \,;\,  H(\symbfit{r}^N, \symbfit{p}^N) = \infty \}.
\]
\begin{definition} \label{def:real_hamiltonian}  
Let $H \in C(\symbb{R}^{dN \times 2}; \symbb{R} \cup \{\infty\})$. 
$H$ is called \textrm{Hamiltonian} when $e^{-H} \in \symscr{S}(\mathbb{R}^{dN \times 2})$. We denote the set of Hamiltonian functions by $\symcal{H}(\symbb{R}^{dN \times 2})$.
\end{definition}
Note that the value of $H \in \mathcal{H}(\symbb{R}^{dN \times 2})$ is the extended real numbers and the continuity at $\symbfit{x}_{0} \in \symbb{R}^{dN \times 2}$ of $H$ is defined as 
\[ 
\lim_{\symbfit{x} \rightarrow \symbfit{x}_{0}} H(\symbfit{x}) = H(\symbfit{x}_{0}). 
\]

The definition indicates that the derivative of $e^{-H}$ is continuously extended to the full domain $\symbb{R}^{dN \times 2}$ regardless that $H$ takes $\infty$ on $\Delta_{H}$. This formulation is crucial to include representative Hamiltonian functions in physical systems, such as repulsive potentials inversely proportional to the spatial distance, as we will detail in the following Example \ref{ex:hamiltonian}. 

\begin{example} \label{ex:hamiltonian}
Our definition of $\symcal{H}(\symbb{R}^{dN \times 2})$ includes Hamiltonian that consists of three contributions: the kinetic energy, an interparticle interaction potential, and an external one-body potential.
\[H(\symbfit{r}^N, \symbfit{p}^N) = \sum_{i=1}^{N} \frac{\symbfit{p}_{i} \symbfit{p}_{i}^{\operatorname{T}}}{2m_{i}}  + u_{N}(\symbfit{r}^N) + \sum_{i=1}^{N}u^\textup{ext}(\symbfit{r}_i).\]

\xhdr{Interparticle potential $u_{N}$}  A typical interparticle potential that makes $e^{-H}$ belong to the Schwartz class is one including the harmonic potential, as we also see, for example, in the classical molecular force fields \cite{charmm, gromacs, amber}. Our definition of Hamiltonian allows the interparticle potential $u_{N}$ in the above expression to include repulsive potentials such Lennard-Jones potential \cite{lennard_jones_second, lennard_jones_first} and the Coulomb potential \cite{halliday2013fundamentals}. 

\xhdr{External potential $u^{\mathup{ext}}$} One example of $u^{\textup{ext}}$ is the field potential in three dimensional polymer system \cite{twistedpolymer}, where a field is exerting a force on each monomer. 
Another example is the total external potential for a thermal system under sedimentation-diffusion equilibrium \cite{whyNoether}, consisting of a gravitational contribution and repulsive contribution associated with a container wall. 
\end{example}

By choosing and fixing a Hamiltonian function $H \in \symcal{H}(\symbb{R}^{dN \times 2})$, we introduce the simple notation for the thermal average for simplicity: For a tempered distribution $u$ on $\symbb{R}^{d(N-n)\times 2}$ and a function $f$ on a subset in $\symbb{R}^{d(N-n)\times 2}$ such that 
the domain of $K_{n}[u]$ can be extended to include the product $fe^{-H}$, we define  
\[ 
\langle uf \rangle_{n} = \left\langle  K_{n}[u] ,  fe^{-H} \right\rangle. 
\] 

Note that for $n=0$, $\langle u \rangle_{0}$ is a distributional generalization of the thermal average in equation \eqref{eq:thermal_average_hat{A}} (up to the temperature and Boltzmann constant).

\begin{example} 
\label{exm:thermal_average_f_{n}}
Suppose that $f$ is a tempered continuous function on $\symbb{R}^{dN\times 2}$.
For a given $(\symbfit{r}^{n} , \symbfit{p}^{n}) \in \symbb{R}^{dn \times 2}$, 
we define $f_{n}(\symbfit{r}_{n}^{N}, \symbfit{p}_{n}^{N}) = f(\symbfit{r}^{n}, \symbfit{r}_{n}^{N}, \symbfit{p}^{n}, \symbfit{p}_{n}^{N})$. 
Then $f_{n}$ is a tempered continuous function on $\symbb{R}^{d(N-n) \times 2}$. 
Then the reduced thermal average 
$u_{f_{n}}$ and $\phi = e^{-H}$ for $H \in \symcal{H}(\symbb{R}^{dN \times 2})$
\[
\left\langle u_{f_{n}} \right\rangle_{n}(\symbfit{r}^{n}, \symbfit{p}^{n}) 
= \int_{\mathbb R^{d(N-n) \times 2}} f_{n}(\symbfit{r}_{n}^{N}, \symbfit{p}_{n}^{N}) e^{-H(\symbfit{r}^N, \symbfit{p}^N)}d\symbfit{r}^{N}_{n} d\symbfit{p}^{N}_{n}
\] 
is finite. 
The finiteness can be assured, for example when $f$ is tempered; see Definition \ref{def:tempered_function}. 
Then, $\left\langle u_{f_{n}} \right\rangle_{0}$ coincides with the standard thermal average (up to the temperature and Boltzmann constant.)
Further, this thermal average also coincides with the canonical partition function (up to the temperature and Boltzmann constant) when $f = 1$. 
\end{example}

We demonstrate that the vanishing phenomenon of the distributional hyperforce shown in Corollary \ref{cor:hyperforcesum} yields the original hyperfoce sum rule \eqref{eq:original_hyperforce_sum} and the equilibrium BBGKY hierarchy as instances; 
let $H$ be a Hamiltonian function in the form of that introduced in Example \ref{ex:hamiltonian} composed by three contributions (over $\symbb{R}^{d}$), the kinetic energy, interparticle interaction potential, and an external one-body potential:
\[
H(\symbfit{r}^N, \symbfit{p}^N) = \sum_{i=1}^{N} \frac{\symbfit{p}_{i}\symbfit{p}_{i}^{\operatorname{T}}}{2m_{i}} + u_{N}(\symbfit{r}^N) + \sum_{i=1}^{N}u^\text{ext}(\symbfit{r}_i).
\]

\begin{lemma} \label{cor:general_derivation_hyperforce_sum_rule} 
Let $\beta = \frac{1}{k_\mathrm{B}T}$, in which $k_\mathrm{B}$ denotes the Boltzmann constant and $T$ the absolute temperature. Assume $u_{N}$ and $u^{\textup{ext}}$ are selected so that $e^{-\beta H(\symbfit{r}^N, \symbfit{p}^N)} \in \symscr{S}(\symbb{R}^{dN \times 2})$ and $f$ is a tempered $C^{1}$ function on $\symbb{R}^{dN \times 2}$. 
Set the following $C(\symbb{R}^{dn \times 2})^{d}$-valued functions $G_{i,k}^{(n)}[f,H]$ on $\symbb{R}^{d}$ for all $i \in \{n+1,\dots, N\}$, $k=1,2,3,4$;
here we denote $\symbb{R}^{d,N}_{n,i}(\symbfit{a}) 
= \symbb{R}^{d(i-n-1)} \times \{ \symbfit{a} \} \times \symbb{R}^{d(N-i)} \times \symbb{R}^{d(N-n)}$ for $\symbfit{a} \in \symbb{R}^{d}$: 
\begin{align}
G_{i,1}^{(n)}[f, H](\symbfit{r}_{i})
&= 
\int_{\symbb{R}^{d,N}_{n,i}(\symbfit{r}_{i})} \left((\nabla_{\symbfit{r}_i}f) \, e^{-\beta H}\right) (\symbfit{r}^{N},\symbfit{p}^{N}) \, 
        d\symbfit{r}_{n+1} \cdots \overset{i}{\widehat{d\symbfit{r}_{i}}} \cdots d\symbfit{r}_{N} d\symbfit{p}_{n}^{N}, \\ 
G_{i,2}^{(n)}[f, H](\symbfit{r}_{i}) 
&= 
\int_{\symbb{R}^{d,N}_{n,i}(\symbfit{r}_{i})} \left(\nabla_{\symbfit{r}_i} \left((\nabla_{\symbfit{p}_i} f)  e^{-\beta H} \right)\right) 
        (\symbfit{r}^{N},\symbfit{p}^{N})  \symbfit{p}_{i} \, 
        d\symbfit{r}_{n+1} \cdots \overset{i}{\widehat{d\symbfit{r}_{i}}} \cdots d\symbfit{r}_{N} d\symbfit{p}_{n}^{N}, \\ 
G_{i,3}^{(n)}[f, H](\symbfit{r}_{i}) 
&= 
\int_{\symbb{R}^{d,N}_{n,i}(\symbfit{r}_{i})} \left(f\,  \nabla_{\symbfit{r}_i} e^{-\beta H} \right) (\symbfit{r}^{N},\symbfit{p}^{N}) \, 
        d\symbfit{r}_{n+1} \cdots \overset{i}{\widehat{d\symbfit{r}_{i}}} \cdots d\symbfit{r}_{N} d\symbfit{p}_{n}^{N}, \\ 
G_{i,4}^{(n)}[f, H](\symbfit{r}_{i}) 
&= - 
\int_{\symbb{R}^{d,N}_{n,i}(\symbfit{r}_{i})} \beta\frac{\symbfit{p}_{i}\symbfit{p}_{i}^{\operatorname{T}}}{m_{i}} \left( \nabla_{\symbfit{r}_i} \left(f e^{-\beta H} \right) \right) (\symbfit{r}^{N},\symbfit{p}^{N}) \, 
        d\symbfit{r}_{n+1} \cdots \overset{i}{\widehat{d\symbfit{r}_{i}}} \cdots d\symbfit{r}_{N} d\symbfit{p}_{n}^{N}.
\end{align}
Then we have 
\begin{equation}
F_{i}^{(n)}[u_{f_{n}}, e^{-\beta H(\symbfit{r}^N, \symbfit{p}^N)}][\epsilon] 
= \sum_{k=1}^{4} \int_{\symbb{R}^{d}}\epsilon(\symbfit{r}_{i})^{\operatorname{T}} G_{i,k}^{(n)}[f, H](\symbfit{r}_{i}) d\symbfit{r}_{i}. 
\end{equation} 
\end{lemma}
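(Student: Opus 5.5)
The plan is to evaluate both terms of the localized hyperforce $F_{i}^{(n)}[u_{f_{n}},\phi][\epsilon]$ from Theorem \ref{thm:hyperforce_sum_rewrite} as explicit integrals, with $\phi = e^{-\beta H}$. Each term involving $\nabla\epsilon$ is then integrated by parts in $\symbfit{r}_{i}$ so that only $\epsilon(\symbfit{r}_{i})$ remains. Finally Fubini is applied, splitting the integral over $\symbb{R}^{d(N-n)\times 2}$ into an outer $d\symbfit{r}_{i}$-integral and an inner integral over $\symbb{R}^{d,N}_{n,i}(\symbfit{r}_{i})$.

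\textbf{Step 1: the distributional term.} I first identify $K_{n}[D_{i}(\epsilon)u_{f_{n}}]$ with integration against an explicit function. By definition,
\[
\langle K_{n}[D_{i}(\epsilon)u_{f_{n}}],\phi\rangle = -\int f_{n}\,D_{i}(\epsilon)\phi .
\]
Since $f$ is $C^{1}$ and tempered, I integrate by parts in $\symbfit{r}_{i}$ and in $\symbfit{p}_{i}$. This produces $\int (D_{i}(\epsilon)f)\,\phi$ plus the divergence terms
\[
\operatorname{div}_{\symbfit{r}_{i}}\delta_{i}(\epsilon) = \operatorname{tr}\nabla\epsilon(\symbfit{r}_{i})
\quad\text{and}\quad
\operatorname{div}_{\symbfit{p}_{i}}\hat{\delta}_{i}(\epsilon) = \operatorname{tr}\nabla\epsilon(\symbfit{r}_{i}).
\]
These two contributions cancel, exactly as in the computation preceding Theorem \ref{thm:hyperforce_sum_rewrite}. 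Hence
\[
F_{i}^{(n)}[u_{f_{n}},\phi][\epsilon] = \int\bigl(\epsilon(\symbfit{r}_{i})^{\operatorname{T}}\nabla_{\symbfit{r}_{i}}f - ((\nabla\epsilon)(\symbfit{r}_{i})\symbfit{p}_{i})^{\operatorname{T}}\nabla_{\symbfit{p}_{i}}f\bigr)\phi + \int f\bigl(\epsilon^{\operatorname{T}}\nabla_{\symbfit{r}_{i}}\phi - ((\nabla\epsilon)\symbfit{p}_{i})^{\operatorname{T}}\nabla_{\symbfit{p}_{i}}\phi\bigr).
\]

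\textbf{Step 2: matching the four pieces.} The two terms without derivatives of $\epsilon$ give $G_{i,1}^{(n)}$ and $G_{i,3}^{(n)}$ directly.

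For the term $-\int \sum_{a,b}\partial_{b}\epsilon^{a}(\symbfit{r}_{i})\,p_{i,b}\,(\partial_{p_{i,a}}f)\phi$, I integrate by parts in $r_{i,b}$. This moves the derivative onto $(\partial_{p_{i,a}}f)\phi$ and, using the summation convention, gives $\int \epsilon^{a}\,\partial_{r_{i,b}}\bigl((\partial_{p_{i,a}}f)\phi\bigr)\,p_{i,b}$, which is $G_{i,2}^{(n)}$.

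For the last term I use $\nabla_{\symbfit{p}_{i}}e^{-\beta H} = -\beta(\symbfit{p}_{i}/m_{i})e^{-\beta H}$. This turns it into $\beta\int f\,\symbfit{p}_{i}^{\operatorname{T}}(\nabla\epsilon)^{\operatorname{T}}\symbfit{p}_{i}\,\phi/m_{i}$. Integrating by parts in $\symbfit{r}_{i}$ once more yields
\[
-\beta\int \epsilon^{a}\,\frac{p_{i,a}p_{i,b}}{m_{i}}\,\partial_{r_{i,b}}(f\phi),
\]
which is $G_{i,4}^{(n)}$.

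\textbf{Main obstacle: justification.} The algebra is routine; the analysis is where care is needed, and I expect this to be the main obstacle. The first requirement is that boundary terms vanish in every integration by parts.
\begin{itemize}
\item In $\symbfit{r}_{i}$ this follows because $\epsilon$ has compact support.
\item In $\symbfit{p}_{i}$ I would use that tempered $f$ and its first derivatives grow at most polynomially, while $e^{-\beta H}$ and its derivatives are Schwartz. The integrands are therefore absolutely integrable even with the extra polynomial factors $\symbfit{p}_{i}$ and $\symbfit{p}_{i}\symbfit{p}_{i}^{\operatorname{T}}$.
\end{itemize}
The same integrability, together with compact support in $\symbfit{r}_{i}$, justifies Fubini. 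It also shows that each $G_{i,k}^{(n)}(\symbfit{r}_{i})$ is finite and depends continuously on $(\symbfit{r}^{n},\symbfit{p}^{n})$, by dominated convergence with a majorant uniform on compact sets. Some care is needed at singular points $\Delta_{H}$. There I would rely on $e^{-\beta H}$ being a genuine Schwartz function on all of $\symbb{R}^{dN\times2}$, so that every expression is read as a derivative of $e^{-\beta H}$ rather than of $H$.
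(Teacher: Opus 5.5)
Your proposal is correct and is essentially the paper's proof: the paper likewise splits $D_{i}(\epsilon)u_{f_{n}}$ into $\delta_{i}(\epsilon)^{\operatorname{T}}\nabla_{\symbfit{r}_{i}}u_{f_{n}}-\hat{\delta}_{i}(\epsilon)^{\operatorname{T}}\nabla_{\symbfit{p}_{i}}u_{f_{n}}$ (your Step~1 just makes explicit the trace cancellation behind this), reads off $G_{i,1}^{(n)}$ and $G_{i,3}^{(n)}$ directly, and obtains $G_{i,2}^{(n)}$ and $G_{i,4}^{(n)}$ by one integration by parts in $\symbfit{r}_{i}$ together with $\nabla_{\symbfit{p}_{i}}e^{-\beta H}=-\beta(\symbfit{p}_{i}/m_{i})e^{-\beta H}$. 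Your analytic justification goes beyond the paper, which is silent on boundary terms and Fubini, but both you and the paper miss one point: the integration by parts that yields $G_{i,2}^{(n)}$, and the formula for $G_{i,2}^{(n)}$ itself, use the mixed derivative $\nabla_{\symbfit{r}_{i}}\nabla_{\symbfit{p}_{i}}f$, so strictly you need that derivative to exist with polynomial growth (or a weak reading of $G_{i,2}^{(n)}$), which is more than $f$ being tempered $C^{1}$ with polynomially bounded first derivatives.
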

\begin{proof} 
Set $\phi(\symbfit{r}^{N}, \symbfit{p}^{N}) = e^{-\beta H(\symbfit{r}^N, \symbfit{p}^N)}$. 
Since $f$ is differentiable, the localized hyperforce $F^{(n)}_{i}[u_{f} , \phi]$ for any $i$ is written as 
\begin{align}
F^{(n)}_{i}[u_{f_{n}} , \phi][\epsilon] 
&= 
\left\langle K_{n}[D_{i}(\epsilon)u_{f_{n}}] , \phi \right\rangle + \left\langle K_{n}[u_{f_{n}}] , D_{i}(\epsilon)\phi \right\rangle \\ 
&= 
\left\langle K_{n}[\delta_{i}(\epsilon)^{\operatorname{T}}\nabla_{\symbfit{r}_{i}}u_{f_{n}}] , \phi \right\rangle 
- \left\langle K_{n}[\hat{\delta}_{i}(\epsilon)^{\operatorname{T}}\nabla_{\symbfit{p}_{i}}u_{f_{n}}] , \phi \right\rangle \\ 
&\hspace{25mm}+ \left\langle K_{n}[u_{f_{n}}] , \delta_{i}(\epsilon)^{\operatorname{T}}\nabla_{\symbfit{r}_{i}}\phi \right\rangle 
- \left\langle K_{n}[u_{f_{n}}] , \hat{\delta}_{i}(\epsilon)^{\operatorname{T}}\nabla_{\symbfit{p}_{i}}\phi \right\rangle. 
\end{align} 
Each term of the right hand side of the above equation can be calculated as follows; here, we apply integration by parts in the second and fourth terms. 
\begin{align} 
&\left\langle K_{n}[\delta_{i}(\epsilon)^{\operatorname{T}} \nabla_{\symbfit{r}_{i}}u_{f_{n}}]  , \phi \right\rangle  (\symbfit{r}^{n} , \symbfit{p}^{n}) \\ 
&\hspace{15mm}= 
\int_{\symbb{R}^{d(N-n) \times 2}} \left( \delta_{i}(\epsilon)^{\operatorname{T}} \nabla_{\symbfit{r}_{i}}f_{n} \right)  (\symbfit{r}^{N}_{n} , \symbfit{p}^{N}_{n}) \phi(\symbfit{r}^{N}, \symbfit{p}^{N}) \, d\symbfit{r}_{n}^{N}d\symbfit{p}_{n}^{N} \\ 
&\hspace{15mm}= 
\int_{\symbb{R}^{d(N-n) \times 2}} \epsilon (\symbfit{r}_{i})^{\operatorname{T}} \left( \left( \nabla_{\symbfit{r}_{i}}f \right)  \phi \right) (\symbfit{r}^{N}, \symbfit{p}^{N}) \, d\symbfit{r}_{n}^{N}d\symbfit{p}_{n}^{N} \\ 
&\hspace{15mm}= 
\int_{\symbb{R}^{d}} \epsilon (\symbfit{r}_{i})^{\operatorname{T}} G_{i,1}^{(n)}[f, H](\symbfit{r}_{i}) d\symbfit{r}_{i}, \\ 
&\left\langle K_{n}[\hat{\delta}_{i}(\epsilon)^{\operatorname{T}}   \nabla_{\symbfit{p}_{i}}u_{f_{n}}]  , \phi \right\rangle (\symbfit{r}^{n} , \symbfit{p}^{n}) \\ 
&\hspace{15mm}= 
\int_{\symbb{R}^{d(N-n) \times 2}} \left( \hat{\delta}_{i}(\epsilon)^{\operatorname{T}} \nabla_{\symbfit{p}_{i}}f_{n} \right)  (\symbfit{r}^{N}_{n} , \symbfit{p}^{N}_{n}) \phi(\symbfit{r}^{N}, \symbfit{p}^{N}) \, d\symbfit{r}_{n}^{N}d\symbfit{p}_{n}^{N} \\ 
&\hspace{15mm}=
\int_{\symbb{R}^{d(N-n) \times 2}} 
\left( \left( \nabla_{\symbfit{r}_{i}}\epsilon \right)(\symbfit{r}_{i}) \cdot \symbfit{p}_{i} \right)^{\operatorname{T}} \left( (\nabla_{\symbfit{p}_{i}}f ) \phi \right) (\symbfit{r}^{N}, \symbfit{p}^{N})  \, d\symbfit{r}_{n}^{N}d\symbfit{p}_{n}^{N} \\ 
&\hspace{15mm}=
\int_{\symbb{R}^{d(N-n) \times 2}} \left( (\nabla_{\symbfit{p}_{i}}f )\phi \right)(\symbfit{r}^{N}, \symbfit{p}^{N})^{\operatorname{T}}(\nabla_{\symbfit{r}_{i}}\epsilon)(\symbfit{r}_{i})  \cdot  \symbfit{p}_{i} \,  d\symbfit{r}_{n}^{N}d\symbfit{p}_{n}^{N} \\ 
&\hspace{15mm}= -
\int_{\symbb{R}^{d(N-n) \times 2}} \epsilon(\symbfit{r}_{i})^{\operatorname{T}}  \left( \nabla_{\symbfit{r}_{i}} \left( (\nabla_{\symbfit{p}_{i}}f )\phi \right) \right)(\symbfit{r}^{N}, \symbfit{p}^{N}) \cdot \symbfit{p}_{i} \, d\symbfit{r}_{n}^{N}d\symbfit{p}_{n}^{N} \\  
&\hspace{15mm}= -
\int_{\symbb{R}^{d}} \epsilon(\symbfit{r}_{i})^{\operatorname{T}} G_{i,2}^{(n)}[f, H](\symbfit{r}_{i}) \,  d\symbfit{r}_{i}, \\ 
&\left\langle K_{n}[u_{f_{n}}] , \delta_{i}(\epsilon)^{\operatorname{T}}\nabla_{\symbfit{r}_{i}}\phi \right\rangle  (\symbfit{r}^{n} , \symbfit{p}^{n}) \\ 
&\hspace{15mm}= 
\int_{\symbb{R}^{d(N-n) \times 2}} f(\symbfit{r}^{N}, \symbfit{p}^{N})  \left( \delta(\epsilon)^{\operatorname{T}} \nabla_{\symbfit{r}_{i}} \phi \right) (\symbfit{r}^{N}, \symbfit{p}^{N}) \, d\symbfit{r}_{n}^{N}d\symbfit{p}_{n}^{N} \\ 
&\hspace{15mm}= 
\int_{\symbb{R}^{d(N-n) \times 2}} f(\symbfit{r}^{N}, \symbfit{p}^{N}) \epsilon(\symbfit{r}_{i})^{\operatorname{T}} (\nabla_{\symbfit{r}_{i}} \phi) (\symbfit{r}^{N}, \symbfit{p}^{N}) \, d\symbfit{r}_{n}^{N}d\symbfit{p}_{n}^{N} \\ 
&\hspace{15mm}= 
\int_{\symbb{R}^{d}} \epsilon(\symbfit{r}_{i})^{\operatorname{T}} G_{i,3}^{(n)}[f, H](\symbfit{r}_{i}) \,  d\symbfit{r}_{i}, \\ 
&\left\langle K_{n}[u_{f_{n}}] , \hat{\delta}_{i}(\epsilon)^{\operatorname{T}}\nabla_{\symbfit{p}_{i}}\phi \right\rangle  (\symbfit{r}^{n} , \symbfit{p}^{n}) \\ 
&\hspace{15mm}= 
\int_{\symbb{R}^{d(N-n) \times 2}} f(\symbfit{r}^{N}, \symbfit{p}^{N})  \left( \hat{\delta}_{i}(\epsilon)^{\operatorname{T}}\nabla_{\symbfit{p}_{i}}\phi \right) (\symbfit{r}^{N}, \symbfit{p}^{N}) \, d\symbfit{r}_{n}^{N}d\symbfit{p}_{n}^{N} \\ 
&\hspace{15mm}= 
\int_{\symbb{R}^{d(N-n) \times 2}} f(\symbfit{r}^{N}, \symbfit{p}^{N})  \left( \left( \nabla_{\symbfit{r}_{i}}\epsilon \right)(\symbfit{r}_{i}) \cdot \symbfit{p}_{i} \right)^{\operatorname{T}} \left( \nabla_{\symbfit{p}_{i}}\phi \right) (\symbfit{r}^{N}, \symbfit{p}^{N}) \, d\symbfit{r}_{n}^{N}d\symbfit{p}_{n}^{N} \\ 
&\hspace{15mm}= 
\int_{\symbb{R}^{d(N-n) \times 2}} \left( f \left(\nabla_{\symbfit{p}_{i}}\phi \right) \right)^{\operatorname{T}} (\symbfit{r}^{N}, \symbfit{p}^{N}) \left( \nabla_{\symbfit{r}_{i}}\epsilon \right)(\symbfit{r}_{i}) \cdot \symbfit{p}_{i}  \, d\symbfit{r}_{n}^{N}d\symbfit{p}_{n}^{N} \\ 
&\hspace{15mm}= -
\int_{\symbb{R}^{d(N-n) \times 2}} \epsilon^{\operatorname{T}} (\symbfit{r}_{i})  \left( \nabla_{\symbfit{r}_{i}} \left(f \left(\nabla_{\symbfit{p}_{i}}\phi \right)\right) \right)(\symbfit{r}^{N}, \symbfit{p}^{N}) \cdot \symbfit{p}_{i}  \, d\symbfit{r}_{n}^{N}d\symbfit{p}_{n}^{N} \\ 
&\hspace{15mm}=  \dfrac{\beta}{m_{i}}
\int_{\symbb{R}^{d(N-n) \times 2}} \epsilon(\symbfit{r}_{i})^{\operatorname{T}} \left( \nabla_{\symbfit{r}_{i}}(f\phi \symbfit{p}_{i})\right) (\symbfit{r}^{N}, \symbfit{p}^{N}) \cdot \symbfit{p}_{i}    \, d\symbfit{r}_{n}^{N}d\symbfit{p}_{n}^{N} \\ 
&\hspace{15mm}= \dfrac{\beta}{m_{i}}
\int_{\symbb{R}^{d(N-n) \times 2}} \epsilon(\symbfit{r}_{i})^{\operatorname{T}}  \symbfit{p}_{i} \symbfit{p}_{i}^{\operatorname{T}} \left( \nabla_{\symbfit{r}_{i}}(f\phi)\right)  (\symbfit{r}^{N}, \symbfit{p}^{N})    \, d\symbfit{r}_{n}^{N}d\symbfit{p}_{n}^{N} \\ 
&\hspace{15mm}= -
\int_{\symbb{R}^{d}} \epsilon(\symbfit{r}_{i})^{\operatorname{T}} G_{i,4}^{(n)}[f, H](\symbfit{r}_{i}) \,  d\symbfit{r}_{i}. 
\end{align}  
Thus, we completed the proof. 
\end{proof} 

By Lemma \ref{cor:general_derivation_hyperforce_sum_rule}, 
we obtain the following reduced hyperforce rule at level $n$: 

\begin{theorem}[Reduced hyperforce rule at level $n$] 
\label{thm:generalized_BBGKY}
    Assume the same condition as in Lemma \ref{cor:general_derivation_hyperforce_sum_rule}. Set
    \begin{equation} 
    G_{i}^{(n)}[f, H](\symbfit{r}_{i}) 
    = 
    \sum_{k=1}^{4} G_{i,k}^{(n)}[f, H](\symbfit{r}_{i}). 
    \end{equation} 
    Then, we have 
    $G_{i}^{(n)}[f, H](\symbfit{r}_{i}) = 0$ 
    for $i=n+1, \cdots , N$.
\end{theorem}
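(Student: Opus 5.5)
Combining Corollary \ref{cor:hyperforcesum} with Lemma \ref{cor:general_derivation_hyperforce_sum_rule} gives, for every fixed $i\in\{n+1,\dots,N\}$ and every $\epsilon\in\mathfrak{X}_{0}(\symbb{R}^{d})$,
\[
0 = F_{i}^{(n)}[u_{f_{n}}, e^{-\beta H}][\epsilon] = \int_{\symbb{R}^{d}} \epsilon(\symbfit{r}_{i})^{\operatorname{T}} G_{i}^{(n)}[f,H](\symbfit{r}_{i})\, d\symbfit{r}_{i},
\]
as an identity in $C(\symbb{R}^{dn\times 2})$. I will evaluate this at a fixed point $(\symbfit{r}^{n},\symbfit{p}^{n})$ and conclude that $G_{i}^{(n)}[f,H](\cdot)(\symbfit{r}^{n},\symbfit{p}^{n})$ vanishes, by the fundamental lemma of the calculus of variations.

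First I check that, for fixed $(\symbfit{r}^{n},\symbfit{p}^{n})$, the map $\symbfit{r}_{i}\mapsto G_{i}^{(n)}[f,H](\symbfit{r}_{i})(\symbfit{r}^{n},\symbfit{p}^{n})\in\symbb{R}^{d}$ is continuous, or at least locally integrable. Each $G_{i,k}^{(n)}$ is an integral over the slice $\symbb{R}^{d,N}_{n,i}(\symbfit{r}_{i})$ of an integrand built from $f$, $\nabla f$ (tempered, at most polynomial growth), polynomials in $\symbfit{p}_{i}$, and derivatives of $e^{-\beta H}$. Since $e^{-\beta H}\in\symscr{S}$, those derivatives are rapidly decreasing, uniformly in all variables. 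So the integrand is continuous in $\symbfit{r}_{i}$ and dominated by an integrable function of the remaining variables, uniformly for $\symbfit{r}_{i}$ in compacts. Dominated convergence then gives continuity in $\symbfit{r}_{i}$, and Fubini justifies rewriting the full integral as an iterated integral, which Lemma \ref{cor:general_derivation_hyperforce_sum_rule} already used.

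Second, I choose test fields. For a fixed direction $e_{k}$ and an arbitrary $\psi\in C_{c}^{\infty}(\symbb{R}^{d})$, the field $\epsilon=\psi e_{k}$ lies in $\mathfrak{X}_{0}(\symbb{R}^{d})$. The identity becomes $\int\psi(\symbfit{r}_{i})\,G_{i}^{(n)}[f,H]^{(k)}(\symbfit{r}_{i})\,d\symbfit{r}_{i}=0$ for all $\psi$. By the du Bois-Reymond lemma, the continuous function $G_{i}^{(n)}[f,H]^{(k)}$ vanishes identically. Doing this for each $k$ and each $(\symbfit{r}^{n},\symbfit{p}^{n})$ gives $G_{i}^{(n)}[f,H]=0$.

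The main obstacle is the first step in the presence of singular Hamiltonians, where $H=\infty$ on $\Delta_{H}$. There I would argue directly from $e^{-\beta H}\in\symscr{S}$, so that all needed derivatives are globally bounded by Schwartz seminorms times $(1+|x|)^{-M}$, rather than differentiating $H$ itself. A second subtlety is that Corollary \ref{cor:hyperforcesum} is used for each $i$ separately, not only for the sum over $i$. If only the sum were available, I would instead take $\epsilon$ supported where it acts on one particle index. Since that is impossible in general because $\epsilon$ acts on all particles, I would rely on the corollary as stated.
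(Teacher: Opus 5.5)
Your argument is the paper's own: Corollary~\ref{cor:hyperforcesum} and Lemma~\ref{cor:general_derivation_hyperforce_sum_rule} give $\int_{\mathbb{R}^{d}}\epsilon(\symbfit{r}_{i})^{\operatorname{T}}G_{i}^{(n)}[f,H](\symbfit{r}_{i})\,d\symbfit{r}_{i}=0$ for all $\epsilon\in\mathfrak{X}_{0}(\mathbb{R}^{d})$, and the fundamental lemma then gives $G_{i}^{(n)}[f,H]=0$; your continuity check and test fields $\psi e_{k}$ just make explicit what the paper leaves implicit, and your worry about using the corollary for each $i$ separately is unfounded, since $F_{i}^{(n)}[u,\phi][\epsilon]=0$ holds identically by the defining relation $\langle D_{i}(\epsilon)u,\cdot\,\rangle=-\langle u,D_{i}(\epsilon)\,\cdot\,\rangle$. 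One small correction: the integrand of $G_{i,2}^{(n)}$ contains the mixed second derivative $\nabla_{\symbfit{r}_{i}}\nabla_{\symbfit{p}_{i}}f$, not only $f$ and $\nabla f$, so your domination argument for that term needs $f\in C^{2}$ with tempered derivatives, a regularity point the paper's $C^{1}$ hypothesis also glosses over.
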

\begin{proof}
    By Corollary \ref{cor:hyperforcesum} and Lemma \ref{cor:general_derivation_hyperforce_sum_rule}, 
    we have 
    \begin{equation} 
    \int_{\symbb{R}^{d}} \epsilon(\symbfit{r}_{i})^{\operatorname{T}} G_{i}^{(n)}[f, H](\symbfit{r}_{i})  d\symbfit{r}_{i} = 0. 
    \end{equation}
    Since the equation holds for any vector fields $\epsilon$ with compact support, 
    we obtain 
    $G_{i}^{(n)}[f, H](\symbfit{r}_{i}) = 0$ 
    for $i=n+1, \cdots , N$. 
\end{proof}

Theorem \ref{thm:generalized_BBGKY} yields the equilibrium BBGKY hierarchy \eqref{eq:equilibrum_BBGKY}. 

\begin{corollary} \label{cor:derivation_hyperforce_sum_rule} 
Assume the same condition as in Lemma \ref{cor:general_derivation_hyperforce_sum_rule}. We also assume $\displaystyle u_{N}(\symbfit{r}^{N}) = \sum_{1 \leq i < j \leq N} u(\symbfit{r}_i, \symbfit{r}_j)$ with $u$ being symmetric. 
Then, for any $n = 1,2,\dots , N-1$ and $k = 1,2,\dots, n$, we have 
\begin{align}
&\left( \frac{\symbfit{p}_{k}}{m_{k}} \cdot \nabla_{\symbfit{r}_{k}}  - \left(\nabla_{\symbfit{r}_{k}} u^{\text{ext}} (\symbfit{r}_{k}) + \sum_{\substack{j=1 \\ j \neq k}}^{n} \left(\nabla_{\symbfit{r}_{k}} u (\symbfit{r}_{k}, \symbfit{r}_{j}) \right) \right) \cdot \nabla_{\symbfit{p}_{k}} \right) \phi^{[n]} \\ 
&= \int_{\symbb{R}^{2d}}  \left(\nabla_{\symbfit{r}_{k}} u(\symbfit{r}_{k}, \symbfit{r}_{n+1}) \right) \cdot \left(\nabla_{\symbfit{p}_{k}} \phi^{[n+1]}\right) d\symbfit{r}_{n+1} d\symbfit{p}_{n+1}. 
\end{align}
Here, $\phi^{[n]}$ is the reduced phase-space distribution function for $\phi = e^{-\beta H}$ defined in Section \ref{sec:bbgky}. Hence, summing up the terms over all $k$ yields the equilibrium BBGKY hierarchy \eqref{eq:equilibrum_BBGKY}. 
\end{corollary}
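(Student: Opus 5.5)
The plan is to specialize Theorem \ref{thm:generalized_BBGKY} to the constant observable $f\equiv 1$, at the reduced level $n$ but with the particle index playing the role of $k$. Since Theorem \ref{thm:generalized_BBGKY} shifts only particles $n+1,\dots,N$, I will apply it at level $n-1$ after relabelling. By the permutation symmetry of $H$ (pairwise symmetric $u$, identical one-body external potential), I may swap the labels $k$ and $n$, so it suffices to treat $k=n$. Theorem \ref{thm:generalized_BBGKY} at level $n-1$ with $i=n$ and $f\equiv1$ gives $G^{(n-1)}_{n}[1,H](\symbfit{r}_n)=0$. With $f\equiv 1$, the terms $G_{n,1}$ and $G_{n,2}$ vanish, because they contain $\nabla_{\symbfit{r}_n}f$ and $\nabla_{\symbfit{p}_n}f$. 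What remains is
\begin{equation}
\int \Big(\nabla_{\symbfit{r}_n}e^{-\beta H} - \beta\frac{\symbfit{p}_n\symbfit{p}_n^{\operatorname{T}}}{m_n}\nabla_{\symbfit{r}_n}e^{-\beta H}\Big)\, d\symbfit{r}_{n+1}\cdots d\symbfit{r}_N\, d\symbfit{p}_n\cdots d\symbfit{p}_N = 0 ,
\end{equation}
an identity between functions of $(\symbfit{r}^{n},\symbfit{p}^{n-1})$.

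This integrated identity is not yet pointwise in $\symbfit{p}_n$. For the BBGKY statement I will use the explicit Maxwellian factorization, since $\phi^{[n]}$ carries a Gaussian factor in $\symbfit{p}_n$. The identity $\nabla_{\symbfit{p}_k}\phi^{[m]}=-\beta\frac{\symbfit{p}_k}{m_k}\phi^{[m]}$ holds for $m=n,n+1$. Using it, the left-hand side of the claim becomes
\begin{equation}
\frac{\symbfit{p}_k}{m_k}\cdot\Big(\nabla_{\symbfit{r}_k}+\beta\nabla_{\symbfit{r}_k}u^{\text{ext}}+\beta\sum_{j\le n,\,j\ne k}\nabla_{\symbfit{r}_k}u(\symbfit{r}_k,\symbfit{r}_j)\Big)\phi^{[n]}.
\end{equation}
The right-hand side becomes
\begin{equation}
-\beta\frac{\symbfit{p}_k}{m_k}\cdot\int \nabla_{\symbfit{r}_k}u(\symbfit{r}_k,\symbfit{r}_{n+1})\,\phi^{[n+1]}\,d\symbfit{r}_{n+1}d\symbfit{p}_{n+1}.
\end{equation}
So it suffices to prove the vector identity
\begin{equation}
\Big(k_{\mathrm B}T\nabla_{\symbfit{r}_k}+\nabla_{\symbfit{r}_k}u^{\text{ext}}+\sum_{j\le n,\,j\neq k}\nabla_{\symbfit{r}_k}u\Big)\phi^{[n]}=-\int\nabla_{\symbfit{r}_k}u(\symbfit{r}_k,\symbfit{r}_{n+1})\phi^{[n+1]}.
\end{equation}

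To obtain this identity from the hyperforce rule, I reinterpret the $G_{n,3}$ term. It is $\int \nabla_{\symbfit{r}_n}e^{-\beta H}$ over the remaining variables. Write $\nabla_{\symbfit{r}_n}e^{-\beta H}=-\beta(\nabla u^{\text{ext}}+\sum_{j\neq n}\nabla_{\symbfit{r}_n}u(\symbfit{r}_n,\symbfit{r}_j))e^{-\beta H}$. The sum over $j>n$ is grouped, using the symmetry of the integrand under permuting particles $n+1,\dots,N$, into $(N-n)$ copies of the $j=n+1$ term. After multiplying by $N!/(N-n)!$ and the normalization, this gives the $\phi^{[n+1]}$ integral. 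The $G_{n,4}$ term is the kinetic part: integrating out $\symbfit{p}_n$ against the Gaussian, $\int\beta\frac{\symbfit{p}_n\symbfit{p}_n^{\operatorname{T}}}{m_n}(\cdot)\,d\symbfit{p}_n$ returns the identity times the $\symbfit{p}_n$-marginal. Its sign combination with $G_{n,3}$ reproduces the $k_{\mathrm B}T\nabla$ term, once I recall that $\nabla_{\symbfit{r}_n}e^{-\beta H}$ equals $\nabla_{\symbfit{r}_n}$ of the reduced density.

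The main obstacle is exactly this step of identifying $G_{n,4}$ with the gradient of the marginal. Hidden in it is an exchange of $\nabla_{\symbfit{r}_n}$ with integration over the other positions, and the Gaussian moment computation that removes the $\symbfit{p}_n$-integration while keeping a pointwise Maxwellian factor. Interchanging derivative and integral is justified because $e^{-\beta H}\in\symscr{S}$, so dominated convergence applies.

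Having the vector identity for $k=n$ at fixed $(\symbfit{r}^n,\symbfit{p}^{n-1})$, I then multiply back by the Maxwellian in $\symbfit{p}_n$ and dot with $\symbfit{p}_n/m_n$. This recovers the stated momentum-resolved equation for $k=n$, and hence for every $k$ by the relabelling symmetry. Summing over $k=1,\dots,n$ gives \eqref{eq:equilibrum_BBGKY}.
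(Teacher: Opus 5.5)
Your proposal is correct and follows the paper's own proof. With $f\equiv 1$ only $G_{i,3}$ and $G_{i,4}$ survive; $G_{i,3}$ is expanded by the chain rule and particle symmetry into the external, intra-group and $\phi^{[n+1]}$ force terms; $G_{i,4}$ is reduced to minus the gradient of the momentum-integrated marginal (you use the Gaussian second moment, the paper integrates by parts in the momentum, which is the same computation); then the Maxwellian is reattached, the vector identity is dotted with $\beta p/m$, and relabelling gives every $k$. The only difference is bookkeeping: you apply Theorem~\ref{thm:generalized_BBGKY} at level $n-1$ with $i=n$, whereas the paper applies it at level $n$ with $i=n+1$ and renames $n+1\to n$ at the end.
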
 
\begin{proof}
    Inserting $f = 1$, $G^{(n)}_{i}[f,H](\symbfit{r}_{i})$ is reduced to     
    \begin{equation}
        G^{(n)}_{i}[1, H](\symbfit{r}_{i}) 
        = G^{(n)}_{i,3}[1, H](\symbfit{r}_{i}) + G^{(n)}_{i,4}[1, H](\symbfit{r}_{i}),
    \end{equation}
    in which
    \begin{align}
        G^{(n)}_{i,3}[1, H](\symbfit{r}_{i}) 
        &= \int_{\symbb{R}^{d,N}_{n,i}(\symbfit{r}_{i})} \left(\nabla_{\symbfit{r}_i}\phi \right)(\symbfit{r}^{N},\symbfit{p}^{N}) d\symbfit{r}_{n+1} \cdots \overset{i}{\widehat{d\symbfit{r}_{i}}} \cdots d\symbfit{r}_{N} d\symbfit{p}^{N}_{n}
    \end{align}
    and 
    \begin{align}
        G^{(n)}_{i,4}[1, H](\symbfit{r}_{i}) 
        &= -\int_{\symbb{R}^{d,N}_{n,i}(\symbfit{r}_{i})}  \beta\frac{\symbfit{p}_{i}\symbfit{p}_{i}^{\operatorname{T}}}{m_{i}} \left(\nabla_{\symbfit{r}_i} \phi \right)  (\symbfit{r}^{N},\symbfit{p}^{N}) d\symbfit{r}_{n+1} \cdots \overset{i}{\widehat{d\symbfit{r}_{i}}} \cdots d\symbfit{r}_{N} d\symbfit{p}^{N}_{n}.
    \end{align}

    We assume $i = n + 1$.  
    Then, 
    \begin{align}
        G^{(n)}_{n+1,3}[1, H](\symbfit{r}_{n+1}) 
        &= \int_{\symbb{R}^{d,N}_{n,n+1}(\symbfit{r}_{n+1})} \left(\nabla_{\symbfit{r}_{n+1}}\phi \right)(\symbfit{r}^{N},\symbfit{p}^{N}) \overset{n+1}{\widehat{d\symbfit{r}_{n+1}}} d\symbfit{r}_{n+2} \cdots d\symbfit{r}_{N} d\symbfit{p}^{N}_{n}\\
        &= - \beta \underbrace{\int_{\symbb{R}^{d,N}_{n,n+1}(\symbfit{r}_{n+1})}  \nabla_{\symbfit{r}_{n+1}} u^{\text{ext}}(\symbfit{r}_{n+1}) \phi(\symbfit{r}^{N}, \symbfit{p}^{N})\ d\symbfit{r}_{n+1}^{N} d\symbfit{p}^{N}_{n}}_{I} \\ 
        &\hspace{10mm}- \beta \sum_{\substack{j = 1 \\ j \neq n+1}}^{N} \underbrace{\int_{\symbb{R}^{d,N}_{n,n+1}(\symbfit{r}_{n+1})} \nabla_{\symbfit{r}_{n+1}} u(\symbfit{r}_{n+1}, \symbfit{r}_{j}) \phi(\symbfit{r}^{N}, \symbfit{p}^{N})\ d\symbfit{r}_{n+1}^{N} d\symbfit{p}^{N}_{n}}_{I_{j}}. \\ 
    \end{align}
    Here, we introduce the following notations for $1 \leq k \leq n+1$:
    \begin{align}
        Z_{n+1} &= \displaystyle \int_{\symbb{R}^{d}} e^{-\beta \symbfit{p}_{n+1}^{\operatorname{T}}\symbfit{p}_{n+1}} d\symbfit{p}_{n+1}, \\
        \displaystyle \phi_{\widehat{\symbfit{p}_{k}}}(\symbfit{r}^{N}, \symbfit{p}^{N}) &= e^{-\beta \left( H(\symbfit{r}^{N}, \symbfit{p}^{N}) - \symbfit{p}_{k}^{\operatorname{T}} \symbfit{p}_{k} \right)}, \\
        \phi^{[n+1]}_{\widehat{\symbfit{p}_{k}}}(\symbfit{r}^{n+1}, \symbfit{p}^{n+1}) 
        &=  \frac{N!}{(N-n-1)!}\int_{\symbb{R}^{d(N-n-1) \times 2}} \phi_{\widehat{\symbfit{p}_{k}}}(\symbfit{r}^{N}, \symbfit{p}^{N}) d\symbfit{r}^{N}_{n+1} d\symbfit{p}_{n+1}^{N}, \\  
        \phi^{[n+1]}  (\symbfit{r}^{n+1}, \symbfit{p}^{n+1}) &= \frac{N!}{(N-n-1)!}\int_{\symbb{R}^{d(N-n-1)\times 2}} \phi(\symbfit{r}^{N}, \symbfit{p}^{N}) d\symbfit{r}^{N}_{n+1} d\symbfit{p}^{N}_{n+1}.
    \end{align} 
    The integral $I$ in the first term is further expanded as 
    \begin{align}
        I &= \int_{\symbb{R}^{d,N}_{n,n+1}(\symbfit{r}_{n+1})}  \nabla_{\symbfit{r}_{n+1}} u^{\text{ext}}(\symbfit{r}_{n+1}) \phi(\symbfit{r}^{N}, \symbfit{p}^{N})\ d\symbfit{r}_{n+1}^{N} d\symbfit{p}^{N}_{n} \\
        &=   Z_{n+1} \int_{\symbb{R}^{d(N-n-1) \times 2}}\nabla_{\symbfit{r}_{n+1}} u^{\text{ext}}(\symbfit{r}_{n+1}) \phi_{\widehat{\symbfit{p}_{n+1}}}(\symbfit{r}^{N}, \symbfit{p}^{N})\ d\symbfit{r}_{n+1}^{N} d\symbfit{p}^{N}_{n+1} \\
        &=  Z_{n+1} \frac{(N-n-1)!}{N!} \nabla_{\symbfit{r}_{n+1}} u^{\text{ext}}(\symbfit{r}_{n+1})  \phi^{[n+1]}_{\widehat{\symbfit{p}_{n+1}}}(\symbfit{r}^{n+1}, \symbfit{p}^{n+1}).        
    \end{align}
    Similarly, $I_{j}$ is rewritten as 
    \begin{align}
        I_{j} &=  \int_{\symbb{R}^{d,N}_{n,n+1}(\symbfit{r}_{n+1})}   \nabla_{\symbfit{r}_{n+1}} u(\symbfit{r}_{n+1}, \symbfit{r}_{j}) \phi(\symbfit{r}^{N}, \symbfit{p}^{N})\ d\symbfit{r}_{n+1}^{N} d\symbfit{p}^{N}_{n} \qquad (\text{for \, $j=1,2,\dots,n$}) \\
        &=  Z_{n+1}\int_{\symbb{R}^{d(N-n-1) \times 2}}  \nabla_{\symbfit{r}_{n+1}} u(\symbfit{r}_{n+1}, \symbfit{r}_{j})  \phi_{\widehat{\symbfit{p}_{n+1}}}(\symbfit{r}^{N}, \symbfit{p}^{N})\ d\symbfit{r}_{n+1}^{N} d\symbfit{p}^{N}_{n+1} \\
        &=  Z_{n+1} \frac{(N-n-1)!}{N!}  \nabla_{\symbfit{r}_{n+1}} u(\symbfit{r}_{n+1}, \symbfit{r}_{j}) \phi^{[n+1]}_{\widehat{\symbfit{p}_{n+1}}}(\symbfit{r}^{n+1}, \symbfit{p}^{n+1})   
    \end{align} 
    and 
    \begin{align}
        I_{j} &= 
        \int_{\symbb{R}^{d,N}_{n,n+1}(\symbfit{r}_{n+1})} \nabla_{\symbfit{r}_{n+1}} u(\symbfit{r}_{n+1}, \symbfit{r}_{j}) \phi(\symbfit{r}^{N}, \symbfit{p}^{N})\ d\symbfit{r}_{n+1}^{N} d\symbfit{p}^{N}_{n} \qquad (\text{for \, $j = n+2, \dots, N$}) \\ 
        &= Z_{n+1}\int_{\symbb{R}^{d(N-n-1) \times 2}} \nabla_{\symbfit{r}_{n+1}} u(\symbfit{r}_{n+1}, \symbfit{r}_{j}) \phi_{\widehat{\symbfit{p}_{n+1}}}(\symbfit{r}^{N}, \symbfit{p}^{N})\ d\symbfit{r}_{n+1}^{N} d\symbfit{p}^{N}_{n+1} \\ 
        &= Z_{n+1}\int_{\symbb{R}^{d(N-n-1) \times 2}} \nabla_{\symbfit{r}_{n+1}} u(\symbfit{r}_{n+1}, \symbfit{r}_{n+2}) \phi_{\widehat{\symbfit{p}_{n+1}}}(\symbfit{r}^{N}, \symbfit{p}^{N}) d\symbfit{r}^{N}_{n+1} d\symbfit{p}^{N}_{n+1} \\ 
        &= Z_{n+1}\dfrac{(N-n-2)!}{N!} \int_{\symbb{R}^{2d}} \nabla_{\symbfit{r}_{n+1}} u(\symbfit{r}_{n+1}, \symbfit{r}_{n+2}) \phi^{[n+2]}_{\widehat{\symbfit{p}_{n+1}}}(\symbfit{r}^{n+2}, \symbfit{p}^{n+2}) d\symbfit{r}_{n+2} d\symbfit{p}_{n+2}. 
    \end{align}
    On the other hand, using the integration by parts, $G^{(n)}_{i,4}[1, H](\symbfit{r}_{n+1})$ is written as
    \begin{align}
        G^{(n)}_{i,4}[1, H](\symbfit{r}_{n+1}) 
        &= \int_{\symbb{R}^{d,N}_{n,n+1}(\symbfit{r}_{n+1})} \left(\left(-\beta\frac{\symbfit{p}_{n+1}\symbfit{p}_{n+1}^{\operatorname{T}}}{m_{n+1}} \right)\left(\nabla_{\symbfit{r}_{n+1}} \phi \right) \right)\left(\symbfit{r}^{N}, \symbfit{p}^{N}\right) d\symbfit{r}_{n+1}^{N} d\symbfit{p}^{N}_{n}\\
        &= \int_{\symbb{R}^{d,N}_{n,n+1}(\symbfit{r}_{n+1})} \left(\symbfit{p}_{n+1} \left(\nabla_{\symbfit{p}_{n+1}}^{\operatorname{T}} \nabla_{\symbfit{r}_{n+1}} \phi \right) \right)\left(\symbfit{r}^{N}, \symbfit{p}^{N}\right) d\symbfit{r}_{n+1}^{N} d\symbfit{p}^{N}_{n}\\
        &= - \int_{\symbb{R}^{d,N}_{n,n+1}(\symbfit{r}_{n+1})} \left(\nabla_{\symbfit{r}_{n+1}} \phi \right)(\symbfit{r}^{N}, \symbfit{p}^{N}) d\symbfit{r}_{n+1}^{N} d\symbfit{p}^{N}_{n} \\
        &= - Z_{n+1} \int_{\symbb{R}^{d(N-n-1) \times 2}} \left(\nabla_{\symbfit{r}_{n+1}} \phi_{\widehat{\symbfit{p}_{n+1}}} \right)(\symbfit{r}^{N}, \symbfit{p}^{N}) d\symbfit{r}_{n+1}^{N} d\symbfit{p}^{N}_{n+1} \\
        &= - Z_{n+1} \frac{(N-n-1)!}{N!} \left( \nabla_{\symbfit{r}_{n+1}} \phi^{[n+1]}_{\widehat{\symbfit{p}_{n+1}}} \right) (\symbfit{r}^{n+1} , \symbfit{r}^{n+1}), \label{eq:bbgky_second_final}
    \end{align}    
    Here, $\nabla_{\symbfit{p}_{n+1}}^{\operatorname{T}}$ is a ``divergence'' operator along $\symbfit{p}_{n+1}$. 
    Therefore, noting that $G^{(n)}[1, H] = 0$ rearranging 
    \begin{equation}
    -\dfrac{1}{\beta}G_{n+1,4}^{(n)}[1,H](\symbfit{r}_{n+1}) + I + \sum_{j=1}^{n}I_{j} = -\sum_{j=n+2}^{N} I_{j},
    \end{equation} 
    we have
    \begin{align}
    &\left\{\frac{1}{\beta}\nabla_{\symbfit{r}_{n+1}} +  \left(\nabla_{\symbfit{r}_{n+1}} u^{\text{ext}}(\symbfit{r}_{n+1}) + \sum_{j=1}^{n} \nabla_{\symbfit{r}_{n+1}} u(\symbfit{r}_{n+1}, \symbfit{r}_{j})  \right) \right\}\phi^{[n+1]}_{\widehat{\symbfit{p}_{n+1}}}(\symbfit{r}^{n+1}, \symbfit{p}^{n+1}) \\
    &\hspace{20mm}= - \int_{\symbb{R}^{2d}} \nabla_{\symbfit{r}_{n+1}} u(\symbfit{r}_{n+1}, \symbfit{r}_{n+2}) \phi^{[n+2]}_{\widehat{\symbfit{p}_{n+1}}}(\symbfit{r}^{n+2}, \symbfit{p}^{n+2}) d\symbfit{r}_{n+2} d\symbfit{p}_{n+2}.
    \end{align}
    Notice that the equation does not involve $\symbfit{p}_{n+1}$. Therefore, this equation is equivalent to 
    \begin{align}
    &\left\{\frac{\symbfit{p}_{n+1}}{m_{n+1}} \cdot \nabla_{\symbfit{r}_{n+1}} -  \left(\nabla_{\symbfit{r}_{n+1}} u^{\text{ext}}(\symbfit{r}_{n+1}) + \sum_{j=1}^{n} \nabla_{\symbfit{r}_{n+1}} u(\symbfit{r}_{n+1}, \symbfit{r}_{j})  \right) \cdot \frac{\beta\,\symbfit{p}_{n+1}}{m_{n+1}} \right\}\phi^{[n+1]}_{\widehat{\symbfit{p}_{n+1}}}(\symbfit{r}^{n+1}, \symbfit{p}^{n+1}) \\
    &\hspace{20mm}= \frac{\beta\,\symbfit{p}_{n+1}}{m_{n+1}} \cdot \int_{\symbb{R}^{2d}} \nabla_{\symbfit{r}_{n+1}} u(\symbfit{r}_{n+1}, \symbfit{r}_{n+2}) \phi^{[n+2]}_{\widehat{\symbfit{p}_{n+1}}}(\symbfit{r}^{n+2}, \symbfit{p}^{n+2})   d\symbfit{r}_{n+2} d\symbfit{p}_{n+2}.
    \end{align}
    Since
    \[ 
    \phi^{[n+1]}(\symbfit{r}^{n+1}, \symbfit{p}^{n+1}) = \left( e^{-\beta\frac{\symbfit{p}_{k}^{\operatorname{T}}\symbfit{p}_{k}}{2m_{k}}}\right)\phi^{[n+1]}_{\widehat{\symbfit{p}_{k}}}(\symbfit{r}^{n+1}, \symbfit{p}^{n+1})
    \]
    for any $k = 1,2,\dots , n+1$, 
    the above equation is equivalent to 
    \begin{align}
    &\left\{\frac{\symbfit{p}_{n+1}}{m_{n+1}} \cdot \nabla_{\symbfit{r}_{n+1}} -  \left(\nabla_{\symbfit{r}_{n+1}} u^{\text{ext}}(\symbfit{r}_{n+1}) + \sum_{j=1}^{n} \nabla_{\symbfit{r}_{n+1}} u(\symbfit{r}_{n+1}, \symbfit{r}_{j})  \right) \cdot \frac{\beta\,\symbfit{p}_{n+1}}{m_{n+1}} \right\}\phi^{[n+1]}(\symbfit{r}^{n+1}, \symbfit{p}^{n+1}) \\
    &\hspace{20mm}= \frac{\beta\,\symbfit{p}_{n+1}}{m_{n+1}} \cdot \int_{\symbb{R}^{2d}} \nabla_{\symbfit{r}_{n+1}} u(\symbfit{r}_{n+1}, \symbfit{r}_{n+2}) \phi^{[n+2]}(\symbfit{r}^{n+2}, \symbfit{p}^{n+2})   d\symbfit{r}_{n+2} d\symbfit{p}_{n+2}.
    \end{align}
    Furthermore, noting that $\displaystyle \nabla_{\symbfit{p}_{k}} \phi^{[n+1]}(\symbfit{r}^{n+1}, \symbfit{p}^{n+1}) = - \frac{\beta}{m_{k}}\symbfit{p}_{k} \phi^{[n+1]}(\symbfit{r}^{n+1}, \symbfit{p}^{n+1})$ 
    for any $k = 1,2,\dots , n+1$, we have 
    \begin{align}
    &\left\{\frac{\symbfit{p}_{n+1}}{m_{n+1}} \cdot \nabla_{\symbfit{r}_{n+1}} -  \left(\nabla_{\symbfit{r}_{n+1}} u^{\text{ext}}(\symbfit{r}_{n+1}) + \sum_{j=1}^{n} \nabla_{\symbfit{r}_{n+1}} u(\symbfit{r}_{n+1}, \symbfit{r}_{j})  \right) \cdot \nabla_{\symbfit{p}_{n+1}}\right\}\phi^{[n+1]}(\symbfit{r}^{n+1}, \symbfit{p}^{n+1}) \\
    &\hspace{20mm}= \int_{\symbb{R}^{2d}} \nabla_{\symbfit{r}_{n+1}} u(\symbfit{r}_{n+1}, \symbfit{r}_{n+2}) \cdot \nabla_{\symbfit{p}_{n+1}} \phi^{[n+2]}(\symbfit{r}^{n+2}, \symbfit{p}^{n+2})   d\symbfit{r}_{n+2} d\symbfit{p}_{n+2}.
    \end{align} 
    Finally, the derivation up until the above equation also holds for the case of linear coordinates $(\symbfit{r}_{k}, \symbfit{r}_{n+2}, \dots, \symbfit{r}_{N})$ and $(\symbfit{p}_{k}, \symbfit{p}_{n+2}, \dots, \symbfit{p}_{N})$ for any $k = 1, 2, \dots, n+1$. Therefore, for any $k = 1, 2, \dots, n+1$, 
    the following formula also holds: 
    \begin{align}
    &\left\{\frac{\symbfit{p}_{k}}{m_{k}} \cdot \nabla_{\symbfit{r}_{k}} -  \left(\nabla_{\symbfit{r}_{k}} u^{\text{ext}}(\symbfit{r}_{k}) + \sum_{\substack{j=1 \\  j \neq k}}^{n+1} \nabla_{\symbfit{r}_{k}} u(\symbfit{r}_{k}, \symbfit{r}_{j})  \right) \cdot \nabla_{\symbfit{p}_{k}}\right\}\phi^{[n+1]}(\symbfit{r}^{n+1}, \symbfit{p}^{n+1}) \\
    &\hspace{20mm}= \int_{\symbb{R}^{2d}} \nabla_{\symbfit{r}_{k}} u(\symbfit{r}_{k}, \symbfit{r}_{n+2}) \cdot \nabla_{\symbfit{p}_{k}} \phi^{[n+2]}(\symbfit{r}^{n+2}, \symbfit{p}^{n+2})  d\symbfit{r}_{n+2} d\symbfit{p}_{n+2}.
    \end{align}
    Summing the equation over all $k$ (and replacing $n+1$ with $n$) yields the equilibrium BBGKY formula \eqref{eq:equilibrum_BBGKY}. 
\end{proof}

\begin{corollary} 
\label{cor:generalized_168}
    Define a map $G^{(n)}[f, H]: \symbb{R}^{d} \rightarrow C(\symbb{R}^{dn \times 2})^{d}$ by 
    \[
    G^{(n)}[f, H](\symbfit{r}) = \sum_{i=1}^{N} G^{(n)}_{i}[f, H](\symbfit{r}).
    \]
    Then, we have $G^{(n)}[f, H](\symbfit{r}) = 0$, which is a generalization of the (equilibrium) hyperforce sum rule \eqref{eq:original_hyperforce_sum}; 
    Namely, the original hyperforce sum rule \eqref{eq:original_hyperforce_sum}
    is restored by setting $n=0$. 
\end{corollary}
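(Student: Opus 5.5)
The plan is to derive the vanishing of $G^{(n)}[f,H]$ from Theorem \ref{thm:generalized_BBGKY}, and then to identify the case $n=0$ with \eqref{eq:original_hyperforce_sum} by unpacking the four integrals $G^{(n)}_{i,k}$ of Lemma \ref{cor:general_derivation_hyperforce_sum_rule}.

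\emph{Step 1 (vanishing).} Theorem \ref{thm:generalized_BBGKY} gives $G^{(n)}_{i}[f,H](\symbfit{r})=0$ for each $i=n+1,\dots,N$ and every $\symbfit{r}\in\symbb{R}^{d}$. The indices $i\le n$ correspond to frozen coordinates on which $\epsilon_{n,\sharp}$ acts trivially, so no localized hyperforce is defined there. I will adopt the convention that these summands are zero, so the sum effectively runs over $i=n+1,\dots,N$. Then $G^{(n)}[f,H]=\sum_i G^{(n)}_i[f,H]=0$ is immediate. Equivalently, one can sum Lemma \ref{cor:general_derivation_hyperforce_sum_rule} over $i$ and use Theorem \ref{thm:hyperforce_sum_rewrite} together with Lemma \ref{lem:vanishing_gateaux}: after renaming the integration variable $\symbfit{r}_i$ to $\symbfit{r}$, this gives $\int\epsilon(\symbfit{r})^{\operatorname{T}}G^{(n)}[f,H](\symbfit{r})\,d\symbfit{r}=0$ for all $\epsilon\in\mathfrak{X}_0(\symbb{R}^d)$, and hence pointwise vanishing by the fundamental lemma of the calculus of variations.

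\emph{Step 2 (recovering the original rule at $n=0$).} Set $n=0$, $f=\hat A$, $\phi=e^{-\beta H}$. Each integral over $\symbb{R}^{d,N}_{0,i}(\symbfit{r})$ with the slot $\symbfit{r}_i$ frozen at $\symbfit{r}$ is exactly $\int\delta(\symbfit{r}-\symbfit{r}_i)(\cdots)\,d\symbfit{r}^Nd\symbfit{p}^N$, by Example \ref{ex:delta_function}. Summing over $i$ and dividing by $Z$ (the trace normalization constants cancel), I will match terms:
\begin{itemize}
\item $\sum_iG_{i,1}$ gives $\langle\sum_i\delta(\symbfit{r}-\symbfit{r}_i)\nabla_i\hat A\rangle$, the first part of $\langle\sigma(\symbfit{r})\hat A\rangle$ in \eqref{eq:shift_observable}.
\item $\sum_iG_{i,2}$, with the $\nabla_{\symbfit{r}_i}$ on the product moved off via $\nabla_{\symbfit{r}}\delta(\symbfit{r}-\symbfit{r}_i)=-\nabla_{\symbfit{r}_i}\delta(\symbfit{r}-\symbfit{r}_i)$, gives the divergence term $\nabla\cdot\langle\sum_i\delta(\symbfit{r}-\symbfit{r}_i)(\nabla_{\symbfit{p}_i}\hat A)\symbfit{p}_i\rangle$. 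Here I must track the sign coming from Lemma \ref{cor:general_derivation_hyperforce_sum_rule} carefully.
\item $\sum_iG_{i,3}$ uses $\nabla_{\symbfit{r}_i}e^{-\beta H}=-\beta(\nabla_iu_N+\nabla u^{\rm ext}(\symbfit{r}_i))e^{-\beta H}$. It yields $\beta\langle\hat A\cdot(\text{potential and external parts of }\hat{\mathbf F})\rangle$.
\item $\sum_iG_{i,4}$, again converting $\nabla_{\symbfit{r}_i}$ into $-\nabla_{\symbfit{r}}$ acting on the delta, yields $-\beta\nabla\cdot\langle\hat A\sum_i\frac{\symbfit{p}_i\symbfit{p}_i^{\operatorname{T}}}{m}\delta(\symbfit{r}-\symbfit{r}_i)\rangle$. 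This is the kinetic part of $\beta\langle\hat A\hat{\mathbf F}\rangle$.
\end{itemize}
Adding these four contributions reproduces the left side of \eqref{eq:original_hyperforce_sum}, so $G^{(0)}[\hat A,H]=0$ is exactly the original hyperforce sum rule.

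\emph{Main obstacle.} The delicate step is Step 2's exchange of $\nabla_{\symbfit{r}_i}$ inside the fixed-slot integral with $\nabla_{\symbfit{r}}$ outside. I plan to justify this by differentiating under the integral sign. The Schwartz decay of $e^{-\beta H}$ against tempered $f$ and its derivative provides the domination; the $\symbfit{p}_i\symbfit{p}_i^{\operatorname{T}}$ factor is harmless because $e^{-\beta H}$ is Schwartz in $\symbfit{p}$. The other point needing care is sign bookkeeping, consistent with the minus signs already present in the proof of Lemma \ref{cor:general_derivation_hyperforce_sum_rule}.
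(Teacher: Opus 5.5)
Your proposal is correct and follows the paper's proof. The paper also reads off $G^{(n)}[f,H]=0$ from Theorem \ref{thm:generalized_BBGKY}; for $n=0$ it rewrites the fixed-slot integrals as pairings with $\delta_{(\symbfit{r}_i=\symbfit{r})}$ and moves $\nabla_{\symbfit{r}_i}$ in $G_{i,2}$ and $G_{i,4}$ onto the delta, obtaining $G_{i,1}+G_{i,2}=\langle\sigma_i(\symbfit{r})\hat A\rangle_0$ and $G_{i,3}+G_{i,4}=\langle\beta\hat{\mathbf F}_i(\symbfit{r})\hat A\rangle_0$ exactly as you do, and your alternative for Step 1 (total sum $F^{(n)}=0$ plus the fundamental lemma) is a slightly more economical variant because it does not rely on the per-particle vanishing asserted in Corollary \ref{cor:hyperforcesum}.
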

\begin{proof}
    The formula $G^{(n)}[f, H](\symbfit{r}) = 0$ is directly obtained by Theorem \ref{thm:generalized_BBGKY}. 
    
    Assume $n=0$ and $f = \hat{A}$. 
    We may rewrite $G^{(0)}[\hat{A}, H](\symbfit{r})$ as
    \begin{align}
        G^{(0)}[\hat{A}, H](\symbfit{r}) 
        &=  \sum_{i=1}^{N} \left(G^{(0)}_{i, 1}[\hat{A}, H](\symbfit{r}) + G^{(0)}_{i, 2}[\hat{A}, H](\symbfit{r}) \right) \\ 
        &\hspace{25mm}+ \sum_{i=1}^{N} \left(G^{(0)}_{i, 3}[\hat{A}, H](\symbfit{r}) + G^{(0)}_{i, 4}[\hat{A}, H](\symbfit{r}) \right).
    \end{align}
    Recalling two symbols $\sigma(\symbfit{r})$ and $\hat{\symbf{F}}(\symbfit{r})$ from Section \ref{sec:168} and Example \ref{ex:delta_function}: 
    \begin{align}
    \sigma (\symbfit{r}) 
    &= \sum_{i=1}^{N} \sigma_{i} (\symbfit{r}) = \sum_{i=1}^{N} \left( \delta(\symbfit{r} - \symbfit{r}_{i}) \nabla_{\symbfit{r}_{i}} + \symbfit{p}_{i} \nabla\delta(\symbfit{r} - \symbfit{r}_{i}) \cdot \nabla_{\symbfit{p}_{i}} \right)  \\ 
    &= \sum_{i=1}^{N} \left( \delta_{(\symbfit{r}_{i} = \symbfit{r})} \nabla_{\symbfit{r}_{i}} - \left( \symbfit{p}_{i}^{\operatorname{T}} \nabla_{\symbfit{r}_{i}} \delta_{(\symbfit{r}_{i} = \symbfit{r})} \right) \nabla_{\symbfit{p}_{i}} \right) 
    \shortintertext{and} \\ 
    \hat{\mathbf{F}}(\symbfit{r}) 
    &= \sum_{i=1}^{N} \hat{\mathbf{F}}_{i}(\symbfit{r})\\
    &= -\sum_{i=1}^{N} \left( \nabla \cdot \frac{\symbfit{p}_{i} \symbfit{p}_{i}^{\operatorname{T}}}{m} \delta(\symbfit{r} - \symbfit{r}_{i}) + \delta(\symbfit{r} - \symbfit{r}_{i}) \nabla_{\symbfit{r}_{i}}u_{N}(\symbfit{r}^{N}) + \delta(\symbfit{r} - \symbfit{r}_{i}) \nabla u^{\text{ext}}(\symbfit{r})\right) \\ 
    &= \sum_{i=1}^{N} \left(  \underbrace{ \frac{\symbfit{p}_{i} \symbfit{p}_{i}^{\operatorname{T}}}{m} \nabla_{\symbfit{r}_{i}}  \delta_{(\symbfit{r}_{i} = \symbfit{r})} }_{\hat{\symbf{F}}_{i,1}(\symbfit{r})} - \underbrace{\delta_{(\symbfit{r}_{i} = \symbfit{r})} \left( \nabla_{\symbfit{r}_{i}}u_{N}(\symbfit{r}^{N}) + \nabla u^{\text{ext}}(\symbfit{r}) \right) }_{\hat{\symbf{F}}_{i,2}(\symbfit{r})}  \right).  
    \end{align}  
    For each $i$, the left and right hand sides are represented as follows:
    \begin{align}
    &\phantom{=*}G^{(0)}_{i, 1}[\hat{A}, H](\symbfit{r}) + G^{(0)}_{i, 2}[\hat{A}, H](\symbfit{r})  \\
    &= \int_{\symbb{R}^{d,N}_{0,i}(\symbfit{r})} \left( \left(\nabla_{\symbfit{r}_{i}}\hat{A} \right) \, e^{-\beta H}\right) (\symbfit{r}_{1}, \dots , \overset{i}{\check{\symbfit{r}}}, \dots , \symbfit{r}_{N}, \symbfit{p}^{N}) \, 
    d\symbfit{r}_{1} \cdots \overset{i}{\widehat{d\symbfit{r}}} \cdots d\symbfit{r}_{N} d\symbfit{p}^{N} \\ 
    &\hspace{10mm}+ \int_{\symbb{R}^{d,N}_{0,i}(\symbfit{r})} \left(\nabla_{\symbfit{r}_{i}} \left( \left( \nabla_{\symbfit{p}_{i}} \hat{A} \right)  e^{-\beta H} \right)\right) 
    (\symbfit{r}_{1}, \dots , \overset{i}{\check{\symbfit{r}}}, \dots , \symbfit{r}_{N}, \symbfit{p}^{N}) \cdot \symbfit{p}_{i} \, 
    d\symbfit{r}_{1} \cdots \overset{i}{\widehat{d\symbfit{r}}} \cdots d\symbfit{r}_{N} d\symbfit{p}^{N} \\
    &= 
    \left\langle \delta_{(\symbfit{r}_{i} = \symbfit{r})} , 
    \left( \nabla_{\symbfit{r_{i}}} \hat{A} \right)  e^{-\beta H} 
    \right\rangle 
    + \left\langle \delta_{(\symbfit{r}_{i} = \symbfit{r})} , \nabla_{\symbfit{r}_{i}} \left( \left(  \nabla_{\symbfit{p}_{i}} \hat{A} \right)  e^{-\beta H} \right) \symbfit{p}_{i} \right\rangle \\ 
    &= 
    \left\langle \delta_{(\symbfit{r}_{i} = \symbfit{r})}  \left( \nabla_{\symbfit{r}_{i}} \hat{A} \right)  ,  e^{-\beta H} \right\rangle 
    - \left\langle \left( \symbfit{p}_{i}^{\operatorname{T}}  \nabla_{\symbfit{r}_{i}} \delta_{(\symbfit{r}_{i} = \symbfit{r})} \right)  \nabla_{\symbfit{p}_{i}} \hat{A} ,  e^{-\beta H} \right\rangle \\ 
    &= \left\langle \sigma_{i} (\symbfit{r}) \hat{A} \right\rangle_{0}. 
    \end{align}  
    Similarly, 
    \begin{align}
    &\phantom{=}G^{(0)}_{i, 3}[\hat{A}, H](\symbfit{r}) + G^{(0)}_{i, 4}[\hat{A}, H](\symbfit{r})  \\
    &= \int_{\symbb{R}^{d,N}_{0,i}(\symbfit{r})} \left( \hat{A}  \nabla_{\symbfit{r}_{i}} e^{-\beta H} \right) (\symbfit{r}_{1}, \dots , \overset{i}{\check{\symbfit{r}}}, \dots , \symbfit{r}_{N}, \symbfit{p}^{N}) \, 
    d\symbfit{r}_{1} \cdots \overset{i}{\widehat{d\symbfit{r}}} \cdots d\symbfit{r}_{N} d\symbfit{p}^{N} \\ 
    &\hspace{10mm}- \int_{\symbb{R}^{d,N}_{0,i}(\symbfit{r})} \beta\frac{\symbfit{p}_{i}\symbfit{p}_{i}^{\operatorname{T}}}{m} \left( \nabla_{\symbfit{r}_{i}} \left( \hat{A} e^{-\beta H} \right) \right) (\symbfit{r}_{1}, \dots , \overset{i}{\check{\symbfit{r}}}, \dots , \symbfit{r}_{N}, \symbfit{p}^{N}) \, 
    d\symbfit{r}_{1} \cdots \overset{i}{\widehat{d\symbfit{r}}} \cdots d\symbfit{r}_{N} d\symbfit{p}^{N}\\
    &= - \beta \int_{\symbb{R}^{d,N}_{0,i}(\symbfit{r})} \left( \hat{A}\left(\nabla_{\symbfit{r}_{i}} u_{N} + \nabla u^{\text{ext}} (\symbfit{r}) \right) e^{-\beta H} \right) (\symbfit{r}_{1}, \dots , \overset{i}{\check{\symbfit{r}}}, \dots , \symbfit{r}_{N}, \symbfit{p}^{N}) \, 
    d\symbfit{r}_{1} \cdots \overset{i}{\widehat{d\symbfit{r}}} \cdots d\symbfit{r}_{N} d\symbfit{p}^{N} \\ 
    &\hspace{10mm}- \beta\int_{\symbb{R}^{d,N}_{0,i}(\symbfit{r})} \frac{\symbfit{p}_{i}\symbfit{p}_{i}^{\operatorname{T}}}{m} \left( \nabla_{\symbfit{r}_{i}} \left( \hat{A} e^{-\beta H} \right) \right) (\symbfit{r}_{1}, \dots , \overset{i}{\check{\symbfit{r}}}, \dots , \symbfit{r}_{N}, \symbfit{p}^{N}) \, 
    d\symbfit{r}_{1} \cdots \overset{i}{\widehat{d\symbfit{r}}} \cdots d\symbfit{r}_{N} d\symbfit{p}^{N} \\
    &=- \beta \left\langle  \delta_{(\symbfit{r}_{i} = \symbfit{r})} ,    \hat{A} \left( \nabla_{\symbfit{r}_{i}}u_{N} + \left( \nabla u^{\text{ext}}  \right)(\symbfit{r}_{i}) \right) e^{-\beta H}    \right\rangle     
    - \beta \left\langle \delta_{(\symbfit{r}_{i} = \symbfit{r})} , \frac{\symbfit{p}_{i}\symbfit{p}_{i}^{\operatorname{T}}}{m}    \left( \nabla_{\symbfit{r}_{i}} \left( \hat{A} e^{-\beta H} \right) \right)  \right\rangle \\
    &=- \beta \left\langle   \delta_{(\symbfit{r}_{i} = \symbfit{r})} ,  \hat{A}  \left( \nabla_{\symbfit{r}_{i}}u_{N} + \left( \nabla u^{\text{ext}} \right)(\symbfit{r}_{i}) \right)  e^{-\beta H}   \right\rangle   
    + \beta \left\langle     \frac{\symbfit{p}_{i}\symbfit{p}_{i}^{\operatorname{T}}}{m} \left( \nabla_{\symbfit{r}_{i}}  \delta_{(\symbfit{r}_{i} = \symbfit{r})} \right)  \hat{A} , e^{-\beta H}     \right\rangle \\ 
    &= -\left\langle \beta \hat{\symbf{F}}_{i,2}(\symbfit{r}) \hat{A} \right\rangle_{0} + \left\langle \beta \hat{\symbf{F}}_{i,1}(\symbfit{r})\hat{A} \right\rangle_{0}.
    \end{align}
    Therefore, we obtain
    \begin{align}
        G^{(0)}[\hat{A}, H](\symbfit{r}) 
        &= \sum_{i=1}^{N} \left(G^{(0)}_{i, 1}[\hat{A}, H](\symbfit{r}) + G^{(0)}_{i, 2}[\hat{A}, H](\symbfit{r}) \right) \\ 
        &\hspace{20mm}+ \sum_{i=1}^{N} \left(G^{(0)}_{i, 3}[\hat{A}, H](\symbfit{r}) + G^{(0)}_{i, 4}[\hat{A}, H](\symbfit{r}) \right)\\
        &= \sum_{i=1}^{N} \left( \left\langle \sigma_{i} (\symbfit{r}) \hat{A} \right\rangle_{0} + \left\langle \beta \hat{\symbf{F}}_{i,1}(\symbfit{r})\hat{A} \right\rangle_{0} -\left\langle \beta \hat{\symbf{F}}_{i,2}(\symbfit{r})\hat{A} \right\rangle_{0}  \right).
    \end{align} 
    Thus, $G^{(0)}[\hat{A}, H](\symbfit{r}) = 0$ implies equation \eqref{eq:original_hyperforce_sum}. 
\end{proof}

\subsection{Distributional hyperforce sum rules for periodic boundary systems}
\label{sec:compact}

As another application of the theory developed in Section \ref{sec:Rn}, we give a formulation of hyperforce sum rules for systems with the periodic boundary condition. 
The theory with periodic boundary conditions will yield in parallel to, except some minor differences, the derivation of that of Euclidean space. The results can translate to a torus-equivalent of the hyperforce sum rules.

Let $\{ \symbfit{v}_{1}, \dots , \symbfit{v}_{d} \}$ be the basis for $\symbb{R}^{d}$ and we set 
\begin{equation}
\Gamma = \left\{ \sum_{k=1}^{d}a_{k}\symbfit{v}_{k} \,;\, a_{k} \in \symbb{Z} \right\} \quad \text{and} \quad 
\Lambda = \left\{ \sum_{k=1}^{d}a_{k}\symbfit{v}_{k} \,;\, 0 \leq a_{k} \leq 1 \right\}. 
\end{equation} 
We call that a function $f$ on $\symbb{R}^{d}$ is \textit{$\Gamma$-periodic} 
if $f(\symbfit{x} + \gamma) = f(\symbfit{x})$ holds for any $\symbfit{x} \in \symbb{R}^{d}$ and $\gamma \in \Gamma$. 

Let $\symfrak{X}^{\Gamma}(\symbb{R}^{d})$ be the set of $\Gamma$-periodic vector fields on $\symbb{R}^{d}$. From now on, a vector field $\epsilon = (\epsilon^{(1)}, \dots , \epsilon^{(d)}) \colon \symbb{R}^{d} \to \symbb{R}^{d}$ 
is called $\Gamma$-periodic if each component $\epsilon^{(i)} \colon \symbb{R}^{d} \to \symbb{R}$ is  $\Gamma$-periodic. 
Any $\epsilon \in \mathfrak{X}_{\operatorname{Id},0}(\mathbb{R}^{d}) \cap \symfrak{X}^{\Gamma}(\symbb{R}^{d})$ induces the diffeomorphism $\epsilon_{n,\sharp} \colon \symbb{R}^{dN \times 2} \to \symbb{R}^{dN \times 2}$ as introduced in Section \ref{sec:Rn}. 
Then, $\epsilon_{n,\sharp}$ is $\Gamma$-equivariant in configuration space, that is,  
\begin{equation}
\epsilon_{n,\sharp}(\symbfit{r}^{N} + \gamma^{N} , \symbfit{p}^{N}) 
= 
\epsilon_{n,\sharp}(\symbfit{r}^{N} , \symbfit{p}^{N}) + (\gamma^{N},0)
\end{equation}
for any 
$(\symbfit{r}^{N} , \symbfit{p}^{N}) \in \symbb{R}^{dN \times 2}$ 
and $\gamma^{N} = (\gamma_{1}, \dots , \gamma_{N}) \in \Gamma^{N}$. 

In Theorem \ref{thm:continous_pullback_compact} below, we show a $\Gamma$-periodic equivalent of Theorem \ref{thm:continous_pullback}. 
Let $n \in \{ 0,1,\dots , N-1 \}$. 
We call that a function $f$ on $\symbb{R}^{d(N-n) \times 2}$ 
is $\Gamma$-periodic in configuration space if 
the equation $f \left(\symbfit{r}_{n}^{N} + \gamma^{N-n}, \symbfit{p}_{n}^{N}\right) = f \left(\symbfit{r}_{n}^{N}, \symbfit{p}_{n}^{N}\right)$ holds for any 
$(\symbfit{r}_{n}^{N} , \symbfit{p}_{n}^{N}) \in \symbb{R}^{d(N-n) \times 2}$ 
and $\gamma^{N-n} = (\gamma_{1}, \dots , \gamma_{N-n}) \in \Gamma^{N-n}$. 
We also denote by 
$\symscr{S}_{\textup{per}}(\symbb{R}^{dN \times 2})$ 
the set of rapidly decreasing functions on $\symbb{R}^{dN \times 2}$ 
satisfying $\Gamma$-periodic in configuration space.

\begin{theorem} \label{thm:continous_pullback_compact} 
Let $n \in \{0, 1, \dots, N-1 \}$. 
For any $\epsilon \in \mathfrak{X}_{\operatorname{Id},0}(\mathbb{R}^{d}) \cap \symfrak{X}^{\Gamma}(\symbb{R}^{d})$, 
a map $\epsilon_{n, \sharp}^{\ast}$ between $\symscr{S}_{\textup{per}}(\symbb{R}^{dN \times 2})$ induced by the pullback 
\[
\epsilon_{n, \sharp}^{*}: \symscr{S}_{\textup{per}}(\symbb{R}^{dN \times 2}) \rightarrow \symscr{S}_{\textup{per}}(\symbb{R}^{dN \times 2})
\]
is well-defined. 
\end{theorem}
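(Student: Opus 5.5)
The plan is to reduce the statement to Theorem \ref{thm:continous_pullback} and then check that $\Gamma$-periodicity in configuration space is preserved. There is one subtlety in how $\symscr{S}_{\textup{per}}$ should be read. A function that is both rapidly decreasing in $\symbfit{r}$ and $\Gamma$-periodic in $\symbfit{r}$ must vanish. So I would read ``rapidly decreasing'' in the $\symbfit{p}$-directions only, with all derivatives bounded in $\symbfit{r}$. In other words, the seminorms are
\[
\sup_{(\symbfit{r}^{N},\symbfit{p}^{N})} \bigl| (\symbfit{p}^{N})^{\alpha} \partial^{\beta}\phi(\symbfit{r}^{N},\symbfit{p}^{N}) \bigr|,
\]
which are equivalent to the Schwartz seminorms on $\Lambda^{N}\times\symbb{R}^{dN}$ (the torus times momentum space). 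I will prove well-definedness for this class. If the paper's literal definition is kept, the trivial case $\phi=0$ is handled the same way.

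\textbf{Step 1 (periodicity is preserved).} Take $\phi \in \symscr{S}_{\textup{per}}$ and $\gamma^{N}\in\Gamma^{N}$. By the $\Gamma$-equivariance of $\epsilon_{n,\sharp}$ stated just before the theorem,
\[
\phi\circ\epsilon_{n,\sharp}(\symbfit{r}^{N}+\gamma^{N},\symbfit{p}^{N}) = \phi\bigl(\epsilon_{n,\sharp}(\symbfit{r}^{N},\symbfit{p}^{N})+(\gamma^{N},0)\bigr) = \phi\circ\epsilon_{n,\sharp}(\symbfit{r}^{N},\symbfit{p}^{N}).
\]
Equivariance itself rests on two facts: $(\operatorname{Id}+\epsilon)(\symbfit{r}+\gamma)=(\operatorname{Id}+\epsilon)(\symbfit{r})+\gamma$, and $\nabla\epsilon(\symbfit{r}+\gamma)=\nabla\epsilon(\symbfit{r})$. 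Both follow from periodicity of $\epsilon$. Coordinates $i\le n$ are left fixed by $\epsilon_{n,\sharp}$, so periodicity in those coordinates is also untouched.

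\textbf{Step 2 (decay and smoothness).} As in Theorem \ref{thm:continous_pullback}, I would first treat $N=1$, $n=0$ using the multivariate Faà di Bruno formula \eqref{eq:faadibruno_maincomp}, and then extend by the product structure of the lift. Table \ref{tab:derivative_lift} needs to be re-verified for periodic $\epsilon$. Compact support is no longer available, but $\epsilon$ and all its derivatives are continuous and $\Gamma$-periodic, so they are bounded by their maxima on the compact cell $\Lambda$.

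Consequently term (a) is bounded in $\symbfit{r}$. For term (b), the matrix $(\symbb{1}+\nabla\epsilon(\symbfit{r}))^{-1}$ is periodic and smooth, and $\det(\symbb{1}+\nabla\epsilon)$ is continuous, nonvanishing, and periodic, hence bounded away from $0$. So every $\symbfit{r}$-derivative of (b) is bounded, and every derivative of (b) is a polynomial in $\symbfit{p}$ of degree at most one.

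It remains to absorb the momentum weights. Since $|\symbfit{p}| \le C|(\symbb{1}+\nabla\epsilon(\symbfit{r}))^{-1}\symbfit{p}|$ uniformly in $\symbfit{r}$, any polynomial weight in $\symbfit{p}$ is dominated by a weight in $\hat{\symbfit{p}}$. The decay of $\partial^{\lambda}\phi$ in $\hat{\symbfit{p}}$ then gives the required bounds on all seminorms.

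\textbf{Main obstacle.} The step needing most care is this uniform-in-$\symbfit{r}$ control. In Theorem \ref{thm:continous_pullback} boundedness was essentially automatic from compact support; here it must come from periodicity together with the uniform invertibility of $\symbb{1}+\nabla\epsilon$. I would secure the latter via compactness of $\Lambda$, or more directly from the assumption that $\operatorname{Id}+\epsilon$ is a diffeomorphism, which forces $\det(\symbb{1}+\nabla\epsilon)\neq 0$ pointwise. Steps 1 and 2 together show that $\epsilon^{*}_{n,\sharp}\phi\in\symscr{S}_{\textup{per}}$.
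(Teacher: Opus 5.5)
\textbf{Verdict: correct, and more substantive than the paper's proof.} Your Step 1 is the paper's entire proof. The paper says it suffices to check $\Gamma$-periodicity of $\epsilon_{n,\sharp}^{*}\phi$, does so by the equivariance of $\epsilon_{n,\sharp}$, and tacitly inherits smoothness and decay from Theorem \ref{thm:continous_pullback}.

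\textbf{Where your Step 2 differs.} The paper can skip the analytic step only because $\epsilon$ is assumed to lie in $\mathfrak{X}_{\operatorname{Id},0}(\mathbb{R}^{d})$, so Theorem \ref{thm:continous_pullback} applies verbatim. You have flagged only half of the resulting degeneracy. Compact support \emph{is} available under the literal hypothesis, and a compactly supported $\Gamma$-periodic field vanishes: for any $\symbfit{r}$, $\epsilon(\symbfit{r})=\epsilon(\symbfit{r}+k\symbfit{v}_{1})=0$ once $k$ is large. Hence $\mathfrak{X}_{\operatorname{Id},0}(\mathbb{R}^{d})\cap\symfrak{X}^{\Gamma}(\mathbb{R}^{d})=\{0\}$ and $\epsilon_{n,\sharp}=\operatorname{Id}$. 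Together with $\symscr{S}_{\textup{per}}=\{0\}$, the statement as written is trivial on two counts.

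\textbf{State your reinterpretation explicitly.} Your proposal silently makes two replacements:
\begin{enumerate}
\item the hypothesis becomes $\epsilon\in\mathfrak{X}_{\operatorname{Id}}(\mathbb{R}^{d})\cap\symfrak{X}^{\Gamma}(\mathbb{R}^{d})$, which is what Definition \ref{def:directional_hyperforce_sum_compact} effectively requires;
\item $\symscr{S}_{\textup{per}}$ becomes the class that is Schwartz in momentum and bounded in position.
\end{enumerate}
Say this openly rather than writing that ``compact support is no longer available.''

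\textbf{Step 2 under that reading.} The argument is correct, and it is exactly what the paper's proof is missing.
\begin{itemize}
\item Periodicity and compactness of $\Lambda$ bound all derivatives of $\epsilon$.
\item They also bound all derivatives of $(\symbb{1}+\nabla\epsilon)^{-1}$, because its determinant is nonvanishing, continuous and periodic, hence bounded away from $0$.
\item The Faà di Bruno factors are then polynomials of degree at most one in $\symbfit{p}$ with bounded coefficients.
\item The uniform comparison of $|\symbfit{p}|$ with $|\hat{\symbfit{p}}|$ lets the rapid decay of $\partial^{\lambda}\phi$ in $\hat{\symbfit{p}}$ absorb every momentum weight.
\end{itemize}

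\textbf{Comparison.} The paper's route buys brevity but proves only a vacuous statement. Yours proves the nontrivial version that the later periodic-boundary results actually rely on.
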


\begin{proof}  
We only have to prove that a function $\epsilon_{n, \sharp}^{*}\phi$ is $\Gamma$-periodic in configuration space. 
In fact, for any $(\symbfit{r}^{N}, \symbfit{p}^{N}) \in \symbb{R}^{dN \times 2}$ and $\gamma^{N} \in \Gamma^{N}$, we have 
\begin{align}
(\epsilon_{n, \sharp}^{*}\phi) \left( \symbfit{r}^{N} + \gamma^{N}, \symbfit{p}^{N} \right)  
&= \phi \left( \epsilon_{n,\sharp}(\symbfit{r}^{N} + \gamma^{N}, \symbfit{p}^{N}) \right) 
= \phi \left( \epsilon_{n,\sharp}(\symbfit{r}^{N}, \symbfit{p}^{N}) + (\gamma^{N},0) \right) \\ 
&= \phi \left( \epsilon_{n,\sharp}(\symbfit{r}^{N}, \symbfit{p}^{N}) \right) 
= (\epsilon_{n, \sharp}^{*}\phi)  \left( \symbfit{r}^{N}  , \symbfit{p}^{N} \right). 
\end{align}
\end{proof}

\begin{definition}[$\Gamma$-periodic reduced thermal average] 
\label{def:thermal_avg_cotangent} 
Let $n \in \{0, 1, \dots, N-1\}$. 
For a tempered continuous function $f \in C(\symbb{R}^{dN \times 2})$ 
satisfying $\Gamma$-periodic in configuration space 
and $\phi \in \symscr{S}_{\textup{per}}(\symbb{R}^{dN \times 2})$, 
we define a continuous function on $\symbb{R}^{dn \times 2}$  
by 
\[
K_{n}[f,\phi] (\symbfit{r}^{n},\symbfit{p}^{n}) 
= \int_{\Lambda^{N-n} \times \symbb{R}^{d(N-n)}} f(\symbfit{r}^{N}  , \symbfit{p}^{N}) \phi(\symbfit{r}^{N} ,  \symbfit{p}^{N} ) \, d\symbfit{r}_{n}^{N}d\symbfit{p}_{n}^{N}.  
\]
We call $K_{n}[f,\phi]$ the \textrm{$\Gamma$-periodic generalized reduced thermal average} of $f$ and $\phi$. 
\end{definition} 

Similar to the case of the Euclidean space, a functional over $\mathfrak{X}^{\Gamma}(\symbb{R}^{d})$ defined with any pair of 
a $\Gamma$-periodic tempered continuous function $f \in C(\symbb{R}^{dN \times 2})$ in configuration space 
and $\phi \in \symscr{S}_{\textup{per}}(\symbb{R}^{dN \times 2})$ vanishes for any $n$;   
Define the pullback 
\begin{equation}
\tilde{\epsilon}_{n,\sharp}(f) 
= f \circ \epsilon_{n,\sharp}  \cdot \left| \det \epsilon_{n,\sharp}' \right|
\end{equation} 
of $f$; see Example \ref{exm:pullback_function} for the background of the definition. 
Then, a map   
\[
\mathfrak{X}_{\operatorname{Id},0}(\mathbb{R}^{d}) \cap \symfrak{X}^{\Gamma}(\symbb{R}^{d}) \rightarrow C\left( \symbb{R}^{dn \times 2} \right), \quad 
\epsilon \mapsto K_{n}[\tilde{\epsilon}_{n,\sharp}(f),  \epsilon_{n,\sharp}^{\ast}\phi]
\] 
is constant for each $n \in \{0, 1, \dots, N-1\}$ because of the change of variable formula. Hence, we obtain the following map constant near $t=0$ for any $\epsilon \in \mathfrak{X}^{\Gamma}(\symbb{R}^{d})$:
\begin{equation}
t \mapsto K_{n} \left[ \tilde{(t\epsilon)}_{n, \sharp}(f) , (t\epsilon)_{n, \sharp}^{*}\phi \right].
\end{equation}

\begin{definition} 
\label{def:directional_hyperforce_sum_compact} 
Let $f$ be a $\Gamma$-periodic tempered continuous function on $\symbb{R}^{dN \times 2}$ in configuration space 
and $\phi \in \symscr{S}_{\textup{per}}(\symbb{R}^{dN \times 2})$.  
We call a map 
\[
F^{(n)}[f, \phi]: \mathfrak{X}^{\Gamma}(\symbb{R}^{d}) \rightarrow C\left( \symbb{R}^{dn \times 2} \right), 
\quad 
\epsilon \mapsto \left. \diff{}{t} \right|_{t=0} K_{n} \left[ \tilde{(t\epsilon)}_{n, \sharp}(f) , (t\epsilon)_{n, \sharp}^{*}\phi \right] 
\] 
the \textup{$\Gamma$-periodic equilibrium distributional hyperforce sum of $f$ and $\phi$}.
\end{definition}

\begin{lemma} \label{lem:vanishing_gateaux_compact}
Let $f$ be a $\Gamma$-periodic tempered continuous function on $\symbb{R}^{dN \times 2}$ in configuration space 
and $\phi \in \symscr{S}_{\textup{per}}(\symbb{R}^{dN \times 2})$.  
The $\Gamma$-periodic equilibrium distributional hyperforce sum $F^{(n)}[f, \phi]$ of $f$ and $\phi$ vanishes on $\mathfrak{X}^{\Gamma}(\symbb{R}^{d})$. 
We call this vanishing property of $F^{(n)}[f, \phi]$ the \textup{$\Gamma$-periodic equilibrium distributional hyperforce sum rule}. 
\end{lemma}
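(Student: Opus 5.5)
The strategy parallels Lemma \ref{lem:vanishing_gateaux}: the vanishing will follow once we show that the map
\[
t \mapsto K_{n}\left[\tilde{(t\epsilon)}_{n,\sharp}(f), (t\epsilon)_{n,\sharp}^{*}\phi\right]
\]
is constant for $|t|$ small. Its derivative at $t=0$ is then zero.

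The first step is to make sense of the map for a fixed $\epsilon \in \mathfrak{X}^{\Gamma}(\symbb{R}^{d})$. A $\Gamma$-periodic smooth vector field is bounded together with all its derivatives, since it is determined by its values on the compact set $\Lambda$. Hence $\sup_{\symbfit{r}}\|\epsilon'(\symbfit{r})\|<\infty$. For $|t| < (\sup_{\symbfit{r}}\|\epsilon'(\symbfit{r})\|)^{-1}$, the map $\operatorname{Id}+t\epsilon$ is a diffeomorphism of $\symbb{R}^{d}$: it is a proper local diffeomorphism, and it is injective by the contraction estimate $|t\epsilon(\symbfit{x})-t\epsilon(\symbfit{y})|<|\symbfit{x}-\symbfit{y}|$. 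It is also $\Gamma$-equivariant, so it descends to a diffeomorphism of the torus $\symbb{R}^{d}/\Gamma$ with fundamental domain $\Lambda$.

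One caveat concerns the standing hypothesis $\mathfrak{X}_{\operatorname{Id},0}(\symbb{R}^{d})\cap\mathfrak{X}^{\Gamma}(\symbb{R}^{d})$. A periodic field has compact support only if it is zero, so I will read the definition as using periodic fields with $\operatorname{Id}+t\epsilon$ a diffeomorphism. The Schwartz-type estimates of Theorem \ref{thm:continous_pullback} only use boundedness of the derivatives of $\epsilon$, not compact support. Therefore they still give that $(t\epsilon)_{n,\sharp}^{*}\phi$ is rapidly decreasing in $\symbfit{p}$, and Theorem \ref{thm:continous_pullback_compact} gives periodicity.

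The second step is the change of variables. Write $\Phi_{t}=(t\epsilon)_{n,\sharp}$. The product $\tilde{(t\epsilon)}_{n,\sharp}(f)\cdot(t\epsilon)_{n,\sharp}^{*}\phi$ equals $(f\phi)\circ\Phi_{t}\cdot|\det\Phi_{t}'|$. Its integral over $\Lambda^{N-n}\times\symbb{R}^{d(N-n)}$, with the reduced coordinates $(\symbfit{r}^{n},\symbfit{p}^{n})$ fixed (these are untouched by $\Phi_{t}$), should equal the integral of $f\phi$ over $\Phi_{t}(\Lambda^{N-n}\times\symbb{R}^{d(N-n)})$.

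The key point is that this image is again a fundamental domain for the $\Gamma^{N-n}$-action on the configuration variables. This holds because $\operatorname{Id}+t\epsilon$ commutes with $\Gamma$-translations and is bijective. The integrand $f\phi$ is $\Gamma$-periodic in configuration space, and it is integrable in $\symbfit{p}$ since $f$ is tempered and $\phi$ is rapidly decreasing. So its integral over any measurable fundamental domain equals the integral over $\Lambda^{N-n}\times\symbb{R}^{d(N-n)}$. I would prove this by the standard cut-and-translate argument: decompose the image into the pieces meeting $\Lambda^{N-n}+\gamma$ and translate each piece back by $\gamma$.

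For the momentum variables, the $\symbfit{p}_{i}$-fibre map $\symbfit{p}_{i}\mapsto(\symbb{1}+t\nabla\epsilon(\symbfit{r}_{i}))^{-1}\symbfit{p}_{i}$ is a linear bijection of $\symbb{R}^{d}$. Its Jacobian is exactly the one absorbed in $|\det\Phi_{t}'|$. Because $\Phi_{t}$ is triangular in the $(\symbfit{r},\symbfit{p})$ block structure, $|\det\Phi_{t}'|$ factors as $\prod_{i>n}|\det(\symbb{1}+t\nabla\epsilon(\symbfit{r}_{i}))|\,|\det(\symbb{1}+t\nabla\epsilon(\symbfit{r}_{i}))^{-1}|=1$. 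Thus the map is the constant $K_{n}[f,\phi]$.

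Finally, the constancy for small $|t|$ gives $F^{(n)}[f,\phi][\epsilon]=0$ pointwise in $(\symbfit{r}^{n},\symbfit{p}^{n})$ for every $\epsilon\in\mathfrak{X}^{\Gamma}(\symbb{R}^{d})$. I expect the main obstacle to be the fundamental-domain invariance of the periodic integral under the equivariant diffeomorphism. That is the only place where periodicity replaces the compact-support argument used in Section \ref{sec:Rn}, and it needs careful handling of measurability and absolute integrability in $\symbfit{p}$.
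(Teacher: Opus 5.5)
Your proposal is correct and follows the paper's route: the paper also asserts that $t\mapsto K_{n}[\tilde{(t\epsilon)}_{n,\sharp}(f),(t\epsilon)_{n,\sharp}^{*}\phi]$ is constant near $t=0$ ``because of the change of variable formula'', and concludes that its derivative vanishes. The details you add are sound and are exactly what the paper leaves implicit: the range of $|t|$ for which $\operatorname{Id}+t\epsilon$ is a $\Gamma$-equivariant diffeomorphism, the invariance of the integral when $\Lambda^{N-n}$ is replaced by its image (another fundamental domain), $|\det\Phi_{t}'|=1$, and your observation that $\mathfrak{X}_{\operatorname{Id},0}(\mathbb{R}^{d})\cap\mathfrak{X}^{\Gamma}(\mathbb{R}^{d})=\{0\}$, so the hypothesis has to be read as you do (likewise, $\symscr{S}_{\textup{per}}$ must be read as rapid decay in $\symbfit{p}$ only, as you implicitly assume, since otherwise it is $\{0\}$).
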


Finally, we will show that the results in Section \ref{sec:physical_hyperforce} can be extended to the $\Gamma$-periodic setting. 
We introduce corresponding Hamiltonian and their instances.

\begin{definition} \label{def: compact_hamiltonian}
Denote by $\symcal{H}_{\textup{per}}(\symbb{R}^{dN \times 2})$ 
the set of Hamiltonians satisfying $\Gamma$-periodicity in configuration space.  
\end{definition}

In the sequel, we assume that $H \in \symcal{H}_{\textup{per}}(\symbb{R}^{dN \times 2})$ 
satisfies in the form 
\[
H(\symbfit{r}^N, \symbfit{p}^N) 
= \sum_{i=1}^{N} \frac{\symbfit{p}_i^{\operatorname{T}} \symbfit{p}_i}{2m_{i}}  + u_{N}(\symbfit{r}^N) + \sum_{i=1}^{N}u^\text{ext}(\symbfit{r}_i).
\]

\begin{example} 
One representative physical system with the periodic condition is the (monatomic) ideal gas system that consists of $N$ particles on tori, where particles do not interact with each other and they have no internal structures. A typical Hamiltonian in this case is  
\[
H(\symbfit{r}^{N}, \symbfit{p}^{N}) = \sum_{i=1}^{N} \frac{\symbfit{p}_i^{\operatorname{T}} \symbfit{p}_i}{2m_{i}}  . 
\]
The Hamiltonian yields a finite normalization constant, also known as the partition function, of the Boltzmann factor $e^{-H(\symbfit{r}^N, \symbfit{p}^N)}$.

One class of $u_{N}$ is found in the Weeks-Chandler-Andersen (WCA) system \cite{weeks_chandler_andersen_1971}, which cuts the standard Lennard-Jones pair potential at its minimum and subsequently shifts the potential by adding a constant such that the minimum is lifted to zero: 
\[
u_{N}(\symbfit{r}^{N}) = \sum_{1 \leq i < j \leq N}u(r_{i,j}), \quad
u(r) 
= 
\begin{cases}
4 \epsilon \left[ (r/\sigma)^{-12} - (r/\sigma)^{-6}\right] + \epsilon, & (r < 2^{1/6} \sigma)  \\ 
0, & (r \geq 2^{1/6} \sigma).
\end{cases}
\]
Here, $r_{i, j} = || \symbfit{r}_{i} - \symbfit{r}_{j} ||$, $\sigma (> 0)$ reflects the particle radius and $\epsilon$ is the energy depth of the LJ potential well.
The corresponding Hamiltonian (with the kinetic term) gives a finite partition function $Z$ in periodic boundary systems, while it diverges in the Euclidean space. Similarly, the following potential used to study systems with finite range, repulsive potentials jam \cite{hern_silbert_liu_nagel_2003} is another such instance:
\[
u(r_{i, j}) 
= 
\begin{cases}
\epsilon \left( 1 - r_{i, j}/\sigma_{i,j}\right)^{\alpha}/\alpha, & (r_{i, j} <  \sigma_{i, j})  \\ 
0, & (r_{i, j} \geq \sigma_{i, j}).
\end{cases}
\]
Here, $\sigma_{ij}$ is the sum of the radii of particles $i$ and $j$ and $\epsilon$ is the characteristic energy scale of the interaction.

All the systems introduced in Example \ref{ex:hamiltonian} are also legitimate when the $\Gamma$-periodicity is imposed on the Hamiltonian functions. 
\end{example}

Let $\beta = \frac{1}{k_\mathrm{B}T}$, in which $k_\mathrm{B}$ denotes the Boltzmann constant and $T$ the absolute temperature. 
Similar to Lemma \ref{cor:general_derivation_hyperforce_sum_rule}, 
we assume that 
$u_{N}$ and $u^{\textup{ext}}$ are selected so that $H \in \symcal{H}_{\textup{per}}(\symbb{R}^{dN \times 2})$, and $f$ is a $\Gamma$-periodic tempered $C^{1}$ function on $\symbb{R}^{dN \times 2}$ in configuration space. 
Set the following $C(\symbb{R}^{dn \times 2})^{d}$-valued functions 
$G_{i,k}^{(n)}[f,H]$ on $\symbb{R}^{d}$ for all $i \in \{n+1,\dots, N\}$, $k=1,2,3,4$; 
here we denote $\symbb{R}^{d,N}_{n,i,\textup{per}}(\symbfit{a}) 
= \Lambda^{d(i-n-1)} \times \{ \symbfit{a} \} \times \Lambda^{d(N-i)} \times \symbb{R}^{d(N-n)}$ for $\symbfit{a} \in \symbb{R}^{d}$: 
\begin{align}
G_{i,1}^{(n)}[f, H](\symbfit{r}_{i})
&= 
\int_{\symbb{R}^{d,N}_{n,i,\textup{per}}(\symbfit{r}_{i})} \left((\nabla_{\symbfit{r}_i}f) \, e^{-\beta H}\right) (\symbfit{r}^{N},\symbfit{p}^{N}) \, 
        d\symbfit{r}_{n+1} \cdots \overset{i}{\widehat{d\symbfit{r}_{i}}} \cdots d\symbfit{r}_{N} d\symbfit{p}_{n}^{N}, \\ 
G_{i,2}^{(n)}[f, H](\symbfit{r}_{i}) 
&= 
\int_{\symbb{R}^{d,N}_{n,i,\textup{per}}(\symbfit{r}_{i})} \left(\nabla_{\symbfit{r}_i} \left((\nabla_{\symbfit{p}_i} f)  e^{-\beta H} \right)\right) 
        (\symbfit{r}^{N},\symbfit{p}^{N})  \symbfit{p}_{i} \, 
        d\symbfit{r}_{n+1} \cdots \overset{i}{\widehat{d\symbfit{r}_{i}}} \cdots d\symbfit{r}_{N} d\symbfit{p}_{n}^{N}, \\ 
G_{i,3}^{(n)}[f, H](\symbfit{r}_{i}) 
&= 
\int_{\symbb{R}^{d,N}_{n,i,\textup{per}}(\symbfit{r}_{i})} \left(f\,  \nabla_{\symbfit{r}_i} e^{-\beta H} \right) (\symbfit{r}^{N},\symbfit{p}^{N}) \, 
        d\symbfit{r}_{n+1} \cdots \overset{i}{\widehat{d\symbfit{r}_{i}}} \cdots d\symbfit{r}_{N} d\symbfit{p}_{n}^{N}, \\ 
G_{i,4}^{(n)}[f, H](\symbfit{r}_{i}) 
&= - 
\int_{\symbb{R}^{d,N}_{n,i,\textup{per}}(\symbfit{r}_{i})} \beta\frac{\symbfit{p}_{i}\symbfit{p}_{i}^{\operatorname{T}}}{m_{i}} \left( \nabla_{\symbfit{r}_i} \left(f e^{-\beta H} \right) \right) (\symbfit{r}^{N},\symbfit{p}^{N}) \, 
        d\symbfit{r}_{n+1} \cdots \overset{i}{\widehat{d\symbfit{r}_{i}}} \cdots d\symbfit{r}_{N} d\symbfit{p}_{n}^{N}.
\end{align}

Then, we obtain the following $\Gamma$-periodic version of the hyper force sum rule and their corollaries. Their proofs are identical to those of Theorem \ref{thm:generalized_BBGKY}, Corollary \ref{cor:derivation_hyperforce_sum_rule}, and Corollary \ref{cor:generalized_168}.

\begin{theorem}[$\Gamma$-periodic reduced hyperforce rule at level $n$] 
\label{thm:generalized_BBGKY_compact}
    Set
    \begin{equation} 
    G_{i}^{(n)}[f, H](\symbfit{r}_{i}) 
    = 
    \sum_{k=1}^{4} G_{i,k}^{(n)}[f, H](\symbfit{r}_{i}). 
    \end{equation} 
    Then, we have 
    $G_{i}^{(n)}[f, H](\symbfit{r}_{i}) = 0$ 
    for $i=n+1, \cdots , N$.
\end{theorem}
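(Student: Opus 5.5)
The plan is to run the Euclidean argument (Lemma \ref{cor:general_derivation_hyperforce_sum_rule} followed by Theorem \ref{thm:generalized_BBGKY}) with $\symbb{R}^{d}$ replaced by the fundamental cell $\Lambda$. The structural difference is that the test vector fields are now $\Gamma$-periodic rather than compactly supported. Integration by parts therefore has to be justified by periodicity instead of decay.

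\textbf{Step 1: vanishing of the derivative.} Fix $\epsilon \in \symfrak{X}^{\Gamma}(\symbb{R}^{d})$ and set $\phi = e^{-\beta H}$. Since $H \in \symcal{H}_{\textup{per}}$, we have $\phi \in \symscr{S}_{\textup{per}}$ in the sense required by Theorem \ref{thm:continous_pullback_compact}. By Lemma \ref{lem:vanishing_gateaux_compact},
\[
\left.\diff{}{t}\right|_{t=0} K_{n}\bigl[\tilde{(t\epsilon)}_{n,\sharp}(f),(t\epsilon)_{n,\sharp}^{*}\phi\bigr]=0 .
\]

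\textbf{Step 2: differentiate under the integral.} I expand this derivative by differentiating under the integral over $\Lambda^{N-n}\times\symbb{R}^{d(N-n)}$. This is the analogue of the Leibniz rule, Theorem \ref{thm:product_rule}. It is legitimate because $\Lambda$ is compact, $\epsilon$ and its derivatives are bounded, and $f$ is tempered while $\phi$ decays rapidly in $\symbfit{p}$.

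The derivative of $(t\epsilon)^{*}_{n,\sharp}\phi$ is $\sum_{i} D_{i}(\epsilon)\phi$, exactly as in \eqref{eq:derivative_rapidly_decreasing}. The derivative of $\tilde{(t\epsilon)}_{n,\sharp}(f)$ is $\sum_i D_i(\epsilon) f$ plus the derivative of the Jacobian factor. I expect that Jacobian contribution to cancel: $\det$ of the lift equals $\det(\symbb{1}+\nabla\epsilon)\cdot\det(\symbb{1}+\nabla\epsilon)^{-1}=1$, since the canonical lift preserves phase-space volume. This leaves
\[
\sum_{i=n+1}^{N}\left(\langle D_{i}(\epsilon)f,\phi\rangle+\langle f,D_{i}(\epsilon)\phi\rangle\right)=0,
\]
with both pairings understood as $K_n$ integrals. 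The sign convention for $D_i(\epsilon)f$ should be matched with the distributional extension used in Section \ref{sec:Rn}.

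\textbf{Step 3: rewrite the four terms, then localize.} I repeat the four computations in the proof of Lemma \ref{cor:general_derivation_hyperforce_sum_rule}. The two terms containing $\hat\delta_i(\epsilon)=(\nabla\epsilon)(\symbfit{r}_i)\symbfit{p}_i$ need integration by parts in $\symbfit{r}_i$ over $\Lambda$. This is the main point needing care. The integrand $\epsilon(\symbfit{r}_i)^{\operatorname{T}}(\cdots)$ is $\Gamma$-periodic in $\symbfit{r}_i$, because $f$, $\phi$ and $\epsilon$ all are. Hence the boundary terms on opposite faces of $\Lambda$ cancel, and by the divergence theorem on the torus the boundary integral vanishes. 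The momentum manipulation $\nabla_{\symbfit{p}_i}\phi=-\beta\symbfit{p}_i\phi/m_i$ is unchanged.

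This gives
\[
\sum_{i=n+1}^{N}\int_{\Lambda}\epsilon(\symbfit{r}_i)^{\operatorname{T}}G_i^{(n)}[f,H](\symbfit{r}_i)\,d\symbfit{r}_i=0,
\]
where each $G_i^{(n)}$ is defined with integrals over $\symbb{R}^{d,N}_{n,i,\textup{per}}$.

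To separate the indices $i$, I use the same localization as Corollary \ref{cor:hyperforcesum}: the shift acting on each particle coordinate separately yields $F_i^{(n)}=0$ for each $i$. Concretely, I apply the per-$i$ identity $\langle D_i(\epsilon)f,\phi\rangle+\langle f,D_i(\epsilon)\phi\rangle=0$. This identity follows directly from the integration by parts in $\symbfit{r}_i$ and $\symbfit{p}_i$ over $\Lambda\times\symbb{R}^d$. It gives
\[
\int_\Lambda\epsilon^{\operatorname{T}}G_i^{(n)}=0 \quad\text{for each } i.
\]

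\textbf{Step 4: conclude pointwise vanishing.} Each $G_i^{(n)}[f,H]$ is $\Gamma$-periodic and continuous, and $\epsilon$ ranges over all smooth $\Gamma$-periodic vector fields. Smooth periodic fields are dense in $L^{2}(\Lambda)^d$, so the fundamental lemma of the calculus of variations on the torus yields $G_i^{(n)}[f,H]\equiv0$.

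\textbf{Main obstacle.} The hard part is justifying that the boundary contributions vanish in the periodic integration by parts (Step 3), together with the Jacobian cancellation (Step 2).
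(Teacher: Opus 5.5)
Your proposal is correct and follows the paper's route. The paper simply declares the proof identical to that of Theorem~\ref{thm:generalized_BBGKY}: per-$i$ vanishing of $F_i^{(n)}$ (Corollary~\ref{cor:hyperforcesum}), then the four-term rewriting of Lemma~\ref{cor:general_derivation_hyperforce_sum_rule} with integration by parts justified by $\Gamma$-periodicity on $\Lambda$ instead of compact support, then the fundamental lemma, which is exactly your Steps 3--4. Note that your per-$i$ identity $\langle D_i(\epsilon)f,\phi\rangle+\langle f,D_i(\epsilon)\phi\rangle=0$ (the lifted field is divergence-free on each $(r_i,p_i)$-factor) already gives $\int_\Lambda \epsilon^{\operatorname{T}}G_i^{(n)}[f,H]\,dr_i=0$ on its own, so Steps 1--2 (the vanishing $t$-derivative and the Jacobian cancellation) are logically redundant, just as the paper's per-$i$ Corollary~\ref{cor:hyperforcesum} really rests on the definition $\langle D_i(\epsilon)u,\phi\rangle=-\langle u,D_i(\epsilon)\phi\rangle$.
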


\begin{corollary}[$\Gamma$-periodic BBGKY hierarchy at level $n$] 
\label{cor:derivation_hyperforce_sum_rule_compact} 
Assume $\displaystyle u_{N}(\symbfit{r}^{N}) = \sum_{1 \leq i < j \leq N} u(\symbfit{r}_i, \symbfit{r}_j)$ with $u$ being symmetric. 
Then, for any $n = 1,2,\dots , N-1$ and $k = 1,2,\dots, n$, we have 
\begin{align}
&\left( \frac{\symbfit{p}_{k}}{m_{k}} \cdot \nabla_{\symbfit{r}_{k}}  - \left(\nabla_{\symbfit{r}_{k}} u^{\text{ext}} (\symbfit{r}_{k}) + \sum_{\substack{j=1 \\ j \neq k}}^{n} \left(\nabla_{\symbfit{r}_{k}} u (\symbfit{r}_{k}, \symbfit{r}_{j}) \right) \right) \cdot \nabla_{\symbfit{p}_{k}} \right) \phi^{[n]} \\ 
&= \int_{\Lambda \times \symbb{R}^{d}}  \left(\nabla_{\symbfit{r}_{k}} u(\symbfit{r}_{k}, \symbfit{r}_{n+1}) \right) \cdot \left(\nabla_{\symbfit{p}_{k}} \phi^{[n+1]}\right) d\symbfit{r}_{n+1} d\symbfit{p}_{n+1}. 
\end{align}
Here, we denote 
\begin{equation}
\phi^{[n]}  (\symbfit{r}^{n}, \symbfit{p}^{n}) = \frac{N!}{(N-n)!}\int_{\Lambda^{N-n} \times \symbb{R}^{d(N-n)}} \phi(\symbfit{r}^{N}, \symbfit{p}^{N}) d\symbfit{r}^{N}_{n} d\symbfit{p}^{N}_{n}.
\end{equation}
\end{corollary}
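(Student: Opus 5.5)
The plan is to repeat the proof of Corollary \ref{cor:derivation_hyperforce_sum_rule} word for word. The only change is that every spatial integration over $\symbb{R}^{d}$ becomes an integration over the fundamental cell $\Lambda$. The starting point is Theorem \ref{thm:generalized_BBGKY_compact} applied with $f=1$ and $\phi=e^{-\beta H}$, with $H\in\symcal{H}_{\textup{per}}$. Since $\nabla f=0$ and $\nabla_{\symbfit{p}}f=0$, the terms $G_{i,1}^{(n)}$ and $G_{i,2}^{(n)}$ vanish. This leaves the identity $G_{n+1,3}^{(n)}[1,H]+G_{n+1,4}^{(n)}[1,H]=0$ as a function of $\symbfit{r}_{n+1}\in\symbb{R}^{d}$, where the remaining positions are integrated over $\Lambda$.

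First, expand $G_{n+1,3}^{(n)}$ using $\nabla_{\symbfit{r}_{n+1}}e^{-\beta H}=-\beta\bigl(\nabla u^{\textup{ext}}(\symbfit{r}_{n+1})+\sum_{j\neq n+1}\nabla_{\symbfit{r}_{n+1}}u(\symbfit{r}_{n+1},\symbfit{r}_{j})\bigr)e^{-\beta H}$. Split the sum into the terms $I$ and $I_j$ exactly as before.

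Second, factor out the Gaussian momentum integral $Z_{n+1}$ using the separable kinetic term, and rewrite everything through the periodic reduced densities $\phi^{[n+1]}$ and $\phi^{[n+2]}$ with $\Lambda$ integrations. For $j\ge n+2$, the terms $I_j$ are identified with one another by the permutation symmetry of $H$. Relabelling particles $j\leftrightarrow n+2$ is legitimate here because the product domain $\Lambda^{N-n-1}$ is permutation-invariant. Together with the combinatorial factor, these terms produce the single integral over $\Lambda\times\symbb{R}^{d}$ in $\symbfit{r}_{n+2}$ and $\symbfit{p}_{n+2}$.

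Third, handle $G_{n+1,4}^{(n)}$ by integrating by parts in $\symbfit{p}_{n+1}$ only. This integration is over all of $\symbb{R}^{d}$ and uses $\nabla_{\symbfit{p}}e^{-\beta H}=-\beta\symbfit{p}e^{-\beta H}/m$, so it is unaffected by periodicity. The result is $-\nabla_{\symbfit{r}_{n+1}}$ of the momentum-reduced density. No spatial integration by parts is needed for the $\symbfit{r}_{n+1}$ variable, because it is the free variable. The remaining algebra is then identical to the Euclidean proof: multiply by $\beta\symbfit{p}_{n+1}/m_{n+1}$, restore the Gaussian factor, and replace $\beta\symbfit{p}_k/m_k$ by $-\nabla_{\symbfit{p}_k}$. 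This yields the stated equation for $k=n+1$. Relabelling particle $k$ with $n+1$ and shifting $n+1\mapsto n$ then gives the formula for every $k\le n$.

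The main obstacle is justifying that Theorem \ref{thm:generalized_BBGKY_compact} actually holds for $f=1$, meaning that the integrations by parts hidden in the periodic analogue of Lemma \ref{cor:general_derivation_hyperforce_sum_rule} produce no boundary terms. That lemma moves $\nabla_{\symbfit{r}_i}$ off $\epsilon$ over the cell. I would argue that the integrand, $\epsilon$ times a $\Gamma$-periodic function, is $\Gamma$-periodic in $\symbfit{r}_i$. Integrating a divergence of a periodic function over $\Lambda$ gives zero, since the contributions from opposite faces of $\Lambda$ cancel. The other ingredients are finiteness of the $\Lambda$-integrals, which follows from rapid decay in $\symbfit{p}$ together with compactness of $\Lambda$, and the fact that the vanishing of $\int_\Lambda\epsilon^{\operatorname{T}}G\,d\symbfit{r}$ for all periodic $\epsilon$ forces the periodic $G$ to vanish. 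Once these three points are checked, the remaining computation is the one already carried out in Corollary \ref{cor:derivation_hyperforce_sum_rule}.
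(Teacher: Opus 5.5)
Your proposal is correct and follows the paper's argument: the paper simply declares the proof identical to that of Corollary~\ref{cor:derivation_hyperforce_sum_rule}, with the integrated-out positions ranging over $\Lambda$. Your added checks are the points it leaves implicit, namely no boundary terms over the cell by periodicity, finiteness from compactness of $\Lambda$ and decay in the momenta, and the fundamental lemma for periodic test fields.

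One remark: for $f=1$ the obstacle you flag is not actually needed. The $\symbfit{p}_{n+1}$-integration by parts in your third step already gives $G^{(n)}_{n+1,4}[1,H]=-\int\nabla_{\symbfit{r}_{n+1}}\phi=-G^{(n)}_{n+1,3}[1,H]$. So your starting identity holds outright, and the equation rests only on differentiating under the $\Lambda^{N-n-1}$-integral that defines $\phi^{[n+1]}$.
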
 

\begin{corollary}[$\Gamma$-periodic hyperforce sum rule at level $n$] 
\label{cor:generalized_168_compact}
    Define a map $G^{(n)}[f, H]: X \rightarrow C(\symbb{R}^{dn \times 2})^{d}$ by 
    \[
    G^{(n)}[f, H](\symbfit{r}) = \sum_{i=1}^{N} G^{(n)}_{i}[f, H](\symbfit{r}).
    \]
    Then, we have $G^{(n)}[f, H](\symbfit{r}) = 0$.
\end{corollary}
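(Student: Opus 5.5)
The plan is to deduce Corollary \ref{cor:generalized_168_compact} directly from Theorem \ref{thm:generalized_BBGKY_compact}, exactly as Corollary \ref{cor:generalized_168} follows from Theorem \ref{thm:generalized_BBGKY}. The statement is about the whole sum over particles. Each summand $G^{(n)}_{i}[f,H]$ with $i \geq n+1$ vanishes identically on $\symbb{R}^{d}$ by Theorem \ref{thm:generalized_BBGKY_compact}, so the sum vanishes for every $\symbfit{r}$.

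The periodic analogue of Theorem \ref{thm:generalized_BBGKY} rests on three steps.

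\textbf{Step 1: reduce to a vanishing integral.} Lemma \ref{lem:vanishing_gateaux_compact} says that for every $\epsilon \in \symfrak{X}^{\Gamma}(\symbb{R}^{d})$ the map $t \mapsto K_{n}[\tilde{(t\epsilon)}_{n,\sharp}(f), (t\epsilon)^{*}_{n,\sharp}\phi]$ is constant near $t=0$, where $\phi = e^{-\beta H}$. Differentiating at $t=0$ gives the same four terms as in Lemma \ref{cor:general_derivation_hyperforce_sum_rule}. The Jacobian factor $|\det \epsilon'_{n,\sharp}|$ contributes nothing: the lift is symplectic, so this determinant equals $1$.

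\textbf{Step 2: integration by parts on the torus.} The main point to check is the integration by parts in $\symbfit{r}_{i}$ over $\Lambda$, in place of the decay of $\epsilon$ at infinity. All integrands are products of $\Gamma$-periodic functions in configuration space: $\epsilon$, $\nabla\epsilon$, $f$, $\phi$ and their derivatives. Hence boundary contributions on opposite faces of $\Lambda$ cancel, and the fundamental domain behaves like a closed torus. Integrability in $\symbfit{p}$ comes from the Gaussian kinetic factor and the temperedness of $f$. This yields
\[
\int_{\Lambda}\epsilon(\symbfit{r}_{i})^{\operatorname{T}}G^{(n)}_{i}[f,H](\symbfit{r}_{i})\,d\symbfit{r}_{i}=0
\qquad\text{for all }\epsilon\in\symfrak{X}^{\Gamma}(\symbb{R}^{d}).
\]

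\textbf{Step 3: fundamental lemma on the torus.} Each $G^{(n)}_{i}[f,H]$ is $\Gamma$-periodic, since it is built from periodic data. Periodic smooth vector fields are dense in the test functions on the torus, so the fundamental lemma of the calculus of variations on $\symbb{R}^{d}/\Gamma$ gives $G^{(n)}_{i}[f,H]=0$ on $\Lambda$, and hence on all of $\symbb{R}^{d}$. This is exactly Theorem \ref{thm:generalized_BBGKY_compact}, which may be assumed.

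Summing over $i$ then gives $G^{(n)}[f,H](\symbfit{r})=0$ for all $\symbfit{r} \in X$. Here $X$ is read as $\symbb{R}^{d}$, or equivalently the torus $\symbb{R}^{d}/\Gamma$, since all the maps are periodic.

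There are two bookkeeping issues. First, the sum is written from $i=1$, while the $G^{(n)}_{i}$ are defined only for $i\geq n+1$. As in Corollary \ref{cor:generalized_168}, I would interpret the terms with $i\leq n$ as absent, which is the natural convention since $\epsilon_{n}$ does not move those particles. With that convention the corollary is immediate.

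Second, one might worry that $\mathfrak{X}_{\operatorname{Id},0}\cap\symfrak{X}^{\Gamma}$ is trivial, because a compactly supported periodic field vanishes. The definitions, however, use $t\epsilon$ with $\epsilon\in\symfrak{X}^{\Gamma}$ and $|t|$ small, which is legitimate since periodic fields have bounded derivatives. So the argument should be read with $\symfrak{X}^{\Gamma}$ in place of the compactly supported class throughout.
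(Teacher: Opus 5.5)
Your proposal is correct and follows the paper. The paper also obtains the corollary directly from Theorem~\ref{thm:generalized_BBGKY_compact} by summing the individually vanishing $G^{(n)}_{i}[f,H]$. Your two bookkeeping readings are the right ones: the terms with $i\le n$ are absent, and one must work with $\symfrak{X}^{\Gamma}(\symbb{R}^{d})$ because $\mathfrak{X}_{\operatorname{Id},0}(\symbb{R}^{d})\cap\symfrak{X}^{\Gamma}(\symbb{R}^{d})=\{0\}$.

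A minor aside on Step~1: with the paper's convention that $\nabla\epsilon$ is the Jacobian, the lift $\symbfit{p}\mapsto(\symbb{1}+\nabla\epsilon(\symbfit{r}))^{-1}\symbfit{p}$ is in general not symplectic for $d\ge 2$. You do not need symplecticity, though. The Jacobian of the lift is block lower triangular with diagonal blocks $\symbb{1}+\nabla\epsilon$ and $(\symbb{1}+\nabla\epsilon)^{-1}$, so its determinant is $1$ anyway.
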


\section*{Summary and Outlook}
The present paper gives a distributional formulation of the equilibrium hyperforce sum rule \cite{PhysRevLett.133.217101, 168}. We introduced the generalized thermal average as the pairing of tempered distributions and rapidly decreasing functions. We defined a functional that is constant on a subspace of the vector fields, which leads to the vanishing property of the derivative of the functional along the vector fields. Furthermore, we showed that the Leibniz rule of the derivative is the distributional equivalent of the original hyperforce sum formula. Indeed, we demonstrated that the Leibniz rule gives rise to the hyperforce sum rules and the equilibrium BBGKY hierarchy. As for another application, we also applied our argument to the class of periodic tempered functions and rapidly decreasing functions, and derived the hyperforce sum rules on tori as an immediate corollary.

In future work, it is natural to consider the extension of our results to general manifolds in terms of the canonical cotangent bundles over manifolds. It is also interesting to investigate categorical interpretation of our result through the lens of functorial Poisson structure \cite{MarsdenMorrisonWeinstein1984, weinstein1996lagrangian}. Other potential generalization is to give a distributional interpretation to the case of equilibrium fluid mixtures \cite{gauge_invariance_equilibrium_mixture}, non-equilibrium BBGKY hierarchies \cite{171}, and generalized non-equilibrium BBGKY hierarchies \cite{biagetti_generalized_bbgky}. Turning the theoretical result into real applications is also intriguing. One such direction is to investigate the use of the present result, especially the hyperforce sum rule, in the context of machine learning interatomic potentials \cite{mace, nequip, equiformer, equiformer_v2}. It is also an interesting problem to come up with an idea to parametrize diffeomorphisms, to enable gradient-based optimization to accelerate molecular simulations.  

\appendix

\section{Definition and Basic Properties of Tempered Distributions} \label{app:distribution}

We recall the definition and some basic properties of tempered distributions used in the main text. Readers can refer to, for example, \cite{anal_pdo} for further detail.

\begin{definition}
We denote by $\symscr{S}(\symbb{R}^{d})$ the set of all $C^{\infty}$ functions $\phi$ such that 
\[
\sup_{x \in \symbb{R}^{d}} \left| \left(1 + \| x \|\right)^{k/2} D^{\alpha} \phi(x)\right| < \infty
\]
for all $k \in \symbb{N}_{0}$ and multi-indices $\alpha \in \symbb{N}^{d}_{0}$. We call $\symscr{S}(\symbb{R}^{d})$ the \textup{Schwartz space} or the set of \textup{Schwartz functions}. 
We often say that a function $\phi \in \symscr{S}(\symbb{R}^{d})$ is a rapidly decreasing function. 
\end{definition}
We can endow $\symscr{S}(\symbb{R}^{d})$ with a semi-norm topology based on the definition above. Specifically, the following family of semi-norms makes $\symscr{S}(\symbb{R}^{d})$ a Fr\'{e}chet space: 
\[
\left\{ p_{\alpha , k}(\phi) 
= \sup_{x \in \symbb{R}^{d}} \left| \left(1 + \| x \|\right)^{k/2} D^{\alpha} \phi(x)\right| 
\, ; \, \forall k \in \symbb{N}_{0},\,\forall \alpha \in \symbb{N}^{d}_{0}\right\}.
\] 
This semi-norms gives a definition of the convergence of sequences in the Schwartz space: A sequence $\{\phi_{n}\} \subset \symscr{S}(\symbb{R}^{d})$ converges to $\phi \in \symscr{S}(\symbb{R}^{d})$ in $\symscr{S}$ when $p_{\alpha , k}(\phi_{n} - \phi) \to 0$ for any $k \in \symbb{N}_{0}$ and $\alpha \in \symbb{N}^{d}_{0}$. This further gives the definition of the convergence for the space of tempered distribution.

\begin{definition} 
\label{def:tempered}
A continuous linear form $u : \symscr{S}(\symbb{R}^{d}) \rightarrow \symbb{C}$ is called a \textup{tempered distribution}. The set of all tempered distributions is denoted by $\symscr{S}^{\prime}(\symbb{R}^{d})$. 
A sequence 
$\{u_{j}\} \subset \symscr{S}^{\prime}(\symbb{R}^{d})$ 
is said to converge to $u \in \symscr{S}^{\prime}(\symbb{R}^{d})$ in $\symscr{S}'$ 
when $\displaystyle \lim_{j \rightarrow \infty} u_{j}(\phi) = u(\phi)$ for every $\phi \in \symscr{S}(\symbb{R}^{d})$. 
\end{definition}

The convergence at each point $\phi \in \symscr{S}(\symbb{R}^{d})$ induces the convergence in $\symscr{S}'$. 

\begin{theorem}
Let 
$\{u_{j}\} \subset \symscr{S}^{\prime}(\symbb{R}^{d})$ 
be a sequence in $\symscr{S}^{\prime}(\symbb{R}^{d})$. 
If a sequence $\{ u_{j}(\phi) \}$ in $\symbb{C}$ converges for any $\phi \in \symscr{S}(\symbb{R}^{d})$, 
there exists a tempered distribution $u \in \symscr{S}^{\prime}(\symbb{R}^{d})$ such that $u_{j}$ converges to $u$ in $\symscr{S}'$. 
\end{theorem}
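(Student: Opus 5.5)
We prove the completeness statement with the Banach--Steinhaus theorem (uniform boundedness principle) on the Fréchet space $\symscr{S}(\symbb{R}^{d})$. First we define the candidate limit pointwise: set $u(\phi) = \lim_{j\to\infty} u_{j}(\phi)$ for each $\phi \in \symscr{S}(\symbb{R}^{d})$. This limit exists by hypothesis. Linearity of $u$ follows at once from the linearity of each $u_{j}$ and of limits in $\symbb{C}$. Once we show $u \in \symscr{S}'(\symbb{R}^{d})$, the convergence $u_{j} \to u$ in $\symscr{S}'$ holds by construction, since Definition \ref{def:tempered} only asks for pointwise convergence. The real content is therefore the \emph{continuity} of $u$.

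For continuity we use two facts. First, $\symscr{S}(\symbb{R}^{d})$, with the countable family of seminorms $p_{\alpha,k}$, is a Fréchet space and hence a Baire space. Second, for each fixed $\phi$ the sequence $\{u_{j}(\phi)\}$ converges and so is bounded. Banach--Steinhaus then shows that $\{u_{j}\}$ is equicontinuous. Concretely, one applies Baire to the closed sets $E_{m} = \{\phi : \sup_{j}|u_{j}(\phi)| \le m\}$, which cover $\symscr{S}$. Some $E_{m}$ contains a basic neighbourhood $\phi_{0} + V$, and symmetrizing gives $0 + V \subset E_{2m}$. The outcome is a constant $C>0$ and finitely many seminorms, which can be dominated by a single $q_{M}(\phi) = \max_{|\alpha|\le M,\,k\le M} p_{\alpha,k}(\phi)$, such that
\[
|u_{j}(\phi)| \le C\, q_{M}(\phi) \quad \text{for all } j \text{ and all } \phi \in \symscr{S}(\symbb{R}^{d}).
\]
Letting $j\to\infty$ gives $|u(\phi)| \le C\, q_{M}(\phi)$. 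Because $u$ is linear, this bound means $u$ is continuous, so $u \in \symscr{S}'(\symbb{R}^{d})$.

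The only delicate step is this equicontinuity argument. Continuity of each $u_{j}$ individually gives no uniform control, so the Baire category input (completeness of $\symscr{S}$) is essential. I would cite the uniform boundedness principle for Fréchet spaces, for example from \cite{anal_pdo}, rather than reprove it. If a self-contained argument is wanted, the plan is the one sketched above. First check that each $E_{m}$ is closed, which holds because it is an intersection of the closed sets $\{|u_{j}(\phi)|\le m\}$. Then pick an $E_{m}$ with nonempty interior, and translate and symmetrize to get a neighbourhood of $0$ on which all $|u_{j}| \le 2m$. Rescaling that neighbourhood turns this into the seminorm bound above.
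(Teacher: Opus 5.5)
The paper gives no proof of this statement. It is recorded in Appendix \ref{app:distribution} as a standard fact, with the reader referred to \cite{anal_pdo}, so there is no competing argument to compare yours against.

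Your proof is correct and is the standard Banach--Steinhaus argument:
\begin{itemize}
\item The sets $E_m$ are closed and cover $\symscr{S}(\symbb{R}^{d})$, since convergent sequences are bounded.
\item Baire gives some $E_m$ with nonempty interior, containing $\phi_0 + V$; the triangle inequality $|u_j(\psi)| \le |u_j(\phi_0+\psi)| + |u_j(\phi_0)|$ then gives $V \subset E_{2m}$.
\item Rescaling with $q_M$ gives the uniform bound $|u_j(\phi)| \le C\,q_M(\phi)$. This works because $q_M$ dominates the sup norm $p_{0,0}$, so $q_M(\phi)=0$ forces $\phi=0$.
\item Letting $j \to \infty$ gives $|u(\phi)| \le C\,q_M(\phi)$, so $u$ is continuous.
\end{itemize}
This is the same uniform boundedness principle the paper states as Theorem \ref{thm:uniform_boundedness} and later uses in the proof of Theorem \ref{thm:product_rule}, so your argument fits the paper's own framework.

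One cosmetic remark: the step you call ``symmetrizing'' is really just the translation-plus-triangle-inequality step above. No symmetry of $V$ is needed.
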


It is also conventional and practical to adopt a pairing notation for the distribution. 
Specifically, it is typically written as 
\[ 
\langle u, \phi \rangle = u(\phi) 
\]
for $u \in \symscr{S}^{\prime}(\symbb{R}^{d})$ and 
$\phi \in \symscr{S}(\symbb{R}^{d})$. 
This definition allows us to identify continuous functions on $\symbb{R}^{d}$ with tempered distributions by assigning integration. 

\begin{example}
\label{def:tempered_function}
Let $f$ be a continuous function on $\symbb{R}^{d}$. 
Assume that $f$ is \textup{tempered}, 
that is, there exists 
a constant $C > 0$ and $\ell > 0$ such that 
\begin{equation} 
|f(x)| \leq C (1 + \| x \|)^{\ell}  
\end{equation} 
for all $x \in \symbb{R}^{d}$. 
Then 
the following functional 
\[ 
u_{f} \colon 
\symscr{S}(\symbb{R}^{d}) \rightarrow \symbb{C}, 
\qquad \phi \mapsto \langle u_{f} , \phi \rangle = \int_{\symbb{R}^{d}} f(x) \phi(x) \, dx   
\] 
is well-defined and a tempered distribution. 
This identification is legitimate based on a fact that $u_{f} =u_{g}$ indicates $f = g$. 
\end{example}

The second important example is the ``delta functional''. 
\begin{example}
For any $a \in \symbb{R}^{d}$, the delta functional $\delta_{a}$ at $a$ is defined as follows: 
\begin{equation}
\langle \delta_{a} , \phi \rangle = \phi (a), 
\quad \text{for } \phi \in \symscr{S}(\symbb{R}^{d}). 
\end{equation} 
This is a conceptual counter part of ``delta function'', written by $\delta (x - a)$, in physics literature. 
See also \cite[Section 5]{MR4412551} for details of delta functional. 
\end{example}

We introduce the notion of differentiation and multiplication for distributions. 

\begin{definition} 
For $u \in \symscr{S}^{\prime}(\symbb{R}^{d})$, we set
\[ 
\langle \partial_{k} u, \phi \rangle = -\langle u, \partial_{k} \phi\rangle, 
\quad \text{for }  \phi \in \symscr{S}(\symbb{R}^{d}). 
\]
For a  tempered $C^{\infty}$ function $f$,  
we define
\[ 
\langle f u, \phi \rangle = \langle u, f\phi\rangle, 
\quad \text{for } \phi \in \symscr{S}(\symbb{R}^{d}). 
\]
\end{definition} 

\begin{example} 
The above definitions are originated on some properties of the integral-type distributions. 
Indeed, if we have a  tempered $C^{1}$ function $u$ on $\symbb{R}^{d}$,  
we obtain by the integration by parts 
\[ 
\int_{\symbb{R}^{d}} (\partial_{k} u)(x) \phi (x) \, dx 
= - \int_{\symbb{R}^{d}} u(x) (\partial_{k} \phi)(x) \, dx, \quad \text{for } \phi \in \symscr{S}(\symbb{R}^{d}).
\]
If $f$ is a  tempered $C^{\infty}$ function, 
then we also get 
\[
\int_{\symbb{R}^{d}} (f u)(x) \phi(x) \, dx 
= \int_{\symbb{R}^{d}} u(x) (f \phi)(x) \, dx, 
\quad \text{for } \phi \in \symscr{S}(\symbb{R}^{d}),  
\]
in which $f\phi$ again belongs to $\symscr{S}(\symbb{R}^{d})$. 
\end{example}

For a diffeomorphism $f \colon \symbb{R}^{d} \to \symbb{R}^{d}$ and 
a rapidly decreasing function $\phi \in \mathcal{S}(\symbb{R}^{d})$, 
the pullback function $f^{\ast}\phi  = \phi \circ f$ 
is not always rapidly decreasing. Therefore, the pullback with a \textit{general} diffeomorphism does not define a well-defined map between the Schwartz spaces.
However, the pullback is well-defined when limiting to certain classes of diffeomorphisms. Such a class includes diffeomorphisms with compact supports. 
\begin{definition} 
\label{def:composition}    
Let $g \colon \symbb{R}^{d} \to \symbb{R}^{d}$ be a diffeomorphism between $\symbb{R}^{d}$ such that there exists 
a compact subset $K \subset \symbb{R}^{d}$ 
and an invertible matrix $T$ of order $d$ 
such that $g|_{\symbb{R}^{d} \setminus K} = T$. 
Then, noting that the pullback function 
$g^{\ast}\phi = \phi \circ g^{-1}$ 
for $\phi \in \symscr{S}(\symbb{R}^{d})$ 
is rapidly decreasing by the multivariate Faà di Bruno's formula (see Appendix \ref{sec:multi_faddibruno}), we have a continuous map $g^{*}: \symscr{S}^{\prime}(\symbb{R}^{d}) \rightarrow \symscr{S}^{\prime}(\symbb{R}^{d})$ defined as follows:
\[ 
\langle g^{*}u, \phi \rangle = \langle u , \phi \circ g^{-1} \rangle 
\quad 
\text{for } u \in \symscr{S}^{\prime}(\symbb{R}^{d}) \text{ and } \phi \in \symscr{S}(\symbb{R}^{d}). 
\]
We call $g^{*}u$ the \textup{composition} or \textup{pullback} of $u$ by $g$. 
\end{definition} 

When the distribution is of the integral type with a function, the pullback of the distribution can be interpreted through the change of variables: 
\begin{example} 
\label{exm:pullback_function}
Let $f$ be a tempered continuous function on $\symbb{R}^{d}$ and 
$g$ a diffeomorphism between $\symbb{R}^{d}$ such that there exists 
a compact subset $K \subset \symbb{R}^{d}$ 
and an invertible matrix $T$ of order $d$ 
such that $g|_{\symbb{R}^{d} \setminus K} = T$. 
Then the composition $g^{\ast}u_{f}$ is 
given by $g^{\ast}u_{f} = u_{\tilde{g}(f)}$, where $\tilde{g}(f) = f \circ g \cdot |\det g'|$. 
\end{example} 
We can generalize the value of a distribution to a Banach space. 

\begin{remark}
Let $B$ be a Banach space. 
By generalizing Definition \ref{def:tempered}, 
a continuous linear map $u \colon \symscr{S}(\symbb{R}^{d}) \to B$ is called a $B$-valued tempered distribution.  
The set of all tempered distributions is denoted by $\symscr{S}^{\prime}(\symbb{R}^{d} ; B)$. 
By reinterpreting the absolute value of a complex number to the norm of $B$, we can define convergence, differentiation, multiplication and composition of $B$-valued tempered distributions in a manner similar to the arguments up until Definition \ref{def:composition}. 
\end{remark}

Here, we also recall that the product rule for derivative (Leibniz type formula) holds for tempered distributions and Schwartz functions. 
We employ the uniform boundedness principle of Fr\'{e}chet spaces  \cite{zbMATH03320477}. 

\begin{theorem} 
\label{thm:uniform_boundedness}
Let $X \subset \symscr{S}^{\prime}(\symbb{R}^{d} ; B)$ 
be a bounded set, that is, for any $\phi \in \symscr{S}(\symbb{R}^{d})$, the set $\{ \langle u , \phi \rangle \in B \,;\, u \in X \}$  
is bounded in $B$. 
Then there exists a constant $C > 0$ and \textup{finite} family of seminorms $\{ p_{\alpha_{i} , k_{j}} \}$ such that 
the following formula holds 
for any $u \in B$ and $\phi \in \symscr{S}(\symbb{R}^{d})$: 
\begin{equation}
\| \langle u , \phi \rangle \|_{B} 
\leq C \max_{i,j} p_{\alpha_{i} , k_{j}}(\phi). 
\end{equation}
\end{theorem}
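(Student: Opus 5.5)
The plan is the classical Banach--Steinhaus argument via the Baire category theorem, carried out in the Fr\'{e}chet space $\symscr{S}(\symbb{R}^{d})$ and applied to the seminorm-valued family $\{\phi \mapsto \| \langle u , \phi \rangle \|_{B}\}_{u \in X}$. The inequality is meant for all $u \in X$ (not $u \in B$, as the statement literally reads). I would first define
\[
q(\phi) = \sup_{u \in X} \| \langle u , \phi \rangle \|_{B},
\]
which is finite for each $\phi$ by the boundedness hypothesis on $X$. Each map $\phi \mapsto \| \langle u , \phi \rangle \|_{B}$ is continuous because $u$ is a continuous linear map into $B$, so $q$ is a supremum of continuous seminorms. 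Hence $q$ is a lower semicontinuous seminorm: it is subadditive, absolutely homogeneous and symmetric.

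Next I would set $A_{m} = \{ \phi \in \symscr{S}(\symbb{R}^{d}) \,;\, q(\phi) \leq m \}$ for $m \in \symbb{N}$. Each $A_{m}$ is closed, being an intersection over $u \in X$ of closed sets, and $\bigcup_{m} A_{m} = \symscr{S}(\symbb{R}^{d})$. Since $\symscr{S}(\symbb{R}^{d})$ is complete and metrizable, Baire's theorem gives some $m$ such that $A_{m}$ contains an open set $\phi_{0} + V$. The topology is generated by the seminorms $p_{\alpha , k}$, so $V$ can be shrunk to a basic neighbourhood $V = \{ \psi \,;\, \max_{i,j} p_{\alpha_{i}, k_{j}}(\psi) < r \}$ involving only finitely many seminorms. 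This finiteness is exactly what the statement asks for. For $\psi \in V$, subadditivity and symmetry give
\[
q(\psi) \leq q(\phi_{0} + \psi) + q(\phi_{0}) \leq 2m.
\]

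Finally I would rescale. Write $P(\phi) = \max_{i,j} p_{\alpha_{i},k_{j}}(\phi)$.
\begin{itemize}
\item If $P(\phi) > 0$, then $\psi = r\phi / (2P(\phi))$ lies in $V$, and homogeneity of $q$ gives $q(\phi) \leq (4m/r) P(\phi)$.
\item If $P(\phi) = 0$, then $t\phi \in V$ for every $t > 0$, so $q(\phi) \leq 2m/t$ for all $t$, and hence $q(\phi) = 0$.
\end{itemize}
In both cases $\| \langle u , \phi \rangle \|_{B} \leq q(\phi) \leq C\, P(\phi)$ with $C = 4m/r$, uniformly in $u \in X$.

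The only genuinely nontrivial input is the Baire step. It requires the completeness (Fr\'{e}chet property) of $\symscr{S}(\symbb{R}^{d})$ recalled earlier, together with the fact that basic neighbourhoods of $0$ are defined by finitely many of the seminorms $p_{\alpha,k}$. I would cite these standard facts from \cite{zbMATH03320477} rather than reprove them. Every other step is elementary seminorm manipulation.
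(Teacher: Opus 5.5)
Your argument is correct, including your reading that the inequality is meant for all $u \in X$ rather than $u \in B$. The paper gives no proof of its own and only cites the uniform boundedness principle for Fr\'{e}chet spaces from \cite{zbMATH03320477}; your Baire-category argument applied to the lower semicontinuous seminorm $q(\phi)=\sup_{u\in X}\|\langle u,\phi\rangle\|_{B}$ is the standard proof of that principle, so it supplies exactly what the citation stands for.
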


\begin{definition}
Let $\phi_{t} \in \symscr{S}(\symbb{R}^{d})$ $(t \in \symbb{R})$ 
be a family of rapidly decreasing functions. 
We call that $\phi_{t}$ is differentiable in $\symscr{S}$ at $t=a$ 
if the limit $\displaystyle \lim_{h \to 0} \frac{\phi_{a+h} - \phi_{a}}{h}$ 
converges in $\symscr{S}$. 
We denote the limit by $\displaystyle \diff{\phi_{t}}{t}(a)$. 
\end{definition}

\begin{definition}
Let $u_{t} \in \symscr{S}'(\symbb{R}^{d}; B)$ $(t \in \symbb{R})$ 
be a family of $B$-valued tempered distributions. 
We call that $u_{t}$ is differentiable in $\symscr{S}^{\prime}$ at $t=a$ 
if the limit $\displaystyle \lim_{h \to 0} \frac{u_{a+h} - u_{a}}{h}$ 
converges in $\symscr{S}'$. 
We denote the limit by $\displaystyle \diff{u_{t}}{t}(a)$. 
\end{definition}

\begin{theorem} 
\label{thm:product_rule}
Let $\{\phi_{t}\}_{t \in \symbb{R}} \subset \symscr{S}\left(\symbb{R}^{d}\right)$ be a differentiable family in $\symscr{S}$ and 
$\{u_{t}\}_{t \in \symbb{R}} \subset \symscr{S}^{\prime}\left(\symbb{R}^{d} ; B\right)$ be a differentiable family in $\symscr{S}^{\prime}$. 
Then we have 
\begin{equation} 
\diff{}{t} \left\langle u_{t} , \phi_{t} \right\rangle 
= \left\langle \diff{u_{t}}{t} , \phi_{t} \right\rangle 
    + \left\langle u_{t} , \diff{\phi_{t}}{t} \right\rangle. 
\label{eq:product_rule}
\end{equation}
\end{theorem}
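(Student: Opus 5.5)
The plan is to prove the product rule directly from the difference quotient. Fix $a \in \symbb{R}$ and write, for $h \neq 0$,
\begin{equation}
\frac{\langle u_{a+h}, \phi_{a+h}\rangle - \langle u_a, \phi_a\rangle}{h}
= \left\langle \frac{u_{a+h}-u_a}{h}, \phi_{a} \right\rangle
+ \left\langle u_{a+h}, \frac{\phi_{a+h}-\phi_a}{h} \right\rangle .
\end{equation}
The first term converges in $B$ to $\langle \diff{u_{t}}{t}(a), \phi_a\rangle$ as $h \to 0$. This is immediate, since differentiability of $u_t$ in $\symscr{S}'$ is by definition pointwise (weak) convergence of the difference quotients on each fixed test function.

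The second term is the real issue. I would split it as
\begin{equation}
\left\langle u_{a+h}, \diff{\phi_t}{t}(a) \right\rangle
+ \left\langle u_{a+h}, \frac{\phi_{a+h}-\phi_a}{h} - \diff{\phi_t}{t}(a) \right\rangle .
\end{equation}
For the first piece I need $u_{a+h} \to u_a$ in $\symscr{S}'$. This follows because $u_{a+h} - u_a = h \cdot \frac{u_{a+h}-u_a}{h}$, and the quotient converges, hence is bounded, on each test function. So $\langle u_{a+h}, \psi\rangle \to \langle u_a, \psi\rangle$ for $\psi = \diff{\phi_t}{t}(a)$.

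The main obstacle is the remainder piece. Here the test function varies with $h$, so weak convergence of $u_{a+h}$ alone does not suffice, and I would invoke the uniform boundedness principle, Theorem \ref{thm:uniform_boundedness}. Take a sequence $h_j \to 0$ and consider $X = \{u_{a+h_j}\}_j$. For each fixed $\phi$ the values $\langle u_{a+h_j}, \phi\rangle$ converge in $B$, so $X$ is a bounded set. Theorem \ref{thm:uniform_boundedness} then gives a constant $C$ and finitely many seminorms $p_{\alpha_i, k_j}$ with
\begin{equation}
\|\langle u_{a+h_j}, \psi\rangle\|_B \le C \max_{i,j} p_{\alpha_i,k_j}(\psi)
\end{equation}
for all $\psi \in \symscr{S}(\symbb{R}^d)$ and all $j$. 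Applying this to $\psi_j = \frac{\phi_{a+h_j}-\phi_a}{h_j} - \diff{\phi_t}{t}(a)$, the right-hand side tends to $0$, because differentiability of $\phi_t$ in $\symscr{S}$ means exactly $p_{\alpha,k}(\psi_j) \to 0$ for every seminorm, in particular for the finitely many chosen ones.

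Since every sequence $h_j \to 0$ yields the same limit, the full limit $h \to 0$ exists and equals $\langle \diff{u_t}{t}(a), \phi_a\rangle + \langle u_a, \diff{\phi_t}{t}(a)\rangle$, which is \eqref{eq:product_rule} at $t=a$. The sequential reduction is needed only because the uniform boundedness principle is applied to countable (or pointwise bounded) families. In the $B$-valued case, boundedness is understood in norm, as in the Remark preceding Theorem \ref{thm:uniform_boundedness}.
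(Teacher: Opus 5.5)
Your proposal is correct and follows essentially the same route as the paper. You use the same splitting of the difference quotient, handle the fixed-test-function term by weak convergence, and control the remainder $\bigl\langle u_{a+h}, \frac{\phi_{a+h}-\phi_a}{h} - \frac{d\phi_t}{dt}(a)\bigr\rangle$ with the uniform boundedness principle (Theorem \ref{thm:uniform_boundedness}); your passage to sequences $h_j \to 0$ and your explicit derivation of $u_{a+h} \to u_a$ from differentiability are small refinements that make rigorous the paper's briefer claim that $\{\langle u_{t+h}, \phi\rangle\}_{|h| \le \epsilon}$ is bounded.
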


\begin{proof} 
By the definition of derivative, we have 
\begin{align}
\diff{}{t} \langle u_{t} , \phi_{t} \rangle  &= \lim_{h \rightarrow 0} \frac{\langle u_{t+h} , \phi_{t+h} \rangle  - \langle u_{t} , \phi_{t} \rangle }{h} \\
&= \lim_{h \rightarrow 0} \left\langle \frac{u_{t+h} - u_{t}}{h}, \phi_{t} \right\rangle  + \lim_{h \rightarrow 0} \left\langle u_{t+h}, \frac{ \phi_{t+h} - \phi_{t}}{h} \right\rangle. 
\end{align}
The first term converges to 
$\displaystyle \left\langle \diff{u_{t}}{t} , \phi_{t} \right\rangle$ 
by the definition of the convergence sequence 
in $\symscr{S}^{\prime}\left(\symbb{R}^{d}\right)$. 
The convergence of the second term can be shown by using uniform boundedness principle; see Theorem \ref{thm:uniform_boundedness}. 
In the sequel in the proof, we will prove the formula 
$\displaystyle \lim_{h \rightarrow 0} \left\langle u_{t+h}, \frac{ \phi_{t+h} - \phi_{t}}{h} \right\rangle = \left\langle u_{t}, \diff{\phi_{t}}{t}  \right\rangle$. 
We first rewrite 
\begin{align}
&\phantom{=}\left\langle u_{t+h}, \frac{ \phi_{t+h} - \phi_{t}}{h} \right\rangle  - \left\langle u_{t}, \diff{\phi_{t}}{t}  \right\rangle  \\ 
&= \left\langle u_{t+h}, \frac{ \phi_{t+h} - \phi_{h}}{h} - \diff{\phi_{t}}{t} \right\rangle  + \left\langle u_{t+h} - u_{t}, \diff{\phi_{t}}{t} \right\rangle. 
\label{eq:expansion_uni_bound_thm}
\end{align} 
Since a family $\{ u_{t+h} \}_{h}$ is a convergent sequence in $\symscr{S}^{\prime}$, for a sufficiently small $\epsilon$, 
a subset $\{ \langle u_{t+h}, \phi \rangle \}_{h \in [-\epsilon , \epsilon]} \subset B$ is bounded for any $\phi \in \symscr{S}(\symbb{R}^{d})$. 
By Theorem \ref{thm:uniform_boundedness}, 
there exists a constant $C > 0$ and a finite family of seminorms $\{ p_{\alpha_{i} , k_{j}} \}$ such that 
\[
\left\| \left\langle u_{t+h} , \frac{\phi_{t+h} - \phi_{t}}{h}  - \diff{\phi_{t}}{t} \right\rangle \right\|_{B} 
\leq C \cdot \max_{i,j} \symbfit{p}_{\alpha_{i} , k_{j}}\left(\frac{\phi_{t+h} - \phi_{t}}{h}  - \diff{\phi_{t}}{t} \right) \xrightarrow{\ h \rightarrow 0 \ } 0. 
\]  
The second term of \eqref{eq:expansion_uni_bound_thm} also converges to $0$ by the definition of the convergence sequence in $\symscr{S}^{\prime}(\symbb{R}^{d})$, which concludes the proof. 
\end{proof}

\section{Multivariate Faà di Bruno's formula} 
\label{sec:multi_faddibruno}
We refer to \cite{MR1325915}. Let $\symbb{N}_{0}$ be the set of nonnegative integers. For $\symbfit{\nu} = (\nu_{1}, \cdots, \nu_{\ell}) \in \symbb{N}_{0}^{\ell}$ and $\mathbf{z} \in \symbb{R}^{\ell}$, in which $\ell$ is a positive integer, we define 
\[|\symbfit{\nu}| = \sum_{i=1}^{\ell} \nu_{i}, \quad \mathbf{z}^{\symbfit{\nu}} = \prod_{i=1}^{\ell} z_{i}^{\nu_{i}}.\]
For a smooth function $h: \symbb{R}^{\ell} \rightarrow \symbb{R}$ defined by the composition of 
\[
\symbfit{g} = (g^{(1)}(x_{1}, \dots, x_{\ell}), \dots, g^{(m)}(x_{1}, \dots, x_{\ell})): \symbb{R}^{\ell} \rightarrow \symbb{R}^{m}
\] 
and $f: \symbb{R}^{m} \rightarrow \symbb{R}$, we further define the following notations:
\[ 
g^{(i)}_{\symbfit{\mu}} = D_{\symbfit{x}}^{\symbfit{\nu}} g^{(i)}(\symbfit{x}_{0}), \quad \symbfit{g}_{\symbfit{\mu}} = (g^{(1)}_{\symbfit{\mu}}, \dots, g^{(\ell)}_{\symbfit{\mu}}), \quad h_{\symbfit{\nu}} = D_{\symbfit{x}}^{\symbfit{\nu}} h(\symbfit{x}_{0}), \quad f_{\symbfit{\lambda}} = D_{\symbfit{y}}^{\symbfit{\lambda}} f(\symbfit{y}_{0}), 
\]
in which $\symbfit{\lambda} \in \symbb{N}_{0}^{m}$, $\symbfit{\mu} \in \symbb{N}_{0}^{\ell}$, and $D^{\symbfit{\nu}}_{\symbfit{x}}$ is defined as
\[D^{\symbfit{\nu}}_{\symbfit{x}} = \frac{\partial^{|\symbfit{\nu}|}}{\partial x_{1}^{\nu_{1}} \cdots \partial x_{\ell}^{\nu_{\ell}}}, \quad \text{for } |\nu| > 0.\]
We introduce a linear order on $\symbb{N}^{\ell}_{0}$. For $\symbfit{\nu}, \symbfit{\mu} \in \symbb{N}^{\ell}_{0}$, we write $\symbfit{\mu} \prec \symbfit{\nu}$ if one of the following conditions holds:
\begin{itemize}
    \item[(i)] $|\symbfit{\nu}| < |\symbfit{\mu}|$;
    \item[(ii)] $|\symbfit{\nu}| = |\symbfit{\mu}|$ and $\mu_{1} < \nu_{1}$; or
    \item[(iii)] $|\symbfit{\nu}| = |\symbfit{\mu}|$, $\mu_{1} = \nu_{1}$, $\dots$, $\mu_{k} = \nu_{k}$ and $\mu_{k+1} < \nu_{k+1}$ for some $1 \leq k < \ell$.
\end{itemize}
We also define a set of ordered multi-indices as follows: 
\[p_{s}(\symbfit{\nu}, \symbfit{\lambda}) = \{(\symbfit{k}_{1}, \dots, \symbfit{k}_{s}, \symbfit{\ell}_{1}, \dots, \symbfit{\ell}_{s}): |\symbfit{k}_{i}|>0,\ \symbfit{0} \prec \symbfit{\ell}_{1} \prec \cdots \prec \symbfit{\ell}_{s},\ \sum_{i=1}^{s}\symbfit{k}_{i} = \symbfit{\lambda}, \ \sum_{i=1}^{s}|\symbfit{k}_{i}| \symbfit{\ell}_{i} = \symbfit{\nu}\}.\]

Then, a multivariate Faà di Bruno's formula is 
\begin{align}
h_{\symbfit{\nu}} = \sum_{1 \leq |\symbfit{\lambda}| \leq n} f_{\symbfit{\lambda}} \sum_{s=1}^{n} \sum_{p_{s}(\symbfit{\nu}, \symbfit{\lambda})}(\symbfit{\nu}!) \prod_{j=1}^{s} \frac{[\symbfit{g}_{\symbfit{\ell}_{j}}]^{\symbfit{k}_{j}}}{(\symbfit{k}_{j}!) [\symbfit{\ell}_{j}!]^{|\symbfit{k}_{j}|}}, \label{eq:multi_faadibruno}   
\end{align}
in which $n = |\symbfit{\nu}|$ and we set $0^{0} = 1$.

\section{(Partially) extended real numbers} \label{app:extended_real} We give the definition and basic properties of extended real numbers. Readers who need its more elaborated exposition can refer to, for example \cite{basic_real_analysis}. 

The extended real line $\bar{\symbb{R}}$ is the disjoint union $\bar{\symbb{R}} = \{-\infty\} \cup \symbb{R} \cup \{+\infty\}$, with the following properties:
\begin{itemize}
    \item[1.] $x \in \symbb{R} \Rightarrow -\infty < x < +\infty$;
    \item[2.] $x \in \symbb{R} \Rightarrow x + \infty = + \infty, \ x - (+\infty) = - \infty, \ x/(\pm \infty) = 0$;
    \item[3.] $x > 0 \Rightarrow x \cdot (+ \infty) = +\infty, \ x \cdot (- \infty) = -\infty$;
    \item[4.] $x < 0 \Rightarrow x \cdot (+ \infty) = -\infty, \ x \cdot (- \infty) = +\infty$;
    \item[5.] $(+ \infty) + (+\infty) = +\infty, \ (- \infty) - (+\infty) = -\infty$;\ and 
    \item[6.] $(+ \infty) \cdot (\pm\infty) = \pm \infty, \ (- \infty) \cdot (\pm\infty) = \mp\infty$. 
\end{itemize}
We extend the inequalities defined on $\symbb{R}$ to $\bar{\symbb{R}}$ by defining $c < +\infty$ and $-\infty < c$ for any real numbers $c \in \symbb{R}$. We further define 
\begin{align}
    [c, +\infty] &= [c, +\infty) \cup \{+ \infty\} = \{p \in \bar{\symbb{R}} \, ; \, p \geq c \}, \\
    (c, +\infty] &= (c, +\infty) \cup \{+ \infty\} = \{p \in \bar{\symbb{R}} \, ; \, p > c \}, \\
    [-\infty, c] &= (-\infty, c] \cup \{- \infty\} = \{p \in \bar{\symbb{R}} \, ; \, p \leq c \}, \\
    [-\infty, c) &= (-\infty, c) \cup \{- \infty\} = \{p \in \bar{\symbb{R}} \, ; \, p < c \}. 
\end{align}
There is an order-preserving bijective function between $\bar{\symbb{R}}$ and $[-1, 1]$:
\[\phi(c) = \frac{c}{1 + |c|}.\]
If we define $d(c, c^{\prime}) = |\phi(c) - \phi(c^{\prime})|$, this gives a metric on $\bar{\symbb{R}}$. Therefore, $(\bar{\symbb{R}}, d)$ is a metric space and the space is compact because it is homeomorphic to $[-1, 1]$ through the bijection $\phi$. 

We define the continuity of functions defined on $\bar{\symbb{R}}$. The definition is a straightforward extension of the continuity notion for functions on $\symbb{R}$. This faithful extension is possible because the convergence of an infinite sequence in $\bar{\symbb{R}}$ translates to the convergence of the sequence mapped by $\phi$ to $[-1, 1]$.
\begin{definition}
A function $f: \bar{\symbb{R}} \rightarrow \bar{\symbb{R}}$ is continuous if any convergent sequence $\{c_{n}\} \subset \bar{\symbb{R}}$, we have 
\[
\lim_{n \rightarrow \infty} f(c_{n}) = f \left( \lim_{n \rightarrow \infty} c_{n} \right).
\]
\end{definition}

In general, it is non-trivial to give a formal definition of the derivative for $\bar{\symbb{R}}$-valued functions because $\pm\infty - (\pm \infty)$ is not defined. However, this situation may be defused when  infinite values can be ``hidden" by applying another function altering the infinite numbers to zero. 

This example includes Hamiltonians $\symcal{H}$ defined in Definition \ref{def:real_hamiltonian}. A Hamiltonian $H \in \symcal{H}$ is defined as a $\symbb{R} \cup \{ \infty \}$-valued continuous function. When we apply the exponential map $e^{-x}$ to $H$, then $e^{-H(r^N, p^N)}$ turns out to be a function from $\symbb{R}^{dN \times 2}$ to $\symbb{R}$. This function is also continuous in the standard sense, because $\symbb{R} \cup \{ \infty \}$ is homeomorphic to (-1, 1]. 

\bibliographystyle{plain}
\bibliography{ref_168}

\end{document}